\tikzset{
    -Latex,auto,node distance =1 cm and 1 cm,semithick,
    state/.style ={ellipse, draw, minimum width = 0.7 cm},
    point/.style = {circle, draw, inner sep=0.04cm,fill,node contents={}},
    bidirected/.style={Latex-Latex,dashed},
    el/.style = {inner sep=2pt, align=left, sloped}
}
\title{Doubly Robust Proximal Synthetic Controls}
\date{}
\author[1]{Hongxiang Qiu}
\author[2]{Xu Shi}
\author[3]{Wang Miao}
\author[4]{Edgar Dobriban}
\author[4]{Eric Tchetgen Tchetgen\footnote{Author e-mail addresses: 
		\texttt{qiuhongx@msu.edu},
		\texttt{shixu@umich.edu},
		\texttt{mwfy@pku.edu.cn},
		\texttt{dobriban@wharton.upenn.edu},
		\texttt{ett@wharton.upenn.edu}
}}
\affil[1]{Department of Epidemiology and Biostatistics, Michigan State University}
\affil[2]{Department of Biostatistics, University of Michigan}
\affil[3]{Department of Probability and Statistics, Peking University}
\affil[4]{Department of Statistics and Data Science, University of Pennsylvania}
\newtheorem{theorem}{Theorem}
\theoremstyle{definition}
\newtheorem{remark}{Remark}
\newtheorem{condition}{Condition}
\newtheorem{example}{Example}
\DeclareMathOperator*{\argmin}{argmin}
\DeclareMathOperator{\rank}{rank}
\newcommand{\real}{{\mathbb{R}}}
\newcommand{\ind}{{\mathbbm{1}}}
\newcommand{\cexpect}{{E}}
\newcommand{\Var}{{\mathrm{var}}}
\newcommand{\Prob}{{\mathrm{pr}}}
\newcommand{\intd}{{\mathrm{d}}}
\newcommand{\smallo}{{\mathrm{o}}}
\definecolor{ed}{RGB}{225,0,100}
\newcommand\independent{\protect\mathpalette{\protect\independenT}{\perp}}
\def\independenT#1#2{\mathrel{\rlap{$#1#2$}\mkern2mu{#1#2}}}
\begin{document}

\maketitle

\begin{abstract}
	To infer the treatment effect for a single treated unit using panel data, synthetic control methods construct a linear combination of control units' outcomes that mimics the treated unit's pre-treatment outcome trajectory. 
	This linear combination is subsequently used to impute the counterfactual outcomes of the treated unit had it not been treated in the post-treatment period, and used to estimate the treatment effect.
	Existing synthetic control methods rely on correctly modeling certain aspects of the counterfactual outcome generating mechanism and may require near-perfect matching of the pre-treatment trajectory. 
	Inspired by proximal causal inference, we obtain two novel nonparametric identifying formulas for the average treatment effect for the treated unit: one is based on weighting, and the other combines models for the counterfactual outcome and the weighting function. We introduce the concept of covariate shift to synthetic controls to obtain these identification results conditional on the treatment assignment. We also develop two treatment effect estimators based on these two formulas and the generalized method of moments. One new estimator is doubly robust: it is consistent and asymptotically normal if at least one of the outcome and weighting models is correctly specified. 
	We demonstrate the performance of the methods via
simulations and apply them to evaluate the effectiveness of a Pneumococcal conjugate vaccine on the risk of all-cause pneumonia in Brazil.
\end{abstract}

\section{Introduction} \label{sec: intro}

\subsection{Background}

Interventions such as policies are often implemented in a single unit such as a state, a city, or a school.
Causal inference in these cases is challenging due to the small number of treated units, and due to the lack of randomization and independence.
In various fields including economics, public health, and biometry, synthetic control (SC) methods \citep{Abadie2003,Abadie2010,Abadie2015,Doudchenko2016} are a common tool to estimate the intervention (or treatment) effect for the treated unit in time series from a single treated unit and multiple untreated units in both pre- and post-treatment periods.
For example, SC methods have been used to estimate the effects of terrorist conflicts on GDP \citep{Abadie2003}, 
tobacco control program on tobacco consumption \citep{Abadie2010}, 
Kansas's tax cut on GDP \citep{Ben-Michael2021augsynth,Rickman2018}, 
Florida’s ``stand your ground'' law on homicide rates \citep{Bonander2021}, 
and 
pneumococcal conjugate vaccines on pneumonia \citep{Bruhn2017}.

Classical SCs are linear combinations of control units that mimic the treated unit before the treatment. 
Outcome differences between the treated unit and the SC in the post-treatment period are used to make inferences about the treatment effect for the treated unit.
In \cite{Abadie2003} and \cite{Abadie2010}, a SC is a weighted average of a pool of control units, called the \textit{donors}. 
The weights are obtained by minimizing a distance between the SC and the treated unit in the pre-treatment period, under the constraint that the weights are non-negative and sum to unity. 
Many extensions have been proposed. For example, 
\cite{Abadie2021} proposed methods for multiple treated units, and \cite{Ben-Michael2021} further considered the case where these treated units initiate treatment at different time points; \cite{Doudchenko2016} and \cite{Ben-Michael2021,Ben-Michael2021augsynth} introduced penalization to improve performance; \cite{Athey2021} and \cite{Bai2021} used techniques from matrix completion; \cite{Li2020} studied statistical inference for SC methods; \cite{Chernozhukov2021} and \cite{Cattaneo2021} considered prediction intervals for treatment effects. Among these extensions, some also incorporate the idea that, similarly to the control units' outcomes in the post-treatment period, the treated unit's outcomes in the pre-treatment period can be used to impute the counterfactual outcome had it not been treated \citep{Ben-Michael2021augsynth,Arkhangelsky2021}.

Existing methods often rely on assuming linear models and on the existence of near-perfectly matching weights 
in the observed data. 
Under such assumptions, valid SCs are linear combinations, often weighted averages, of donors. 
However, if such assumptions do not hold, these methods may not produce a valid SC. 
This may happen if the outcomes in the donors have a different measuring scale from the treated unit, or if the treated unit's and the donors’ outcomes have a nonlinear relationship.

To relax these assumptions, \cite{Shi2021} viewed SCs from the proximal causal inference perspective.
For independent and identically distributed (i.i.d.) observations, \citet{Miao2018,Deaner2018,Deaner2021,Cui2020,Tchetgen2020} derived nonparametric identification using proxies, variables capturing the effect of the unmeasured confounders.
\citet{Shi2021} viewed control units' outcomes as proxies and obtained nonparametric identification results for the potential outcome of the treated unit had it not been treated as well as the treatment effect in a general setting, beyond the common linear factor model \citep[e.g., ][]{Abadie2010}. 
They assumed the existence of a function of these proxies, termed \textit{confounding bridge function}, that captures the (possibly nonlinear) effects of unobserved confounders.
With this function, they imputed the expected counterfactual outcomes for the treated unit. 
Estimation of, and inference about, the average treatment effect for the treated unit (ATT) followed from this identification result.
Instrumental variables have also been used. For example, \citet{Holtz-Eakin1988} considered a linear model with interactive fixed effects and showed how to identify it using appropriate instruments. The solution to this problem relies on a particular differencing strategy, which may be viewed as an application of a confounding bridge function. \citet{Cunha2010,Freyberger2018} and references therein considered general nonparametric models with interactive effects, showing how to identify it using appropriate instruments. 
While treatment confounding proxies in proximal causal inference are sometimes described as instruments, it is crucial to note that they are more general than instrumental variables (IVs), 
in the sense that valid IVs are valid treatment confounding proxies, but invalid IVs dependent on hidden confounders are also valid treatment confounding proxies \citep{Tchetgen2020}. 
In addition, while IVs require a form of homogeneity condition for nonparametric identification (e.g., separable errors or monotonicity), proxies do not require such a condition.

\subsection{Our contribution}

Existing methods rely on correctly specifying an \emph{outcome model}, based on which 
one can impute 
the counterfactual outcome trajectory of the treated unit, had it not been treated, after treatment. 
This outcome bridge function model may be difficult to specify correctly, or may not exist.
In this paper, we 
relax this requirement by
leveraging the proximal causal inference framework as in \cite{Shi2021}.
We develop two novel methods to estimate the ATT. One method relies on weighting and is a building block to a second method which we rigorously prove is doubly robust \citep{Bang2005,Scharfstein1999}.
It is consistent and asymptotically normal if either the outcome model or the weighting function is correctly specified, without requiring that both are. 
An advantage of the doubly robust method compared to existing methods is that it allows for misspecifing one of the two models, without the user necessarily knowing which might be misspecified.

We observed that our estimand of interest, the ATT, is closely related to the average treatment effect on the treated for i.i.d. data \cite[e.g., ][]{Hahn1998,Imbens2004,Chen2008,Shu2018}. The method in \cite{Shi2021} corresponds to using an outcome confounding bridge function \citep{Miao2018}, 
which is the proximal causal inference counterpart of G-computation, or an outcome regression-based approach in causal inference under unconfoundedness \citep{Robins1986}. Our proposed methods are motivated by the existing identification results in proximal causal inference in the i.i.d. setting \citep{Cui2020}: one result is based on weighting and the other is based on the influence function. 

Despite these similarities, it remains challenging to adapt these ideas from the i.i.d. setting to panel data. 
Since treatment assignment is often viewed
as fixed in SC problems, a key concept from the i.i.d. setting, the propensity score \citep{Rosenbaum1983}, is undefined.
Thus, existing results for the i.i.d. setting cannot be directly applied to SC problems. 
We leverage the notion of \emph{covariate shift} \citep[e.g.,][]{quinonero2009dataset} to circumvent this issue.
We also find a relaxed version of the i.i.d. assumption to allow for serial correlation, while still obtaining identification via weighting.
We illustrate our proposed methods in simulations and three empirical examples: 
	two examples concern public health outcomes, one studying the effect of the PCV10 vaccine in Brazil on pneumonia \citep{Bruhn2017}, and the other studying the effect of Florida's ``stand your ground'' law on homicide rates \citep{Bonander2021}; the third example concerns economic outcomes, studying the effect of Kansas's tax cut on GDP \citep{Rickman2018}.

Both our doubly robust method and the method in \cite{Ben-Michael2021augsynth} take the form of augmented weighted moment equations, but the known robustness properties of these methods differ. 
In \cite{Ben-Michael2021augsynth}, the treated unit's counterfactual outcome had it not been treated after treatment can be imputed with two approaches.
One is a weighted average of control units' outcomes, identical to classical SC methods \citep{Abadie2010}; the other is a prediction model obtained using the treated unit's outcomes before treatment. By combining these two approaches, \cite{Ben-Michael2021augsynth} developed a SC method with improved performance.
\citet{Arkhangelsky2021} proposed a method combining two imputation approaches based on similar ideas.
Nevertheless, to date, neither method has formally been shown to be doubly robust. In contrast, we formally establish double robustness, inherited from the influence function of the ATT in i.i.d cases.

\section{Problem setup} \label{sec: setup}

We observe data over $T$ time periods.
The first $T_0$ time periods are the pre-treatment periods, and the last $T-T_0$ time periods are the post-treatment periods. 
For the treated unit, at each time period $t=1,\ldots,T$, let $Y_t(0), Y_t(1) \in \real$ be the counterfactual outcome corresponding to no treatment and treatment, respectively, and $Y_t = \ind(t \leq T_0) Y_t(0) + \ind(t > T_0) Y_t(1)$ be the observed outcome. 
At each time period $t$, other variables such as other control units' outcomes are observed. We provide more details about these variables below.
We treat $T$, $T_0$ and the treated unit as deterministic, and treat other variables such as the treated unit's potential outcomes $Y_t(1)$ and $Y_t(0)$ as random. In other words, our proposed methods are conditional on the study design.
We study the ATT causal estimand, that is,
\begin{equation*}
    \phi^*(t) = \cexpect \{ Y_t(1) - Y_t(0) \}
\end{equation*}
in a post-treatment period $t > T_0$.
We treat times, namely $T$ and $T_0$, and all units as deterministic, and treat the potential outcomes as stochastic \citep{Greenland1987,Robins1989,Robins2000,VanderWeele2012}; that is, $Y_t(0)$ and $Y_t(1)$ are both stochastic processes indexed by $t$ that are randomly generated over time, rather than fixed unknown scalar sequences.
In the frequentist interpretation, under repeated sampling, the times and units are all fixed and hence identical for all samples, but the outcomes are randomly generated from a fixed unknown joint distribution and hence may differ across samples.
Stochastic counterfactuals are commonly assumed in the SC literature, implicitly in the random noise or residuals of a linear latent factor model or an autoregressive model \citep[e.g.,][]{Abadie2010,Abadie2021, Ben-Michael2021augsynth,Ben-Michael2021,Athey2021}.
This notion of stochastic counterfactuals $Y_t(0)$ and $Y_t(1)$ as time series is required in our paper because the expectation in $\phi^*(t)$ is taken over the joint distribution of $(Y_t(0),Y_t(1))$.

In the main text, we focus on the case without covariates, and discuss using covariates in Web Appendix~\ref{sec: use covariates}. 
We assume that
all unmeasured confounding
is captured
by a latent factor $U_t$, with assumptions stated in later sections. 
We use $t_-$ and $t_+$ to denote generic times before and after treatment, respectively; that is, $t_- \leq T_0$ and $t_+ > T_0$.
When stating asymptotic results, we consider the asymptotic regime where $T \rightarrow \infty$ with $T_0/T \rightarrow \gamma \in (0,1)$.
This asymptotic regime may be interpreted as the number of observations in both pre- and post-treatment periods growing to infinity, collecting more data before and after the treatment time.
Beyond this asymptotic regime, finite-sample results concerning the error in pre-treatment fitting or treatment effect estimation have been established in previous works \citep[e.g.,][]{Abadie2021,Athey2021,Ben-Michael2021,Ben-Michael2021augsynth}.

\section{Review of identification via outcome modeling} \label{sec: identify ATT outcome}

	In classical SC methods, a weighted average of a pool of control units called \emph{donors} forms the SC \citep{Abadie2010}.
	The motivation for using control units' outcomes to learn about $Y_{t_+}(0)$, despite the presence of potential unmeasured confounder $U_t$, is that these control units may be affected by, and thus contains information about, $U_t$.
	This characteristic resembles that of proxies in proximal causal inference.
	In an i.i.d. setting, proxies capture the effect of the unmeasured confounder on the outcome or treatment assignment. 
	They may also be viewed as noisy observations of the unmeasured confounder.
	In a panel data setting, for each time $t$, we use $W_t \in \mathcal{W}$ to denote a proxy, the vector of the donors' outcomes in classical SC settings. 
Under commonly assumed data-generating assumptions such as the linear factor model \citep{Abadie2010}, donors' outcomes are ideal proxies of $U_t$: any variation in $U_t$ induces some variation in $W_t$.
	We use $Z_t \in \mathcal{Z}$ to denote a general supplemental proxy.
	We next state the causal conditions required and discuss the role of and the choice of $Z_t$.

\begin{condition} \label{cond: proxy independence}
	For all pre-treatment time points $t_-$, 
		the supplemental proxies are independent of the outcomes and the proxies, conditional on the confounders: $Z_{t_-} \independent (Y_{t_-},W_{t_-}) \mid U_{t_-}$.
\end{condition}

Conditions similar to Condition~\ref{cond: proxy independence} are common in proximal causal inference literature \citep[e.g.,][]{Miao2016,Miao2018,Cui2020,Tchetgen2020}. As we will show later, $W_t$ is used to model the outcome $Y_t$ with $Z_t$ supplementing for identification, while $Z_t$ is used to model the weighting process introduced in the following section with $W_t$ supplementing for identification. %
The conditional independence in Condition~\ref{cond: proxy independence} is implied by the factor model in classical SC \citep[e.g.,][]{Abadie2010}, so such assumptions are commonly made implicitly. Causal graphs of Condition~\ref{cond: proxy independence} are in Figure~\ref{fig: DAG}.

\begin{figure}
    \centering
    
    \begin{subfigure}{0.4\textwidth}
    \centering
    \begin{tikzpicture}
    \node[state] (z) at (0,0) {$Z_t$};
    \node[state] (w) at (1*1.5,0) {$W_t$};
    \node[state] (y) at (2*1.5,0) {$Y_t$};
    \node[state,dashed] (u) at (1*1.5,1*1.5) {$U_t$};
    
    \path (u) edge (z);
    \path (u) edge (w);
    \path (u) edge (y);
    
    \path[bidirected] (w) edge[bend left=-60] (y);
    \end{tikzpicture}
    \caption{}
    \label{fig: DAG1}
    \end{subfigure}
    \begin{subfigure}{0.4\textwidth}
    \centering
    \begin{tikzpicture}
    \node[state] (z) at (0,0) {$Z_t$};
    \node[state] (w) at (1*1.5,0) {$W_t$};
    \node[state] (y) at (2*1.5,0) {$Y_t$};
    \node[state,dashed] (u) at (1*1.5,1*1.5) {$U_t$};
    
    \path (u) edge (z);
    \path (u) edge (w);
    \path (u) edge (y);
    \end{tikzpicture}
    \caption{}
    \label{fig: DAG2}
    \end{subfigure}
    \caption{Causal graphs satisfying Condition~\ref{cond: proxy independence} at each time period $t$. The variable $U_t$ is the unobserved confounder. In Figure~\ref{fig: DAG1}, additional unmeasured confounding between proxy $W_t$ and the treated unit's outcome $Y_t$ may be present. In Figure~\ref{fig: DAG2}, $Z_t$, $W_t$ and $Y_t$ are mutually independent conditional on $U_t$, which is often sensible when $(W_t,Z_t)$ are control units' outcomes}.
    \label{fig: DAG}
\end{figure}
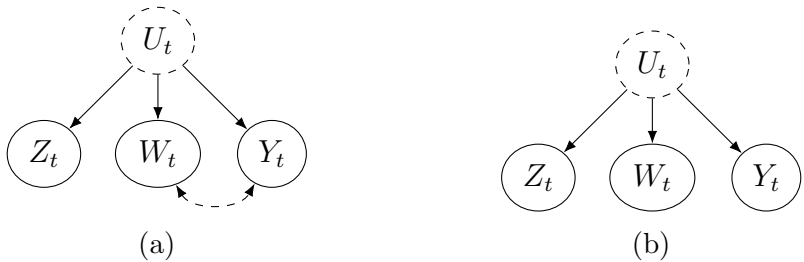

\begin{condition} \label{cond: outcome bridge}
	There exists a function $h^*: \mathcal{W} \rightarrow \real$, $h^*(W_t)$ that captures the conditional mean of $Y_t(0)$ given $U_t$: $\cexpect \{ h^*(W_t) \mid U_t \}=\cexpect \{ Y_t(0) \mid U_t \}$, for all time points $t$.
\end{condition}
The function $h^*$ is called an \emph{outcome confounding bridge function} in proximal causal inference, a terminology we adopt. Condition~\ref{cond: outcome bridge} states that the transformation $h^*(W_t)$ of the observable proxy $W_t$
matches the unobservable $Y_t(0)$ in expectation conditional on $U_t$.
Sufficient conditions for the existence of $h^*$ and examples of $h^*$ can be found in \cite{Shi2021}. The most popular model for $h^*$ is the linear model in classical SC methods \citep{Abadie2010}---that is, the average of $Y_t(0)$ may be imputed by a linear combination of the donors' outcomes $W_t$---but $h^*$ may also be nonlinear. This assumption substantially generalizes the form of SCs by allowing more flexible models. In this condition, we implicitly assume that $W_t$ is not causally impacted by the treatment because of the constant relationship over time.

Throughout, proxy $W_t$ will consist of donors' outcomes \textit{a priori} known not to be causally impacted by the treatment, to make Conditions~\ref{cond: proxy independence}--\ref{cond: outcome bridge} plausible.
When many donors are viable, a subset may be selected based on, for example, the similarity of their outcome trajectories to the treated unit's trajectory before treatment.
Typical choices of $Z_t$ that may satisfy Condition~\ref{cond: proxy independence} include (i) outcomes of control units that are not valid donors, (ii) outcomes of donors excluded from the model $h^*$ to impute $Y_t(0)$, 
and (iii) covariates of donors that are contemporaneous with $(Y_t,W_t)$. The proxy $Z_t$ may be impacted by the treatment. \citet{Shi2021} Section~2.2. contains more discussion about how to choose proxies $W_t$ and $Z_t$.
\cite{Shi2021} showed that, under Condition~\ref{cond: outcome bridge}, for all post-treatment time points $t_+ > T_0$,
\begin{equation}
	\cexpect \{ Y_{t_+}(0) \} = \cexpect \{ h^*(W_{t_+}) \}
	\label{eq: identify EY0 outcome}
\end{equation}
and thus the ATT %
$\phi^*(t_+) = \cexpect \{ Y_{t_+} - h^*(W_{t_+}) \}$. Additionally under Condition~\ref{cond: proxy independence}, $h^*$ solves
\begin{equation}
	\cexpect \{ Y_{t_-}-h(W_{t_-}) \mid Z_{t_-} \}=0 \quad \text{for all pre-treatment time points $t_- \leq T_0$}
	\label{eq: identify h}
\end{equation}
in
$h: \mathcal{W} \rightarrow \real$. Further, under Condition~\ref{cond: outcome complete} in Web Appendix~\ref{sec: completeness}, \eqref{eq: identify h} has a unique solution.

\begin{remark}
In principle, it may be possible to allow $h^*$ to depend on $t$, but it may be challenging to obtain a consistent estimator because only one observation is available at each time point; assumptions such as smoothness may be necessary. 
		Our subsequent confounding bridge functions can also depend on $t$, but we do not pursue this direction.
\end{remark}

\section{Weighted and doubly robust identification of ATT} \label{sec: identify ATT}

The method reviewed above is solely based on the treated unit's outcome process model, similar to G-computation under unconfoundedness \citep{Robins1986} and proximal G-computation \citep{Tchetgen2020}.
Now we introduce our first novel SC method, which is based on weighting and similar to inverse probability weighting under unconfoundedness \citep{Robins1994} and proximal inverse probability weighting \citep{Cui2020}.

To illustrate the idea, for the moment, suppose that the observations are i.i.d. across time. 
Then, the estimand $\phi^*(t_+)=\cexpect \{ Y_{t_+}(1)-Y_{t_+}(0) \}$ corresponds to an average treatment effect on the treated, for which several identification formulas exist \cite[e.g., ][]{Hahn1998,Imbens2004,Chen2008}, 
including those based on outcome regression and weighting. Therefore, one may identify $\cexpect \{ Y_{t_+}(1)-Y_{t_+}(0) \}$ in a SC setting via weighting. To avoid the issue that propensity scores are undefined conditional on the design, we use the concept of covariate shift \citep[e.g., ][]{quinonero2009dataset} and likelihood ratio weighting. 
We next describe our needed causal conditions to identify the ATT via weighting for panel data.

\begin{condition} \label{cond: covariate shift}
	The joint conditional distribution of the counterfactual outcome $Y_t(0)$ and proxy $W_t$ given $U_t$ is identical for all time points $t$.
\end{condition}
Intuitively,
this condition states that once the process $U_t$ is generated, 
$(Y_t(0),W_t)$ are then generated in the same way for all $t$.
The invariance of the conditional distribution of $W_t$ given $U_t$ rules out a causal effect of the treatment on the proxies $W_t$, analogously to the exclusion restriction property of negative control outcomes \citep{Lipsitch2010,Miao2018}.

The next condition states that the marginal distribution of the confounder $U_{t_+}$ is identical for all $t_+$, which is implied by stationarity of $U_{t_+}$. 
Moreover, this condition ensures that only one approach is needed to impute post-treatment counterfactual outcomes $Y_{t_+}(0)$. 

\begin{condition} \label{cond: posttreatment identical distribution}
	The distribution of the unobserved confounder $U_{t_+}$ is identical for all post-treatment time points $t_+ > T_0$.
\end{condition}

Conditions~\ref{cond: covariate shift} and \ref{cond: posttreatment identical distribution} together imply that the distribution of
$(Y_{t_+}(0),W_{t_+},U_{t_+})$
is identical for all $t_+$.
This implication holds under stationarity after treatment, without necessarily requiring stationarity before treatment.
Even if stationarity also holds before treatment, these two conditions hold if the distributions in these two periods differ. Stationarity or similar are often used in time series analysis and other areas, including SCs \citep[e.g.,][]{Hsiao2012,Hahn2017,Li2020,Cattaneo2021,Chernozhukov2021,Ferman2021}. Conditions~\ref{cond: covariate shift}--\ref{cond: posttreatment identical distribution} may thus be plausible in certain applications.
Stationarity in Condition~\ref{cond: posttreatment identical distribution} might be implausible when the number $T-T_0$ of post-treatment time periods is large.
We assume Condition~\ref{cond: posttreatment identical distribution} to facilitate the presentation and present an identification approach applicable to non-stationary cases in Web Appendix~\ref{sec: identify ATT treatment nonstationary}.
These two conditions allow an instantaneous distributional shift in the unobserved confounder $U_t$ at treatment $T_0$. Such an instantaneous shift is a source of confounding. 
Therefore, in general, pre-treatment outcomes $Y_{t_-}$ cannot be directly used to impute post-treatment counterfactual outcomes $Y_{t_+}(0)$.
We also note that, although these two conditions appear to require instantaneous distributional shift, in practice one may specify a window around the treatment in which the transition may not be instantaneous, and restrict the pre- and post-treatment periods before and after the specified window, respectively, 
so that Conditions~\ref{cond: covariate shift} and \ref{cond: posttreatment identical distribution} are plausible.

The next condition states the existence of a \textit{treatment confounding bridge function} \citep{Cui2020}, which models the likelihood ratio, namely a Radon–Nikodym derivative, $\intd P_{U_{t_+}}/$ $\intd P_{U_{t_-}}$
via a regression of the supplemental proxy $Z_{t_-}$ on the unobserved confounder $U_{t_-}$.
\begin{condition} \label{cond: treatment bridge}
	For all times $t_- \leq T_0$ and $t_+ > T_0$, the distribution $P_{U_{t_+}}$ of the unobserved confounder $U_{t_+}$ after treatment is dominated by that before treatment, namely $P_{U_{t_-}}$, and there exists a function $q^*: \mathcal{Z} \rightarrow \real$ capturing the distributional shift in $U_t$; that is,
	\begin{equation}
	    \cexpect \{ q^*(Z_{t_-}) \mid U_{t_-}=u \} = \frac{\intd P_{U_{t_+}}}{\intd P_{U_{t_-}}}(u). \label{eq: define q}
	\end{equation}
\end{condition}

The left-hand side of \eqref{eq: define q} does not depend on $t_+$, due to Condition~\ref{cond: posttreatment identical distribution}. 
We call $q^*$
the treatment confounding bridge function, encoding information on how pre-treatment outcomes $Y_{t_-}$ can be used to impute post-treatment counterfactual outcomes $Y_{t_+}(0)$.
The function $q^*$ is closely related to that for estimating the ATT in i.i.d settings, such as in Theorem~5.1 of \citet{Cui2020}.
One consequence of Condition~\ref{cond: treatment bridge} is that the distribution of $U_{t_+}$ must be dominated by that of $U_{t_-}$, for all $t_-\le T_0$ and $t_+ > T_0$. 
	As shown in Theorem~\ref{thm: identify ATT treatment} Equation~\ref{eq: identify q} below, under Conditions~\ref{cond: proxy independence} and \ref{cond: covariate shift}--\ref{cond: treatment bridge}, this further implies that the distribution of $W_{t_+}$ is dominated by that of  $W_{t_-}$ for all $t_-\le T_0$ and $t_+ > T_0$, which is a testable condition. 
	Thus, for time series data subject to a significant secular trend, especially a monotone trend, we recommend a pre-processing step to remove, to the extent possible, any significant secular trends to make Condition~\ref{cond: treatment bridge} as plausible as possible. We provide concrete approaches for removing time trends in Web Appendix~\ref{sec: data analysis2} and the corresponding simulation results in Web Appendix~\ref{sec: sim detrend}. Although detrending with our approach may lead to slightly to moderately anti-conservative inference, failing to correct for such a trend will likely compromise one's ability to implement the proposed weighted approach successfully.
    It is still an open question how to account for time trends appropriately to obtain asymptotically valid inference in our proposed methods.
Sufficient conditions for the existence of $q^*$ can be found in Web Appendix~\ref{sec: bridge exists}.
We list a few examples of treatment confounding bridge functions below.

\begin{example} \label{ex: normal}

    Suppose that, for all $t_- \leq T_0$, $t_+ > T_0$ and $t$, 
    $U_{t_-} \sim \mathrm{N}(0,\sigma_-^2)$, 
    $U_{t_+} \sim \mathrm{N}(0,\sigma_+^2)$, 
    $Z_t \mid U_t \sim \mathrm{N}(a U_t, \sigma^2)$
    for some $a \neq 0$. If $\sigma^{-2} a^2 - \sigma_+^{-2} + \sigma_-^{-2} > 0$, then $q^*: z \mapsto \exp(\alpha + \beta z^2)$ for $\beta = \sigma^{-2}(\sigma_-^{-2}-\sigma_+^{-2})/\{2 (\sigma^{-2} a^2 - \sigma_+^{-2} + \sigma_-^{-2}) \}$ and some $\alpha\in \mathbb{R}$ satisfies Condition~\ref{cond: treatment bridge}. 
    In this example, we only specify the marginal distributions of $(U_t,Z_t)$ for each $t$ and allow for serial correlation. For instance, $U_t$ could be generated from an autoregressive model that is stationary before and after treatment, respectively.
    This might occur if the treatment is implemented at a random time point $T_0$ due to an abrupt change in $U_t$ in a time window around $T_0$; with the observations in this window excluded from analysis, $U_t$ can be stationary before and after treatment with possibly different distributions.
    Moreover, if $U_t$ is stationary, Condition~\ref{cond: treatment bridge} holds with the constant bridge function $q^*: z \mapsto 1$ even if $T_0$ is random.
    Similar results hold for data marginally distributed as multivariate normal at each time $t$.
\end{example}

\begin{example} \label{ex: exponential}
    Suppose that $U_t=(U_{t,1},\ldots,U_{t,K})$ has all coordinates mutually independent at any time $t$, while there may be serial correlations for each element over time. 
    Let $U_{t_-,k} \sim \mathrm{Exponential}(\lambda_{-})$ and $U_{t_+,k} \sim \mathrm{Exponential}(\lambda_{+})$, for $k=1,\ldots,K$, with $\lambda_{+} \geq \lambda_{-}$. Suppose that $Z_t=(Z_{t,1},\ldots,Z_{t,K})$ with mutually independent entries distributed as, conditional on $U_t$, $Z_{t,k} \mid U_t \sim \mathrm{Poisson}(\alpha + \beta U_{t,k})$ for some scalars $\alpha \geq 0$ and $\beta>0$. Then, direct calculation shows that $q^*: z \mapsto \prod_{k=1}^K \exp(a + b z_k)$ satisfies Condition~\ref{cond: treatment bridge} for some scalars $a$ and $b$.
\end{example}

\begin{remark}
    Condition~\ref{cond: treatment bridge} may fail when $T_0$ is random and depends on the unobserved confounder $U_t$, even when we condition on $T_0$ to mimic a fixed treatment time design. Consider the following counterexample. Suppose that $U_t, t\ge 0$ are i.i.d. across $t$, with support being $\real$, and $T_0 = \max \{t: U_t < a\}$ for an unknown fixed number $a$. 
    This corresponds to the case where the treatment initiates immediately when $U_t$ crosses the threshold $a$. We assume i.i.d. and unbounded $U_t$s only for simplicity. 
    Conditional on $T_0$, since $\Prob(U_{t-} < a)=1$ but $\Prob(U_{t_+} \geq a) > 0$, the Radon-Nikodym derivative $\intd P_{U_{t_+}}/\intd P_{U_{t_-}}$ in Condition~\ref{cond: treatment bridge} does not exist. Thus, this condition fails. Therefore, a fixed treatment time is crucial to the weighted approach to ATT, and conditioning on a random $T_0$ might not suffice.
\end{remark}

The above conditions lead to our first formal identification result.

\begin{theorem}[Identification of ATT with $q^*$] \label{thm: identify ATT treatment}
	Let $f:\real \rightarrow \real$ be any square-integrable function. Under Conditions~\ref{cond: proxy independence} and \ref{cond: covariate shift}--\ref{cond: treatment bridge}, for all $t_- \leq T_0$ and $t_+ > T_0$,
	\begin{equation}
	    \cexpect [ f \{ Y_{t_+}(0) \} ] = \cexpect \{ q^*(Z_{t_-}) f(Y_{t_-}) \}.
	    \label{eq: identify EfY0 treatment}
	\end{equation}
	In particular, taking $f$ to be the identity function, it holds that
	\begin{equation}
        \cexpect \{ Y_{t_+}(0) \} = \cexpect \{ q^*(Z_{t_-}) Y_{t_-} \}
		\label{eq: identify EY0 treatment}
	\end{equation}
	and thus the ATT is identified as
	$\phi^*(t_+)=\cexpect \{ Y_{t_+}-q^*(Z_{t_-}) Y_{t_-} \}$
	for all $t_- \leq T_0$ and $t_+ > T_0$. In addition, 
	$P_{W_{t_+}}$ is dominated by $P_{W_{t_-}}$; the treatment confounding bridge function $q^*$ solves
	\begin{equation}
		\cexpect \{ q(Z_{t_-}) \mid W_{t_-}=w \} = \frac{\intd P_{W_{t_+}}}{\intd P_{W_{t_-}}}(w) \quad \text{for all $t_- \leq T_0$}
		\label{eq: identify q}
	\end{equation}
	in
    $q: \mathcal{Z} \rightarrow \real$.
    Further, under Condition~\ref{cond: treatment complete} in Web Appendix~\ref{sec: completeness}, \eqref{eq: identify q} has a unique solution.
\end{theorem}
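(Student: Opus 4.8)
My plan is to isolate a single change-of-measure identity and then read off all four conclusions from it. Concretely, I would first prove that for every square-integrable $\psi:\real\times\mathcal{W}\to\real$ and all $t_-\le T_0$, $t_+>T_0$,
\[
\cexpect[q^*(Z_{t_-})\,\psi(Y_{t_-},W_{t_-})] = \cexpect[\psi(Y_{t_+}(0),W_{t_+})].
\]
To establish this I would condition on $U_{t_-}$ and peel off the factors. Writing the left-hand side as $\cexpect\{\cexpect[q^*(Z_{t_-})\psi(Y_{t_-},W_{t_-})\mid U_{t_-}]\}$, Condition~\ref{cond: proxy independence} (which gives $Z_{t_-}\independent(Y_{t_-},W_{t_-})\mid U_{t_-}$) factorizes the inner conditional expectation as $\cexpect[q^*(Z_{t_-})\mid U_{t_-}]\,\cexpect[\psi(Y_{t_-},W_{t_-})\mid U_{t_-}]$. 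Since $Y_{t_-}=Y_{t_-}(0)$ in the pre-treatment period, Condition~\ref{cond: covariate shift} makes the map $u\mapsto\cexpect[\psi(Y_{t_-}(0),W_{t_-})\mid U_{t_-}=u]=:m(u)$ independent of the time index.

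Substituting the bridge equation \eqref{eq: define q} from Condition~\ref{cond: treatment bridge} for the first factor, the left side becomes $\int m(u)\,(\intd P_{U_{t_+}}/\intd P_{U_{t_-}})(u)\,\intd P_{U_{t_-}}(u)$, which by the defining property of the Radon--Nikodym derivative equals $\int m(u)\,\intd P_{U_{t_+}}(u)=\cexpect[m(U_{t_+})]$. Applying Condition~\ref{cond: covariate shift} once more, so that the same $m$ also represents the post-treatment conditional mean $\cexpect[\psi(Y_{t_+}(0),W_{t_+})\mid U_{t_+}]$, and using the tower property, I obtain $\cexpect[m(U_{t_+})]=\cexpect[\psi(Y_{t_+}(0),W_{t_+})]$, which is the claimed identity; Condition~\ref{cond: posttreatment identical distribution} is what guarantees that the single $q^*$ furnished by Condition~\ref{cond: treatment bridge} serves every $t_+>T_0$ simultaneously. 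Specializing $\psi(y,w)=f(y)$ yields \eqref{eq: identify EfY0 treatment}, taking $f$ the identity gives \eqref{eq: identify EY0 treatment}, and the ATT formula follows since $Y_{t_+}=Y_{t_+}(1)$ for $t_+>T_0$.

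For the characterization \eqref{eq: identify q}, I would first prove $P_{W_{t_+}}\ll P_{W_{t_-}}$. Condition~\ref{cond: covariate shift} supplies a single conditional kernel $P_{W\mid U}$ with $P_{W_t}=\int P_{W\mid U}(\cdot\mid u)\,\intd P_{U_t}(u)$; if $P_{W_{t_-}}(A)=0$ then $P_{W\mid U}(A\mid u)=0$ for $P_{U_{t_-}}$-a.e.\ $u$, and because $P_{U_{t_+}}\ll P_{U_{t_-}}$ by Condition~\ref{cond: treatment bridge}, this also holds $P_{U_{t_+}}$-a.e., whence $P_{W_{t_+}}(A)=0$. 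Then I would invoke the identity above with $\psi(y,w)=g(w)$ to get $\cexpect[q^*(Z_{t_-})g(W_{t_-})]=\cexpect[g(W_{t_+})]$. Rewriting the left side via the tower property with $r(w):=\cexpect[q^*(Z_{t_-})\mid W_{t_-}=w]$, and the right side as $\int g\,(\intd P_{W_{t_+}}/\intd P_{W_{t_-}})\,\intd P_{W_{t_-}}$, shows that $\int g\,\bigl(r-\intd P_{W_{t_+}}/\intd P_{W_{t_-}}\bigr)\,\intd P_{W_{t_-}}=0$ for all square-integrable $g$. Letting $g$ range over indicators of measurable sets forces $r=\intd P_{W_{t_+}}/\intd P_{W_{t_-}}$ $P_{W_{t_-}}$-a.e., i.e.\ $q^*$ solves \eqref{eq: identify q}. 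Uniqueness is then immediate from Condition~\ref{cond: treatment complete}: any two solutions $q_1,q_2$ satisfy $\cexpect[(q_1-q_2)(Z_{t_-})\mid W_{t_-}]=0$ a.s., and completeness forces $q_1(Z_{t_-})=q_2(Z_{t_-})$ a.s.

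The main obstacle is not the algebra but the measure-theoretic bookkeeping. The Radon--Nikodym interchange requires $m$ to be $P_{U_{t_+}}$-integrable, which I would justify from the integrability assumptions the paper adopts tacitly. The passage from ``$\int g\psi\,\intd P=0$ for all square-integrable $g$'' to ``$\psi=0$ a.e.'' must be carried out through bounded indicator test functions, so as not to presuppose integrability of $\psi=r-\intd P_{W_{t_+}}/\intd P_{W_{t_-}}$ itself. Finally, I would confirm that the shared conditional law in Condition~\ref{cond: covariate shift} admits a regular version, so that the kernel $P_{W\mid U}$ used in the domination argument is genuinely well defined.
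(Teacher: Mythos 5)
Your proof is correct and follows the paper's own skeleton---tower property over $U_{t_-}$, factorization via Condition~\ref{cond: proxy independence}, substitution of the bridge equation \eqref{eq: define q}, the Radon--Nikodym change of measure, and Condition~\ref{cond: covariate shift} to transfer the conditional mean between periods---so the heart of the argument is the same; you differ in two local respects. First, you prove one unified identity $\cexpect[q^*(Z_{t_-})\,\psi(Y_{t_-},W_{t_-})]=\cexpect[\psi(Y_{t_+}(0),W_{t_+})]$ for joint test functions and then specialize to $\psi(y,w)=f(y)$ and $\psi(y,w)=g(w)$, whereas the paper runs the identical computation twice, once for $f(Y_{t_-})$ and once for $f(W_{t_-})$; your unification is legitimate at no extra cost, since Condition~\ref{cond: proxy independence} already gives independence of $Z_{t_-}$ from the \emph{pair} $(Y_{t_-},W_{t_-})$ given $U_{t_-}$, and it makes clear that both displayed conclusions are instances of a single change of measure. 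Second, your domination argument $P_{W_{t_+}}\ll P_{W_{t_-}}$ goes through the common kernel $P_{W\mid U}$ supplied by Condition~\ref{cond: covariate shift} together with $P_{U_{t_+}}\ll P_{U_{t_-}}$ from Condition~\ref{cond: treatment bridge}; the paper instead reads domination directly off the moment identity it has just established: taking the test function $\ind(\cdot\in B)$ with $P_{W_{t_-}}(B)=0$ forces $\ind(W_{t_-}\in B)=0$ a.s., hence $P_{W_{t_+}}(B)=\cexpect[q^*(Z_{t_-})\ind(W_{t_-}\in B)]=0$. The paper's route is marginally cleaner since it sidesteps the regular-conditional-distribution caveat you flag at the end (a harmless caveat here, as Condition~\ref{cond: covariate shift} implicitly posits such a kernel). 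Your remaining steps---passing from arbitrary (indicator) test functions to the conditional-expectation characterization \eqref{eq: identify q}, and uniqueness from Condition~\ref{cond: treatment complete}---match the paper exactly, and your explicit attention to integrability of $m$ and to testing with bounded indicators is more careful than the paper's tacit treatment.
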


In contrast to SC methods based on outcome modeling,
the identifying expression \eqref{eq: identify EY0 treatment} cannot be directly interpreted as an outcome trajectory. 
Indeed, given a treatment confounding bridge function $q^*$, the right-hand side of \eqref{eq: identify EY0 treatment} only depends on observations before treatment but not after treatment. 
We also obtain a novel doubly robust identification result, which is motivated by doubly robust estimation of the average treatment effect on the treated \citep{Cui2020} and is the basis of doubly robust inference.

\begin{theorem}[Doubly robust identification] \label{thm: identify ATT DR}
	Let
	$h: \mathcal{W} \rightarrow \real$
	and
	$q: \mathcal{Z} \rightarrow \real$
	be square-integrable. Under Conditions~\ref{cond: proxy independence}, \ref{cond: covariate shift} and \ref{cond: posttreatment identical distribution}, 	for all $t_- \leq T_0$ and $t_+ > T_0$, we have
    \begin{align}
	    \begin{split}
	        \phi^*(t_+) &= \cexpect \left[ Y_{t_+} - q(Z_{t_-}) \{Y_{t_-} - h(W_{t_-})\} - h(W_{t_+}) \right],
	    \end{split} \label{eq: DR identify ATT}
	\end{align}
if (i) Condition~\ref{cond: outcome bridge} holds and $h=h^*$,
	or (ii) Condition~\ref{cond: treatment bridge} holds and $q=q^*$.
\end{theorem}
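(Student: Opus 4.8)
The plan is to exploit the additive structure of the right-hand side of \eqref{eq: DR identify ATT}. Writing
\[
\Psi(h,q) := \cexpect\left[ q(Z_{t_-})\{Y_{t_-} - h(W_{t_-})\} \right] + \cexpect[h(W_{t_+})],
\]
I would prove $\Psi(h,q) = \cexpect[Y_{t_+}(0)]$ separately in the two cases, showing in each case that the term multiplying the possibly-misspecified nuisance either vanishes or telescopes into the correct quantity. Throughout I would use that $Y_{t_-} = Y_{t_-}(0)$ for $t_- \leq T_0$, so that Condition~\ref{cond: proxy independence} delivers the factorization $Z_{t_-} \independent (Y_{t_-}(0),W_{t_-}) \mid U_{t_-}$.

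For case (i) ($h = h^*$ with Condition~\ref{cond: outcome bridge}, and $q$ arbitrary), I would first show the cross term $\cexpect[q(Z_{t_-})\{Y_{t_-} - h^*(W_{t_-})\}]$ is zero: conditioning on $U_{t_-}$ and factorizing via Condition~\ref{cond: proxy independence} gives $\cexpect[q(Z_{t_-}) \mid U_{t_-}]\,\cexpect[Y_{t_-}(0) - h^*(W_{t_-}) \mid U_{t_-}]$, whose second factor vanishes by Condition~\ref{cond: outcome bridge}. It then remains to note $\cexpect[h^*(W_{t_+})] = \cexpect[Y_{t_+}(0)]$, which is exactly the outcome-bridge identity \eqref{eq: identify EY0 outcome} and uses only Condition~\ref{cond: outcome bridge}. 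Combining, $\Psi(h^*,q) = 0 + \cexpect[Y_{t_+}(0)]$. This case uses only Conditions~\ref{cond: proxy independence} and \ref{cond: outcome bridge}; the covariate-shift Conditions~\ref{cond: covariate shift}--\ref{cond: posttreatment identical distribution} are available but unused.

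For case (ii) ($q = q^*$ with Condition~\ref{cond: treatment bridge}, and $h$ arbitrary), I would invoke Theorem~\ref{thm: identify ATT treatment}, which under Conditions~\ref{cond: proxy independence}, \ref{cond: covariate shift}--\ref{cond: treatment bridge} gives $\cexpect[q^*(Z_{t_-}) Y_{t_-}] = \cexpect[Y_{t_+}(0)]$ through \eqref{eq: identify EY0 treatment}. Hence $\Psi(h,q^*) = \cexpect[Y_{t_+}(0)] - \cexpect[q^*(Z_{t_-}) h(W_{t_-})] + \cexpect[h(W_{t_+})]$, so it suffices to prove the $W$-covariate-shift identity $\cexpect[q^*(Z_{t_-}) h(W_{t_-})] = \cexpect[h(W_{t_+})]$. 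This follows most cleanly from the characterization \eqref{eq: identify q} of $q^*$ also proved in Theorem~\ref{thm: identify ATT treatment}: taking iterated expectations given $W_{t_-}$ yields $\cexpect[q^*(Z_{t_-}) h(W_{t_-})] = \cexpect[(\intd P_{W_{t_+}}/\intd P_{W_{t_-}})(W_{t_-})\, h(W_{t_-})] = \cexpect[h(W_{t_+})]$ by a change of measure. Alternatively, conditioning on $U_{t_-}$, factorizing via Condition~\ref{cond: proxy independence}, substituting the likelihood ratio from Condition~\ref{cond: treatment bridge}, and then using the invariance of $W_t \mid U_t$ from Condition~\ref{cond: covariate shift} to pass from $t_-$ to $t_+$ gives the same identity directly.

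The ATT statement then follows immediately, since $Y_{t_+} = Y_{t_+}(1)$ for $t_+ > T_0$ gives $\phi^*(t_+) = \cexpect[Y_{t_+}] - \Psi(h,q)$, which is the second line of \eqref{eq: DR identify ATT}. The main obstacle is conceptual rather than computational: in each case one must verify that the term carrying the \emph{wrong} nuisance cancels, which hinges on correctly pairing the conditional-independence factorization of Condition~\ref{cond: proxy independence} with the defining property of the \emph{correct} bridge function---Condition~\ref{cond: outcome bridge} in case (i), and Condition~\ref{cond: treatment bridge} together with the invariance Condition~\ref{cond: covariate shift} in case (ii). A minor technical point is to justify Fubini and the change of measure in case (ii), which is guaranteed by square-integrability of $h$ and $q^*$ and the domination $P_{W_{t_+}} \ll P_{W_{t_-}}$ established in Theorem~\ref{thm: identify ATT treatment}.
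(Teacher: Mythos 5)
Your proof is correct and follows essentially the same route as the paper's: case (i) rests on the cross term vanishing via the Condition~\ref{cond: proxy independence} factorization together with Condition~\ref{cond: outcome bridge} (the paper cites the equivalent fact \eqref{eq: identify h}) plus \eqref{eq: identify EY0 outcome}, and case (ii) is the same $U_{t_-}$-conditioning/likelihood-ratio/change-of-measure computation, which you merely modularize by citing \eqref{eq: identify EY0 treatment} and \eqref{eq: identify q} from Theorem~\ref{thm: identify ATT treatment} where the paper runs the conditioning argument once on the combined term $Y_{t_-}(0)-h(W_{t_-})$. Your side observations---that case (i) uses only Conditions~\ref{cond: proxy independence} and \ref{cond: outcome bridge}, and that integrability and the domination $P_{W_{t_+}} \ll P_{W_{t_-}}$ justify the change of measure---are also consistent with the paper.
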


Theorem~\ref{thm: identify ATT DR} states that the ATT $\phi^*(t_+)$ is identified by a single formula if \emph{at least} one of the two nuisance functions $h^*$ or $q^*$ is known, and thus doubly robust estimation \citep{Bang2005,Robins1995a,Robins1995,Scharfstein1999} is possible. 
This result holds even if one of $h^*$ and $q^*$ exists but the other does not; that is, Conditions~\ref{cond: outcome bridge} and \ref{cond: treatment bridge} need not hold simultaneously.

\section{Doubly robust inference about ATT} \label{sec: estimate ATT}

We assume that one specifies parametric models for the confounding bridge functions $h^*$ and $q^*$. We use $h_\alpha$ and $q_\beta$ to denote the models for confounding bridge functions parameterized by $\alpha \in \mathcal{A} \subseteq \real^{d_\alpha}$ and $\beta \in \mathcal{B} \subseteq \real^{d_\beta}$, respectively. For example, in classical SC methods,
it is typically assumed that $h^*(w)=w^\top \alpha_0$ for a vector $\alpha_0$ of non-negative numbers that sum up to unity \citep{Abadie2010}. In this case, we may take $h_\alpha$ to be $w \mapsto w^\top \alpha$. 
In some cases, e.g., in Example~\ref{ex: exponential}, we may take $q_\beta(z)=\exp(\beta_0 + z^\top \beta_1)$ where $\beta=(\beta_0,\beta_1^\top)^\top$.

We assume that the function $t_+ \mapsto \phi^*(t_+)$ encoding the potentially time-varying ATT is correctly parameterized by $\lambda \in \Lambda \subseteq \real^{d_\lambda}$. We use $\phi_\lambda$ to denote this model. For example, the ATT is commonly assumed to be constant overtime, which holds under stationarity of $\{Y_{t_+}\}_{t_+ > T_0}$ and Conditions~\ref{cond: covariate shift}--\ref{cond: posttreatment identical distribution}. In this case, we may set $\phi_\lambda$ to be constant $\lambda \in \real$.

By \eqref{eq: identify h} and \eqref{eq: identify q} respectively, we have that
\begin{equation}
    \cexpect[ \{ Y_{t_-} - h^*(W_{t_-}) \} g_Z(Z_{t_-})] = 0 \quad \text{and} \quad \cexpect \{ q^*(Z_{t_-}) g_W(W_{t_-}) - g_W(W_{t_+}) \} = 0
    \label{eq: implied moment equation}
\end{equation}
for any functions $g_Z: \mathcal{Z} \rightarrow \real$
and $g_W: \mathcal{W} \rightarrow \real$.
We propose to use the generalized method of moments (GMM) to estimate $h^*$, $q^*$, and the ATT \citep[e.g., ][]{Hansen1982,Wooldridge1994,Hall2007}. 
This method involves a parameter vector $\theta=(\alpha,\beta,\lambda,\psi,\psi_-) \in \Theta$ including nuisance parameters $\psi$ and $\psi_-$, a moment equation $G_t$ for each time $t$, and a weight matrix $\Omega_T$ that may depend on sample size $T$.
Details of this method are presented in Web Appendix~\ref{sec: technical conditions}.
We need some additional conditions to obtain valid inferences about $\phi^*$.

\begin{condition} \label{cond: misspecify h or q}
	(i) There is a unique parameter value $\theta_\infty = (\alpha_\infty,\beta_\infty,\lambda_\infty,\psi_\infty,\psi_{-,\infty}) \in \Theta$ such that $\cexpect \{ G_t(\theta_\infty) \}=0$ for all time points $t$. (ii) The function $h^*=h_{\alpha_\infty}$ is a valid outcome confounding bridge function satisfying Condition~\ref{cond: outcome bridge}, or the function $q^*=q_{\beta_\infty}$ is a valid treatment confounding bridge function satisfying Condition~\ref{cond: treatment bridge}.
\end{condition}

Part~(i) of Condition~\ref{cond: misspecify h or q} is an identifying condition to ensure a unique solution to the population moment equation, which is standard for GMM. 
If both $h^*$ and $q^*$ are correctly specified, part~(i) would hold if they are both unique.
Part~(ii) requires correct parametric specification of at least one of, but not necessarily both of, $h^*$ or $q^*$, without necessarily knowing \textit{a priori} which model might be incorrect.
Under Condition~\ref{cond: misspecify h or q}, Theorem~\ref{thm: identify ATT DR} implies that $\phi_{\lambda_\infty}$ equals the target estimand, ATT $\phi^*$. 
By standard asymptotic theory for GMM \citep[e.g., Theorems~7.1 and 7.2 in][]{Wooldridge1994} along with Theorem~\ref{thm: identify ATT DR}, we have that $\hat{\theta}_T$ is consistent for $\theta_\infty$ and asymptotically normal under conditions, as stated in Theorem~\ref{thm: DR ATT asymptotic normality} below.
If the data are i.i.d., as we argued in Section~\ref{sec: identify ATT}, the estimand reduces to the average treatment effect on the treated and our estimator is locally efficient \citep{Cui2020}.

\begin{theorem} \label{thm: DR ATT asymptotic normality}
	Under Conditions~\ref{cond: proxy independence}, \ref{cond: covariate shift}, \ref{cond: posttreatment identical distribution}, \ref{cond: misspecify h or q}, as well as \ref{cond: positive weight matrix}--\ref{cond: reg conditions for estimating function} and \ref{cond: uniform law of large numbers} in Web Appendix~\ref{sec: technical conditions}, with the estimator $\hat{\theta}_T$ from \eqref{eq: thetahat} and $\theta_\infty$ in Condition~\ref{cond: misspecify h or q}, it holds that, 
	as $T\to\infty$,
	$\hat{\theta}_T$ is consistent for $\theta_\infty$. 
	Additionally, under Conditions~\ref{cond: full rank matrix}, \ref{cond: uniform law of large numbers derivative} and \ref{cond: CLT weighted moment}, as $T\to\infty$, $\sqrt{T} (\hat{\theta}_T-\theta_\infty) \overset{d}{\to} \mathrm{N}(0,A^{-1} B A^{-1})$,
    where $A = R^\top \Omega R$, $\Omega$ is from Condition~\ref{cond: positive weight matrix},
	\begin{align*}
		& R=\lim_{T \rightarrow \infty} \frac{1}{T} \sum_{t=1}^T \cexpect \{ \nabla_\theta G_t(\theta) |_{\theta=\theta_\infty} \},\,\,
         B = R^\top \Omega \left[ \lim_{T \rightarrow \infty} \Var \left\{ T^{-1/2} \sum_{t=1}^T G_t(\theta_\infty) \right\} \right] \Omega R.
	\end{align*}
    Suppose that the ATT function $\phi_\lambda$ is differentiable with respect to $\lambda$ and let $\dot{\phi}_\lambda$ denote this partial derivative. Thus, with $\Pi := (0_{d_\lambda \times (d_\alpha + d_\beta)}, I_{d_\lambda},0_{d_\lambda \times (d_{\beta'}+1}))$ being the matrix consisting of zeros and ones with dimensions denoted in the subscript, for every $t_+ > T_0$, it holds that $\phi_{\lambda_\infty}$ is the ATT at time $t_+$ and $\sqrt{T} (\phi_{\hat{\lambda}_T}(t_+)-\phi_{\lambda_\infty}(t_+)) \overset{d}{\to} \mathrm{N}(0,\dot{\phi}_{\lambda_\infty}(t_+)^\top \Pi^\top A^{-1} B A^{-1} \Pi \dot{\phi}_{\lambda_\infty}(t_+))$.
\end{theorem}

One can in principle use any GMM implementation, and we use the standard \texttt{R} package \texttt{gmm} for our simulation and data analyses. 
Confidence intervals of the ATT follows from standard outputs of \texttt{gmm}. In particular, when $\phi_\lambda$ is a constant function, since $\lambda_\infty$ is a component of $\theta_\infty$, a Wald test or confidence interval about the ATT $\phi_{\lambda_\infty}(t_+) = \lambda_\infty$ follows immediately from inference about $\theta_\infty$.
We have noted some numerical instabilities with this implementation in our simulations, and provide our empirical suggestions to alleviate numerical issues in Web Appendix~\ref{sec: implementation of GMM}. We expect these issues to be alleviated by using an improved GMM software implementation with, for example, more numerically stable optimization algorithms that are more capable to handle nonconvex problems, potentially under constraints.

Our asymptotic results in Theorem~\ref{thm: DR ATT asymptotic normality} rely on the total number $T$ of time periods tending to infinity. Thus, our proposed doubly robust method is applicable to cases with a large number of time periods and a relatively small number of model parameters. With many control units, expertise might be required to reduce the number of parameters before analysis using our method. Methods with different theoretical results that allow for a large number of control units include \citet{Athey2021,Ben-Michael2021,Ben-Michael2021augsynth}, among others.

The method and results corresponding to the identification results in
Theorem~\ref{thm: identify ATT treatment} based on weighting alone are
similar and can be found in
Web Appendix~\ref{sec: weighting GMM}. This is largely based on the outcome modeling-based approach developed by \cite{Shi2021}.
Compared to these two methods, the doubly robust method has the advantage that it only requires correct specification of one of $h^*$ and $q^*$ in a parametric model, but not necessarily both. One potential drawback of the doubly robust method, however, is that more parameters need to be estimated in GMMs compared to the other two methods. This issue of dimensionality might limit the usage of complicated models for $h^*$, $q^*$ and $\phi^*$ when the time series is short.
In particular, if the number of parameters is comparable to the total number $T$ of time periods, the doubly robust method might lead to numerical instability due to too many parameters being estimated and might be impractical.
Our methods have no guarantee in such scenarios.
However, the number of parameters is much smaller than $T$ in many applications, for example, in our data analyses in Section~\ref{sec: vaccine analysis} and Web Appendix~\ref{sec: data analysis2}.
The number of control units is often highly related to the number of parameters, and the number of control units is commonly small compared to the number of time periods in SC applications, including the motivating ones \citep[e.g.,][]{Abadie2003,Abadie2010}.
Models with many variables might possibly be used by utilizing methods in \citet{Deaner2021} under linearity and sparsity.
\citet{Chamberlain1992,Hansen1982,Hansen1985} showed that GMM attains the efficiency bound under conditional moment restrictions with a suitably chosen weighting matrix $\Omega_T$ in i.i.d settings as well as panel data settings where asymptotics are in the number of units. We do not pursue high-dimensional models or efficiency further in this work and restrict ourselves to the setting of a bounded number of units (a single treated unit and a finite number of control units) while relying on large $T$ asymptotics; to the best of our knowledge, none of these prior results can directly apply to our more challenging setting and existing semiparametric efficiency theory does not appear to directly apply without additional restrictions (e.g., Markov restrictions).

\section{Simulations} \label{sec: sim}

We investigate the performance of our methods to estimate the constant ATT $\phi^* \equiv \phi^*(t_+)$ in several simulations. 
Here we present the first simulation where
the moment equations in the GMM are just identified.
We compare the following methods: \texttt{correct.DR}, the doubly robust method with correctly specified parametric $h^*$ and $q^*$; \texttt{correct.h}, the outcome confounding bridge method from \cite{Shi2021} with correctly specified $h^*$; \texttt{correct.q}, the treatment confounding bridge method described in Web Appendix~\ref{sec: weighting GMM} with correctly specified $q^*$; \texttt{mis.h.DR}, the doubly robust method with misspecified $h^*$ and correctly specified $q^*$; \texttt{mis.q.DR}, the doubly robust method with misspecified $q^*$ and correctly specified $h^*$; \texttt{mis.h}, the outcome confounding bridge method with misspecified $h^*$; \texttt{mis.q}, the treatment confounding bridge method with misspecified $q^*$; \texttt{OLS}, the method based on unconstrained ordinary least squares and similar to the method from \cite{Abadie2010}. 
\texttt{OLS} finds the linear combination of donors' outcomes that best fits the treated unit's outcome before treatment, and uses this combination as the SC.
All methods except \texttt{OLS} are based on the proximal causal inference perspective.
We let the number of latent confounders, the number of donors and of the other control units, all equal to $K$, range in \{2,3,4,5\}.
Details of the data-generating mechanism are presented in Wed Appendix~\ref{sec: sim DGP}.

We next present the simulation results. 
We have run 16,000 GMM involving weighting via a treatment confounding bridge function.
Among them, 
only one run had numerical errors. 
The sampling distributions of the estimated ATT is presented in Figure~\ref{fig: sim phi hat}. In all settings, the \texttt{OLS} estimator is biased. 
When at least one of $h^*$ or $q^*$ is correctly specified, our proposed doubly robust method appears consistent and asymptotically normal, aligning with Theorem~\ref{thm: DR ATT asymptotic normality}. 
The other two methods, based on only one of $h^*$, $q^*$, are not doubly robust. 
They are consistent and asymptotically normal only when the bridge function they rely on is correctly specified, but are biased otherwise. The 95\%-Wald confidence interval coverage of all above methods is presented in Figure~\ref{fig: sim CI cover}. In large samples, confidence intervals based on consistent and asymptotically normal estimators have coverage close to the nominal level; otherwise, the confidence interval coverage is much lower than the nominal level.

\begin{figure}[bt]
    \centering
    \includegraphics[scale=0.7]{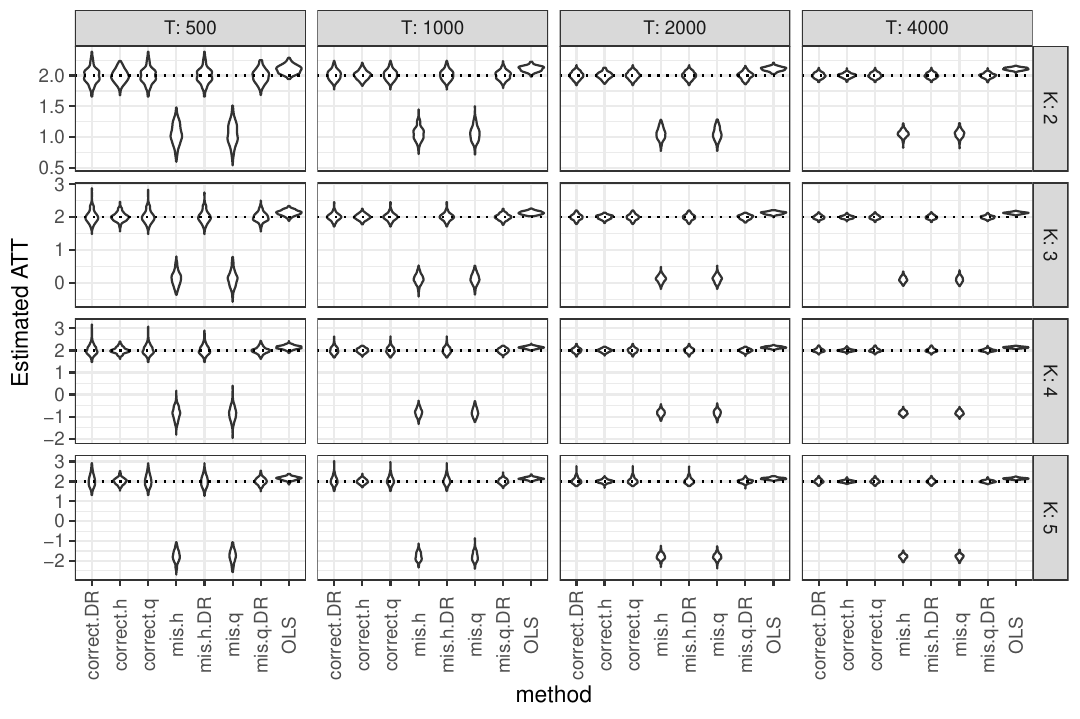}
    \caption{Sampling distribution of estimated average treatment effect. The horizontal dotted line is the true average treatment effect for the treated unit.}
    \label{fig: sim phi hat}
\end{figure}

\begin{figure}[bt]
    \centering
    \includegraphics[scale=0.7]{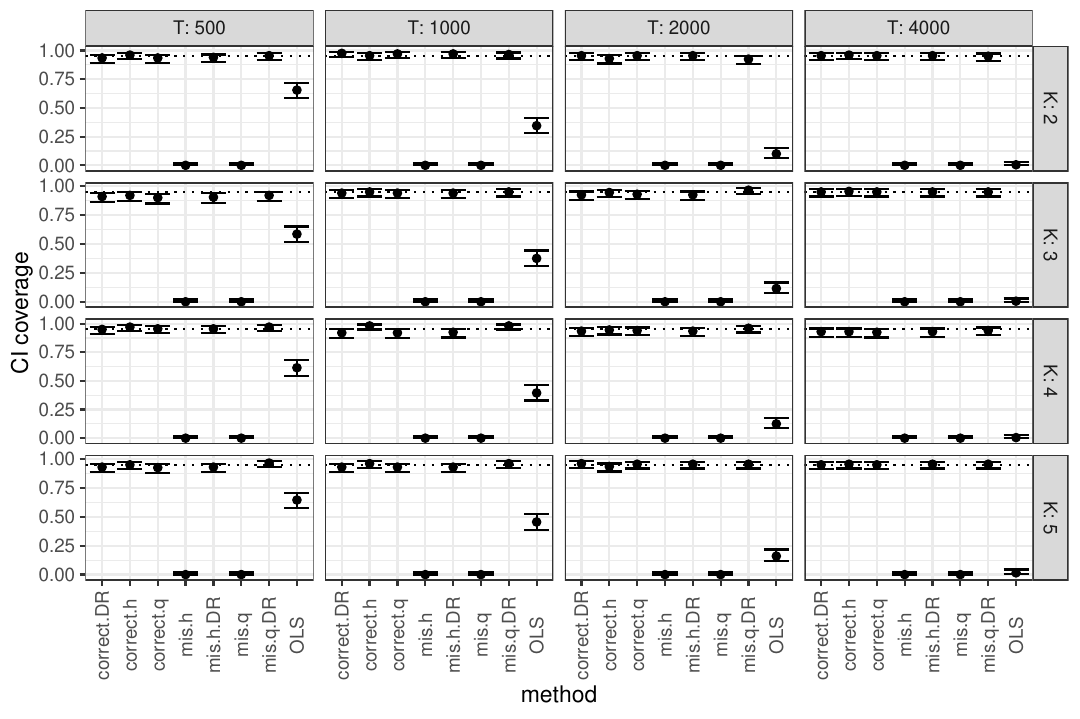}
    \caption{95\%-Wald confidence interval coverage of the average treatment effect for the treated unit for all methods in the simulation under the just-identified setting. The error bars represent 95\%-Wilson score intervals for the confidence interval coverage. The horizontal dotted line is the nominal coverage 95\%. We set the confidence interval to be the empty set when numerical errors occur to obtain a conservative Monte Carlo estimate of confidence interval coverage.}
    \label{fig: sim CI cover}
\end{figure}

\section{Analysis of Brazil all-cause pneumonia hospitalization} \label{sec: vaccine analysis}
	
We study the effect of the introduction of the PCV10 vaccine in Brazil on the number of hospitalizations due to all-cause pneumonia \citep{Bruhn2017}. Monthly hospitalizations and their causes were collected from 2003 to 2013. We focus on the subpopulation of children less than 12 months old in this analysis. PCV10 was introduced to Brazil in January 2010. Following the analysis in \citet{Bradley2017}, we allow two years for the introduction of the vaccine to take effect and set the evaluation period to be 2012--2013.

We view each group of causes of hospitalization as a unit and dismiss units with missing data.
The time series data does not have a clear monotone trend; thus, Condition~\ref{cond: treatment bridge} may be plausible.
To alleviate numerical issues due to non-linearity in GMM and to reduce differences in scaling between units, 
we scale the numbers of hospitalizations due to each group of causes to the unit interval before analysis (see Web Appendix~\ref{sec: implementation of GMM} for more details).
We model the outcome bridge function as a linear function $h_\alpha: w \mapsto (1,w^\top)^\top \alpha$, and select the number of hospitalizations due to the following three groups of causes (units) as the proxies $W_t$: (i) bronchitis, bronchiolitis and unspecified acute lower respiratory infection, (ii) endocrine, nutritional, metabolic disorders, and (iii) malnutrition.
This choice is motivated by \citet{Bradley2017} and the prior knowledge about their relation to pneumonia.

We use all other groups of causes as control units in the supplemental proxies $Z_t$.
We consider three parametrizations of the treatment confounding bridge function with increasing sets of units included in the model: $q_\beta(Z_t) = \exp ( \beta_0 + \sum_{j=1}^J \beta_j Z_j )$ for $J=1,2,3$, where $Z_1$, $Z_2$ and $Z_3$, respectively, are the numbers of hospitalizations due to (i) certain infectious and parasitic diseases, except intestinal, (ii) diseases of blood and blood-forming organs and certain disorders involving the immune mechanism, and (iii) premature delivery and low birth weight. 
The motivation for choosing these supplemental proxies $Z_j$ is to capture the effect of unmeasured confounders $U_t$ on general health issues related to infections and the immune system, which are both associated with the outcome of interest, all-cause pneumonia hospitalizations, among children less than 12 months old.
We refer to the corresponding doubly robust (resp., weighted) estimators as \texttt{DR}, \texttt{DR2} and \texttt{DR3} (resp., \texttt{treatment bridge}, \texttt{treatment bridge2} and \texttt{treatment bridge3}).
As shown in the simulation in Web Appendix~\ref{sec: sim post-select}, choosing $W$ and $Z$ completely based on the data might lead to estimates with increased uncertainty
        due to the possibility of model misspecification. Such estimates may not be informative.  
        When selecting units empirically, we recommend, when possible, using prior knowledge and not solely relying on the data, in which case this issue might be avoided.
Ideally, proxies should be selected \textit{a priori} to avoid potential post-selection inference issues.

The GMM estimator outlined in Section~\ref{sec: estimate ATT} is implemented with the user-specified functions being $g_h: z \mapsto (1,z^\top)^\top$, where we recall that $z$ is the collection of all supplemental proxies, namely hospitalizations due to non-donor causes, 
and $g_q: w \mapsto (1,w^\top)^\top$. We set the  weight matrix to equal the identity matrix.

Besides proximal SC estimators, we also report results for the standard \texttt{OLS} estimator described in Section~\ref{sec: sim}. 
We also consider the regression-based SC method (\texttt{Abadie's SC}) proposed by \cite{Abadie2010} with the same three donors forming the proxy $W_t$ as above.
The point estimates and 95\% confidence intervals for the ATT are presented in Table~\ref{tab: vaccine results}. The trajectories of SCs and the actual number of hospitalizations are presented in Figure~\ref{fig: vaccine main analysis}.

\begin{table}
    \centering
    \begin{tabular}{l|c|c}
        Method & PCV10 & placebo \\
        \hline\hline
        \texttt{Abadie's SC} & 409 & 3092 \\
        \texttt{OLS} & -3533 (-4137, -2930) & 253 (-287, 794) \\
        \texttt{DR} & -2745 (-3559, -1931) & 1192 (501, 1884) \\
        \texttt{DR2} & -3527 (-4663, -2392) & 317 (-407, 1042) \\
        \texttt{DR3} & -3548 (-6036, -1061) & 260 (-246, 767) \\
        \texttt{Outcome bridge} & -3646 (-4693, -2598) & 565 (-224, 1355) \\
        \texttt{Treatment bridge} & -3989 (-4373, -3605) & -532 (-1638, 574) \\
        \texttt{Treatment bridge2} & -3814 (-4941, -2688) & -205 (-1542, 1133) \\
        \texttt{Treatment bridge3} & -3895 (-6401, -1388) & 97 (-502, 695)
    \end{tabular}
    \caption{Estimate of the average treatment effect of PCV10 and placebo treatment on the number of hospitalizations due to all-cause pneumonia among children less than 12 months old in Brazil for various methods with 95\%-Wald confidence intervals. \texttt{Abadie's SC} \citep{Abadie2010} does not readily provide confidence intervals.}
    \label{tab: vaccine results}
\end{table}

\begin{figure}
    \centering
    \begin{subfigure}{0.495\textwidth}
    \includegraphics[width=\textwidth]{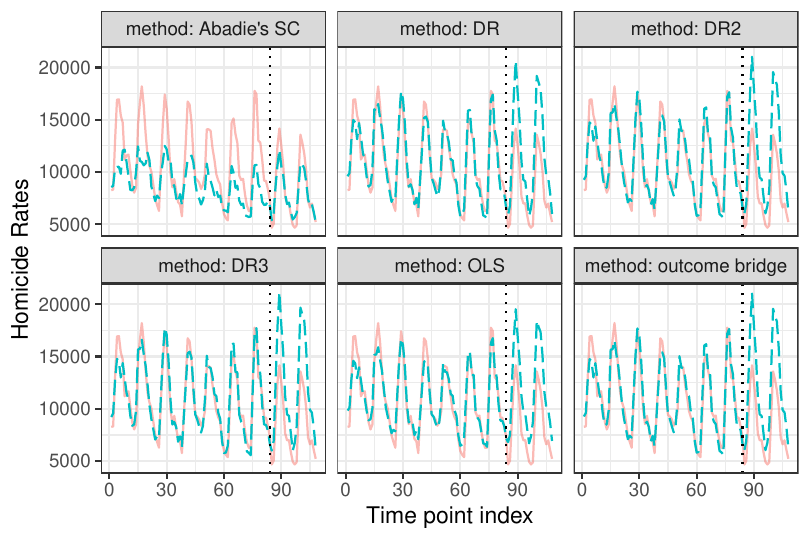}
    \caption{PCV10}
    \label{fig: vaccine main analysis}
    \end{subfigure}
    \hfill
    \begin{subfigure}{0.495\textwidth}
        \includegraphics[width=\textwidth]{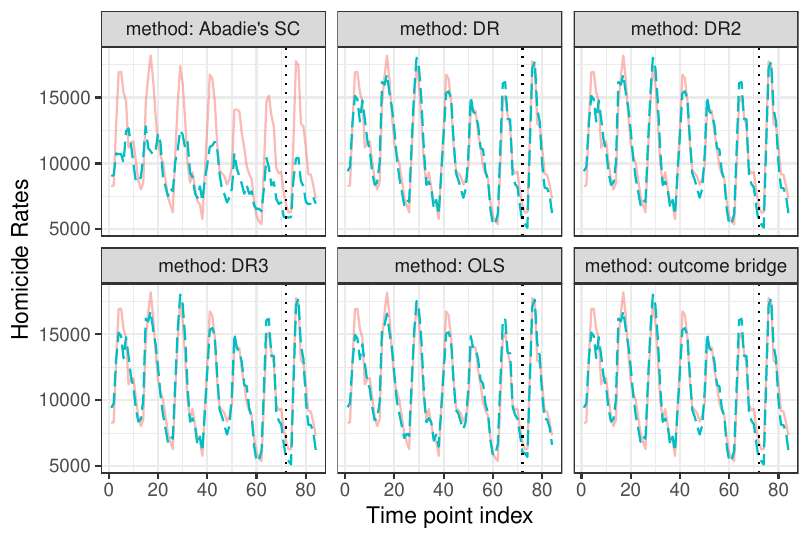}
        \caption{Placebo}
        \label{fig: vaccine placebo analysis}
    \end{subfigure}
    \caption{Trajectories of synthetic controls (green dashed) and the number of hospitalizations due to all-cause pneumonia (red solid) among children less than 12 months old in Brazil for various methods. The vertical line is the last time point (month) before implementing the treatment. Methods based on weighting only (\texttt{treatment bridge}, \texttt{treatment bridge2} and \texttt{treatment bridge3}) do not have synthetic control trajectories.}
    \label{fig: vaccine analysis}
\end{figure}

Although we have scaled all outcomes to fall in the unit interval, \texttt{Abadie's SC} does not output a good pre-treatment fit and its ATT estimate appears unreliable. Because the original scales of the outcomes across units differ substantially, the constraints in \texttt{Abadie's SC} (i) that the intercept vanishes, and (ii) that the weights are non-negative and sum to one, might not be appropriate in this application \citep{Doudchenko2016}; scaling to unit interval might still fail to justify the adequacy of these constraints. All other methods conclude a significant decrease in hospitalizations due to all-cause pneumonia after the introduction of PCV10, as expected. The estimate from \texttt{DR} is somewhat different from other proximal methods involving a treatment confounding bridge function with at least two units, suggesting a model misspecification affecting \texttt{DR} in finite samples. Though the theory suggests that \texttt{DR} should be consistent, when at least one nuisance function is correctly specified, a larger sample size $T$ than this data set might be needed for \texttt{DR} to be close to the truth.

We also conduct a falsification analysis. We consider a hypothetical placebo treatment in January 2009 and estimate its effect in the year 2009.
	The analysis results are presented in Table~\ref{tab: vaccine results} and Figure~\ref{fig: vaccine placebo analysis}. 
	Similarly to the main analysis, the pre-treatment fit from \texttt{Abadie's SC} appears unreliable in this setting; therefore confirming the recommendation that SCs might perform poorly when the pre-treatment fit is suboptimal \citep{Abadie2010}. Our proposed methods perform better and offer alternative justification for SC methods in such settings. In fact, 95\%-confidence intervals from most proximal SC methods cover zero, correctly indicating a non-significant effect due to the placebo treatment; so does \texttt{OLS} in this case. The only exception \texttt{DR} echoes the poor performance from the main analysis. We conclude that, in this application, when the nuisance functions are approximately correctly specified, the proximal methods are among the best.

We also study the effect of Florida's ``stand your ground'' law on homicide rates and the effect of a tax cut in Kansas on economic outcomes (see Web Appendix~\ref{sec: data analysis2}). In these analyses, our proposed doubly robust method outperforms \texttt{Abadie's SC} or \texttt{OLS} in several cases.

\section{Discussion}

To estimate the ATT in panel data settings, classical SC methods often require correct specification of nuisance functions. Our proposed doubly robust methods involve estimating two nuisance functions but allow for misspecification of one, without knowledge about which is misspecified. Our identification results may enable the development of new SC methods, especially nonparametric or semiparametric ones that rely on weaker conditions. Such methods can extend the idea of SCs and proximal causal inference to more general applications.

\section*{Acknowledgement}
This work was supported in part by the U.S. National Science Foundation and National Institutes of Health, and by Analytics at Wharton.
We are grateful to an anonymous reviewer for a counterexample of Condition~\ref{cond: treatment bridge}.

\clearpage

\setcounter{page}{1}
\setcounter{section}{0}
\renewcommand{\thesection}{S\arabic{section}}%
\setcounter{table}{0}
\renewcommand{\thetable}{S\arabic{table}}%
\setcounter{figure}{0}
\renewcommand{\thefigure}{S\arabic{figure}}%
\setcounter{equation}{0}
\renewcommand{\theequation}{S\arabic{equation}}%
\setcounter{condition}{0}
\renewcommand{\thecondition}{S\arabic{condition}}%
\setcounter{lemma}{0}
\renewcommand{\thelemma}{S\arabic{lemma}}%
\setcounter{theorem}{0}
\renewcommand{\thetheorem}{S\arabic{theorem}}%
\setcounter{corollary}{0}
\renewcommand{\thecorollary}{S\arabic{corollary}}%

\begin{center}
    \LARGE Supplementary Materials for ``Doubly Robust Proximal Synthetic Controls'' by Hongxiang Qiu, Xu Shi, Wang Miao, Edgar Dobriban, and Eric Tchetgen Tchetgen
\end{center}

\section{Completeness conditions} \label{sec: completeness}

The following completeness conditions are sufficient to ensure the uniqueness of the solutions to the equations \eqref{eq: identify h} and \eqref{eq: identify q} used to estimate confounding bridge functions $h^*$ and $q^*$. We emphasize that these conditions are not necessary for valid estimation of the average treatment effect for the treated unit, as shown in the simulation in Section~\ref{sec: sim2}.

\begin{condition} \label{cond: outcome complete}
	Let $g: \mathcal{W} \rightarrow \real$ be any square-integrable function. For all time points $t_- \leq T_0$ before treatment, the following two statements are equivalent: (i) $g(W_{t_-})=0$, (ii) $\cexpect \{ g(W_{t_-}) \mid Z_{t_-} \} = 0$.
\end{condition}

\begin{condition} \label{cond: treatment complete}
	Let $g: \mathcal{Z} \rightarrow \real$ be any square-integrable function. For all time points $t_- \leq T_0$ before treatment, the following two statements are equivalent: (i) $g(Z_{t_-})=0$, (ii) $\cexpect \{ g(Z_{t_-}) \mid W_{t_-} \} = 0$.
\end{condition}

We next discuss Condition~\ref{cond: outcome complete} in more detail; Condition~\ref{cond: treatment complete} is similar. Intuitively, Condition~\ref{cond: outcome complete} requires that $Z_t$ captures all variation in $W_t$, because any transformation $g(W_t)$ of $W_t$ whose regression on $Z_t$ is zero must be zero.
Equivalent conditions to Condition~\ref{cond: outcome complete} have been found. Lemma~2.1 in \cite{Severini2006} and Proposition~1 in \cite{Andrews2017} imply that Condition~\ref{cond: outcome complete} is equivalent to the following: 
for every nonconstant random variable $a(W_{t_-})$,
there exists a random variable $b(Z_{t_-})$ that is correlated with $a(W_{t_-})$. According to this equivalent condition, it is evident that the completeness condition \ref{cond: outcome complete} can be interpreted as that any variation in $W_{t_-}$ can be captured by a variation in $Z_{t_-}$.
We refer readers to Supplemental Section~D in \cite{Ying2021} for more details on completeness conditions.

\cite{Shi2021} recommended measuring a rich set of proxies $Z_t$ to make this condition plausible.
Similarly, to make Condition~\ref{cond: treatment complete} plausible, we recommend measuring a rich set of proxies $W_t$.
In the common setting where both $W_t$ and $Z_t$ are control units' outcomes, we thus recommend measuring contemporary outcomes in many control units and using many of them as proxies, either in $W_t$ or $Z_t$, to make Conditions~\ref{cond: outcome complete} and \ref{cond: treatment complete} plausible.

\section{Technical details and formal results for generalized method of moments} \label{sec: technical conditions}

We first describe our estimation method based on GMM.
Let
$g_h: \mathcal{Z} \rightarrow \real^{d_\alpha'}$, with
$d_\alpha' \geq d_\alpha$, and 
$g_q: \mathcal{W} \rightarrow \real^{d_\beta'}$,
with $d_\beta' \geq d_\beta$, be two user-specified functions.
In practice, these functions can be
chosen from known classes of basis 
functions used in non-parametric estimation. 
For instance, they can be a truncated polynomial basis \citep{Hastie2009}: with the support $\mathcal{Z}$ of $Z_t$ being $\real^{d_Z}$, $g_h(z)$ may consist of monomials up to order $K$, namely $(\prod_{k=1}^K z_{i_k}^{j_k})$ with $i_k \in \{1,\ldots,d_Z\}$, $j_k$ being non-negative integers, and $\sum_{k=1}^\alpha j_k \leq K$, such that the dimension of the output of $g_h$ is at least $\alpha$.
Let $\alpha \in \mathcal{A}$, $\beta \in \mathcal{B}$, $\lambda \in \Lambda$, $\psi \in \Psi \subseteq \real^{d_{\beta'}}$, and $\psi_- \in \Psi_- \subseteq \real$ be GMM parameters, $\theta = (\alpha,\beta,\lambda,\psi,\psi_-)$ be the collection of parameters, and $\Theta = \mathcal{A} \times \mathcal{B} \times \Lambda \times \Psi \times \Psi_-$ be the parameter space. Define the moment equation $G_t:\Theta \to \real^{2d_\beta'+3}$ as
the vertical stacking of
\begin{align*}
	& G_{t1}: \theta \mapsto \ind(t \leq T_0) \left[ \{ Y_t-h_\alpha(W_t) \} g_h(Z_t) \right], \quad G_{t2}: \theta \mapsto \ind(t > T_0) \left\{ \psi - g_q(W_t) \right\} \\
	& G_{t3}: \theta \mapsto \ind(t \leq T_0) \left\{ q_\beta(Z_t) g_q(W_t) - \psi \right\}, \quad G_{t4}: \theta \mapsto
	\ind(t>T_0) \left[ \phi_\lambda(t) - \{ Y_t - h_\alpha(W_t) \} + \psi_- \right] \\
	& \text{and} \quad G_{t5}: \theta \mapsto \ind(t \leq T_0) \left[ \psi_- - q_\beta(Z_t) \{ Y_t - h_\alpha(W_t) \} \right].
\end{align*}
Let $\Omega_T$ be a user-specified symmetric positive semi-definite $(d_\alpha'+2d_\beta'+2) \times (d_\alpha'+2d_\beta'+2)$ matrix, for example, the identity. Consider the GMM estimator
\begin{align}
    &\hat{\theta}_T = \left( \hat{\alpha}_T,\hat{\beta}_T,\hat{\lambda}_T,\hat{\psi}_T,,\hat{\psi}_{-,T} \right) = \argmin_{\theta \in \Theta} \left\{ \frac{1}{T} \sum_{t=1}^T G_t(\theta)\right\}^\top \Omega_T \left\{ \frac{1}{T} \sum_{t=1}^T G_t(\theta) \right\}. \label{eq: thetahat}
\end{align}
We use $\phi_{\hat{\lambda}_T}$ as the doubly robust estimator of the ATT $\phi^*$.

\begin{remark}
    Since \eqref{eq: DR identify ATT} and \eqref{eq: implied moment equation} involve expectations over different time periods, directly using these equations for estimation leads to moment equations that might not have mean zero for each time period $t$, but only when averaged over all time periods. 
This may invalidate the corresponding GMM procedure. 
To circumvent this, 
we introduce two centering parameters $\psi$ and $\psi_-$ to decouple different time periods in \eqref{eq: DR identify ATT} and \eqref{eq: implied moment equation}. 
The second moment equation in \eqref{eq: implied moment equation} is decoupled via the introduction of $\psi$ and represented by the second and third elements in $G_t$. Further, \eqref{eq: DR identify ATT} is decoupled via the introduction of $\psi_-$ and split into the fourth and fifth elements in $G_t$.
\end{remark}

We next present the additional technical conditions we rely on for the consistency and asymptotic normality results for the generalized method of moments estimators. We put conditions and their explanation under stationarity of confounders (Condition~\ref{cond: posttreatment identical distribution}) in the text and those without stationarity in square brackets.

\begin{condition} \label{cond: positive weight matrix}
    For the doubly robust method, $\Omega_T$ [$\undertilde{\Omega}_T$] converges in probability to $\Omega$ [$\undertilde{\Omega}$] for some fixed symmetric positive-definite matrix $\Omega$ [$\undertilde{\Omega}$]; for the weighting method, $\Omega^q_T$ [$\undertilde{\Omega}^q_T$] converges in probability to $\Omega^q$ [$\undertilde{\Omega}^q$] for some fixed symmetric positive-definite matrix $\Omega^q$ [$\undertilde{\Omega}^q$].
\end{condition}

Condition~\ref{cond: positive weight matrix} can be easily satisfied by taking $\Omega_T$ or $\Omega^q_T$ [$\undertilde{\Omega}_T$ or $\undertilde{\Omega}^q_T$] as a fixed positive-definite matrix.

\begin{condition} \label{cond: param compact}
For the doubly robust method, $\Theta$ [$\undertilde{\Theta}$] is compact; for the weighting method, $\Theta^q$ [$\undertilde{\Theta}^q$] is compact.
\end{condition}

Condition~\ref{cond: param compact} is a regularity compactness condition to establish theoretical results. In computations, the parameter space is often taken to be unrestricted.

\begin{condition} \label{cond: reg conditions for estimating function}
For the doubly robust method, $\cexpect\{G_t(\theta)\}$ [$\lim_{T \rightarrow \infty} \sum_{t=1}^T \cexpect\{\undertilde{G}_{T,t}(\undertilde{\theta})\}/T$] exists and is finite for all $\theta \in \Theta$ [$\undertilde{\theta} \in \undertilde{\Theta}$]; $G_t$ [$\undertilde{\theta} \mapsto \lim_{T \rightarrow \infty} \sum_{t=1}^T \cexpect\{\undertilde{G}_{T,t}(\undertilde{\theta})\}/T$] is continuous; for the weighting method, $\cexpect\{G^q_t(\theta^q)\}$ [$\lim_{T \rightarrow \infty} \sum_{t=1}^T \cexpect\{\undertilde{G}^q_{T,t}(\undertilde{\theta}^q)\}/T$] exists and is finite for all $\theta^q \in \Theta^q$ [$\undertilde{\theta}^q \in \undertilde{\Theta}^q$]; $G^q_t$ [$\undertilde{\theta}^q \mapsto \lim_{T \rightarrow \infty} \sum_{t=1}^T \cexpect\{\undertilde{G}^q_{T,t}(\undertilde{\theta}^q)\}/T$] is continuous.
\end{condition}

Condition~\ref{cond: reg conditions for estimating function} is also a regularity condition, which ensures that the population expectation of the moment equation exists.

\begin{condition} \label{cond: full rank matrix}
For the doubly robust method, $\rank(R)=d_\alpha+d_\beta+1$ [$\rank(\undertilde{R})=d_\alpha+d_\beta+1$]; for the weighting method, $\rank(R_q)=d_\beta+1$ [$\rank(\undertilde{R}_q)=d_\beta+1$].
\end{condition}

Condition~\ref{cond: full rank matrix} is a full-rank condition to ensure that the parameter in the  is locally identified.

We next list uniform weak laws of large numbers and central limit theorems as Conditions~\ref{cond: uniform law of large numbers}, \ref{cond: uniform law of large numbers derivative} and \ref{cond: CLT weighted moment}.
In turn, these hold under various sets of assumptions.

\begin{condition} \label{cond: uniform law of large numbers}
For the doubly robust method, with 
$\bar V_{T'}(\theta) = \sum_{t=1}^{T'} \cexpect\{\undertilde{G}_{T',t}(\theta)\}/T'$,
$$\sup_{\theta \in \Theta} \left\| \frac{1}{T} \sum_{t=1}^T G_t(\theta) - \lim_{T' \rightarrow \infty} \frac{1}{T'} \sum_{t=1}^{T'} \cexpect\{G_t(\theta)\} \right\|, \qquad  
\left[\sup_{\undertilde{\theta} \in \undertilde{\Theta}} \left\| \frac{1}{T} \sum_{t=1}^T G_t(\theta) - \lim_{T' \rightarrow \infty} \bar V_{T'}(\undertilde{\theta}) \right\|\right] $$
converge to zero in probability. 
For the weighting method,
$$\sup_{\theta^q \in \Theta^q} \left\| \frac{1}{T} \sum_{t=1}^T G^q_t(\theta^q) - \lim_{T' \rightarrow \infty} \frac{1}{T'} \sum_{t=1}^{T'} \cexpect\{G^q_t(\theta^q)\} \right\|,
$$
$$\left[ \sup_{\undertilde{\theta}^q \in \undertilde{\Theta}^q} \left\| \frac{1}{T} \sum_{t=1}^T \undertilde{G}^q_{T,t}(\undertilde{\theta}^q) - \lim_{T' \rightarrow \infty} \frac{1}{T'} \sum_{t=1}^{T'} \cexpect\{\undertilde{G}^q_{T',t}(\undertilde{\theta}^q)\} \right\| \right]$$
converge to zero in probability.
\end{condition}

\begin{condition} \label{cond: uniform law of large numbers derivative}
For the doubly robust method, $\theta_\infty$ [$\undertilde{\theta}_\infty$] is in the interior of $\Theta$ [$\undertilde{\Theta}$], $G_t$ is continuously differentiable [$\undertilde{G}_{T,t}$ is continuously differentiable and the derivative of $\sum_{t=1}^T \undertilde{G}_{T,t}/T$ is uniformly bounded over all $T \in \{1,2,\ldots\}$], and
\begin{align*}
    & \sup_{\theta \in \Theta} \left\| \frac{1}{T} \sum_{t=1}^T \nabla_{\theta} G_t(\theta) - \lim_{T' \rightarrow \infty} \frac{1}{T'} \sum_{t=1}^{T'} \cexpect \left\{ \nabla_{\theta} G_t(\theta) \right\} \right\| \\
    & \left[ \sup_{\undertilde{\theta} \in \undertilde{\Theta}} \left\| \frac{1}{T} \sum_{t=1}^T \nabla_{\undertilde{\theta}} \undertilde{G}_{T,t}(\undertilde{\theta}) - \lim_{T' \rightarrow \infty} \frac{1}{T'} \sum_{t=1}^{T'} \cexpect \left\{ \nabla_{\undertilde{\theta}} \undertilde{G}_{T',t}(\undertilde{\theta}) \right\} \right\| \right]
\end{align*}
converges to zero in probability; for the weighting method, $\theta^q_\infty$ [$\undertilde{\theta}^q_\infty$] lies in the interior of $\Theta^q$ is continuously differentiable [$\undertilde{G}^q_{T,t}$ is continuously differentiable and the derivative of $\sum_{t=1}^T \undertilde{G}^q_{T,t}/T$ is uniformly bounded over all $T \in \{1,2,\ldots\}$], and
\begin{align*}
    & \sup_{\theta^q \in \Theta^q} \left\| \frac{1}{T} \sum_{t=1}^T \nabla_{\theta^q} G^q_t(\theta^q) - \lim_{T' \rightarrow \infty} \frac{1}{T'} \sum_{t=1}^{T'} \cexpect \left\{ \nabla_{\theta^q} G^q_t(\theta^q) \right\} \right\| \\
    & \left[ \sup_{\undertilde{\theta}^q \in \undertilde{\Theta}^q} \left\| \frac{1}{T} \sum_{t=1}^T \nabla_{\undertilde{\theta}^q} \undertilde{G}^q_{T,t}(\undertilde{\theta}^q) - \lim_{T' \rightarrow \infty} \frac{1}{T'} \sum_{t=1}^{T'} \cexpect \left\{ \nabla_{\undertilde{\theta}^q} \undertilde{G}^q_{T',t}(\undertilde{\theta}^q) \right\} \right\| \right].
\end{align*}
converges to zero in probability.
\end{condition}

\begin{condition} \label{cond: CLT weighted moment}
For the doubly robust method,
$$T^{-1/2} \sum_{t=1}^T R^\top \Omega G_t(\theta_\infty) \qquad \left[ T^{-1/2} \sum_{t=1}^T \undertilde{R}^\top \undertilde{\Omega} \undertilde{G}_{T,t}(\undertilde{\theta}_\infty) \right]$$
is asymptotically $\mathrm{N}(0,B)$ [$\mathrm{N}(0,\undertilde{B})$]; for the weighting method,
$$T^{-1/2} \sum_{t=1}^T R_q^\top \Omega^q G^q_t(\theta^q_\infty) \qquad \left[ T^{-1/2} \sum_{t=1}^T \undertilde{R}_q^\top \undertilde{\Omega}^q \undertilde{G}^q_{T,t}(\undertilde{\theta}^q_\infty) \right]$$
is asymptotically $\mathrm{N}(0,B_q)$ [$\mathrm{N}(0,\undertilde{B}_q)$].
\end{condition}

In Condition~\ref{cond: uniform law of large numbers derivative}, the assumption that the probability limit of the estimator lies in the interior of the parameter space and the differentiability of the moment equation are usually satisfied under stationarity (Condition~\ref{cond: posttreatment identical distribution}). Without stationarity, we require a slightly stronger version of differentiability, namely differentiability of the average moment equation with uniformly bounded derivative. This difference appears a necessary (and low) price for non-stationairty to establish asymptotic normality in the generalized method of moments.

The uniform weak laws of large numbers, Conditions~\ref{cond: uniform law of large numbers} and \ref{cond: uniform law of large numbers derivative}, can be verified by using, for example, Theorem~5.1 and 5.2 in \cite{Potscher1997}, Theorems~1 and 2 in \cite{Andrews1988}, or Corollary~1.4 in \cite{VanHandel2013}, among others. For example, if the underlying process is strictly stationary and the involved function class is uniformly bounded with a finite bracketing number for every size of the brackets, then Corollary~1.4 in \cite{VanHandel2013} implies Conditions~\ref{cond: uniform law of large numbers} and \ref{cond: uniform law of large numbers derivative}. 
Uniform laws of large numbers can also be obtained without stationarity assumptions for strongly mixing (e.g., $\alpha$-mixing) processes, or $\phi$-mixing processes \citep[see, e.g., Chapter~5 in][]{Potscher1997}.

The central limit theorem, Condition~\ref{cond: CLT weighted moment}, can be verified by using, for example, results in \cite{Rosenblatt1956}, Theorems~5 and 6 in \cite{Philipp1969}, Corollary~2.11 in \cite{McLeish1977}, Corollary~1 in \cite{Herrndorf1984}, Theorem~3.6 in \cite{Davidson1992}, Theorem~2.1 and Corollary~2.1 in \cite{Arcones1994}, Theorems~10.1 and 10.2 in \cite{Potscher1997}, or Theorem~3.23 in \cite{Dehling2002}, Theorem~1.1 and 6.1 in \cite{Bradley2017}, among others. For example, define $\| A \|_q = (\cexpect |A|^q)^{1/q}$ for any random variable $A$. For the doubly robust method, suppose that the underlying process is strongly mixing with coefficients $\alpha(k)$, $k\ge 1$ \citep{Rosenblatt1956}. If, for some $s \in (2,\infty]$ and with $r=2/s$, $\sum_{k=1}^\infty \alpha(k)^{1-r} < \infty$ and $\limsup_{t \rightarrow \infty} \| R^\top \Omega \nabla_{\theta} G_t(\theta) |_{\theta=\theta_\infty} \|_s < \infty$, then Condition~\ref{cond: CLT weighted moment} holds by Corollary~1 in \cite{Herrndorf1984}.

\section{Relaxing stationarity of confounders} \label{sec: nonstationary}

\subsection{Estimand and identification} \label{sec: identify ATT treatment nonstationary}

As mentioned in Section~\ref{sec: identify ATT}, stationarity of confounders, namely Condition~\ref{cond: posttreatment identical distribution}, may be too strong in certain conditions.
In this section, we drop this condition and present more general identification results. 
We will generally add a tilde symbol under various symbols for objects similar to those from Section~\ref{sec: identify ATT}.

Without Condition~\ref{cond: posttreatment identical distribution}, it is challenging to identify the average treatment effect for the treated unit $\phi^*_{t_+}$ at each post-treatment time period $t_+$. We therefore consider another causal estimand, the time-averaged average treatment effect for the treated unit. 
We first consider the case where $T$ and $T_0$ are fixed. Let $\ell_T: t_+ \mapsto \ell_T(t_+) \geq 0$ be a user-specified importance weighting function that satisfies $\sum_{t_+=T_0+1}^T \ell_T(t_+) = 1$, and define $\undertilde{\phi}^*_T$ to be $\sum_{t_+=T_0+1}^T \ell_T(t_+) \phi^*(t_+) = \sum_{t_+=T_0+1}^T \ell_T(t_+) \cexpect \{ Y_{t_+}(1) - Y_{t_+}(0) \}$. Thus, $\undertilde{\phi}^*_T$ is a weighted average of time-specific ATTs $\phi^*_{t_+}$ over post-treatment time periods with averaging weights given by $\ell_T(\cdot)$. We refer to $\ell_T$ as an importance weight because it encodes the importance of each post-treatment time period in the average treatment effect for the treated unit $\undertilde{\phi}^*_T$ averaged over time. In the notation, we suppress the dependence of $\undertilde{\phi}^*_T$ on $\ell_T$ and $T_0$ for conciseness.

Our approach to this problem via weighting is similar to the case with stationarity in Section~\ref{sec: estimate ATT}. We again use a treatment bridge function to capture the covariate shift of  confounders between pre- and post-treatment periods and subsequently use this function to weight pre-treatment observations to impute post-treatment outcomes. A key difference is that the treatment bridge function in this case also needs to incorporate the importance weight $\ell_T(\cdot)$ and therefore is a proxy of a Radon-Nikodym derivative between two mixtures of distributions.
Specifically, we rely on the following condition on the existence of a treatment confounding bridge function $\undertilde{q}^*$.

\begin{condition} \label{cond: treatment bridge nonstationary}
    There exists a function $\undertilde{q}^*_T : \mathcal{Z} \rightarrow \real$ such that, for any square-integrable function $g: \mathcal{U} \rightarrow \real$:
    \begin{equation}
        \frac{1}{T_0} \sum_{t_-=1}^{T_0} \cexpect \{ \undertilde{q}^*_T(Z_{t_-}) g(U_{t_-}) \} = \sum_{t_+=T_0+1}^T \ell_T(t_+) \cexpect \{ g(U_{t_+}) \}. \label{eq: define q nonstationary}
    \end{equation}
\end{condition}

We have dropped the dependence of the treatment confounding bridge function $\undertilde{q}^*_T$ on the importance weighting function $\ell_T$ from the notation for conciseness. Condition~\ref{cond: treatment bridge nonstationary}, namely \eqref{eq: define q nonstationary}, might appear non-intuitive. We next rewrite \eqref{eq: define q nonstationary} in an integral form and argue that $\undertilde{q}^*$ can be interpreted as a proxy of a Radon-Nikodym derivative, similarly to $q^*$ in Condition~\ref{cond: treatment bridge}. Let $\mathscr{U}_T$ be the uniform law over pre-treatment time periods $\{1,\ldots,T_0\}$ and $\mathscr{L}_T$ be the law over post-treatment time periods $\{T_0+1,T_0+2,\ldots,T\}$ defined by $\mathscr{L}_T(\{t_+\}) = \ell_T(t_+)$ for $t_+ > T_0$. With these definitions, \eqref{eq: define q nonstationary} is equivalent to
$$\int_{\{1,\ldots,T\}} \cexpect \{ \undertilde{q}^*_T(Z_{t_-}) g(U_{t_-}) \} \mathscr{U}_T(\intd t_-) = \int_{\{T_0+1,\ldots,T\}} \cexpect \{ g(U_{t_+}) \} \mathscr{L}_T(\intd t_+),$$
which is also equivalent to
$$\int_{\{1,\ldots,T\}} \cexpect[ \cexpect \{ \undertilde{q}^*_T(Z_{t_-}) \mid U_{t_-} \} g(U_{t_-}) ] \mathscr{U}_T(\intd t_-) = \int_{\{T_0+1,\ldots,T\}} \cexpect \{ g(U_{t_+}) \} \mathscr{L}_T(\intd t_+).$$
Since $g$ is an arbitrary square-integrable function, 
clearly $u \mapsto \cexpect \{ \undertilde{q}^*_T(Z_{t_-}) \mid U_{t_-}=u \}$ may be viewed as a Radon-Nikodym derivative, of the law of $U_{t_+}$ ($t_+ \sim \mathscr{L}_T$),
with respect to the law of $U_{t_-}$ ($t_- \sim \mathscr{U}_T$), namely
$$u \mapsto \frac{\int_{\{T_0+1,\ldots,T\}} f_{U_{t_+}}(u) \mathscr{L}_T(\intd t_+)}{\int_{\{1,\ldots,T_0\}} f_{U_{t_-}}(u) \mathscr{U}_T(\intd t_-)},$$
where we have informally used $f_A$ to denote the density of a random variable $A$.

The next condition is a completeness condition that is similar to Condition~\ref{cond: treatment complete}.
\begin{condition} \label{cond: treatment complete nonstationary}
	Let $g: \mathcal{Z} \rightarrow \real$ be any square-integrable function. The following two statements are equivalent: (i) $g(Z_{t_-})=0$ for all $t_- \leq T_0$, (ii) $\sum_{t_-=1}^{T_0} \cexpect \{ g(Z_{t_-})$ $f(W_{t_-}) \}/T_0$ $= 0$ for any square-integrable function $f$.
\end{condition}
The second statement of Condition~\ref{cond: treatment complete nonstationary} 
is %
a generalization of
the second statement of Condition~\ref{cond: treatment complete}.
Indeed, the latter is equivalent to the following: $\cexpect \{ g(Z_{t_-}) f(W_{t_-}) \} = 0$ for any square-integrable function $f$.

With the above two conditions, we have the following identification result for $\undertilde{\phi}^*_T$ without stationarity for $U_{t_+}$ in Condition~\ref{cond: posttreatment identical distribution}.

\begin{theorem}[Identification of average treatment effect with $\undertilde{q}^*_T$] \label{thm: identify ATT treatment nonstationary}
	Let $f:\real \rightarrow \real$ be any square-integrable function. Under Conditions~\ref{cond: proxy independence}, \ref{cond: covariate shift}, and \ref{cond: treatment bridge nonstationary}, it holds that
	\begin{equation}
	    \sum_{t_+=T_0+1}^T \ell_T(t_+) \cexpect \{ f(Y_{t_+}(0)) \} = \frac{1}{T_0} \sum_{t_-=1}^{T_0} \cexpect \{ \undertilde{q}^*_T(Z_{t_-}) f(Y_{t_-}) \}.
	    \label{eq: identify EfY0 treatment nonstationary}
	\end{equation}
	In particular, taking $f$ to be the identity function, it holds that
	\begin{equation}
        \sum_{t_+=T_0+1}^T \ell_T(t_+) \cexpect \{ Y_{t_+}(0) \} = \frac{1}{T_0} \sum_{t_-=1}^{T_0} \cexpect \{ \undertilde{q}^*_T(Z_{t_-}) Y_{t_-} \}.
		\label{eq: identify EY0 treatment nonstationary}
	\end{equation}
	and thus
	$\undertilde{\phi}^*_T$
	is identified as
	$\undertilde{\phi}^*_T=\sum_{t_+=T_0+1}^T \cexpect \{ Y_{t_+} \}- \sum_{t_-=1}^{T_0} \cexpect \{ q^*(Z_{t_-}) Y_{t_-} \}/T_0$. In addition, the treatment confounding bridge function $\undertilde{q}^*_T$ is a solution to
	\begin{equation}
		\frac{1}{T_0} \sum_{t_-=1}^{T_0} \cexpect \{ \undertilde{q}(Z_{t_-}) g(W_{t_-}) \} = \sum_{t_+=T_0+1}^T \ell_T(t_+) \cexpect \{ g(W_{t_+}) \} \quad \text{for all square-integrable $g$}
		\label{eq: identify q nonstationary}
	\end{equation}
	in
    $\undertilde{q}: \mathcal{Z} \rightarrow \real$.
    Further, under Condition~\ref{cond: treatment complete nonstationary}, \eqref{eq: identify q nonstationary} has a unique solution almost surely.
\end{theorem}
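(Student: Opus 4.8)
The plan is to reduce every expectation appearing in \eqref{eq: identify EfY0 treatment nonstationary} to an expectation of a single function of the confounder $U_t$ that, crucially, does not depend on the time period, so that the defining property \eqref{eq: define q nonstationary} of $\undertilde{q}^*_T$ (Condition~\ref{cond: treatment bridge nonstationary}) can be invoked to exchange the pre-treatment average for the post-treatment average. I would first define $g(u) := \cexpect[f(Y_t(0)) \mid U_t = u]$; by Condition~\ref{cond: covariate shift} the conditional law of $Y_t(0)$ given $U_t$ is the same at every $t$, so $g$ is well defined independently of $t$, and conditional Jensen gives $\cexpect[g(U_t)^2] \le \cexpect[f(Y_t(0))^2] < \infty$, so $g$ is square-integrable and hence admissible in \eqref{eq: define q nonstationary}.

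Starting from the right-hand side of \eqref{eq: identify EfY0 treatment nonstationary}, I would use $Y_{t_-} = Y_{t_-}(0)$ for $t_- \le T_0$ and condition on $U_{t_-}$. Condition~\ref{cond: proxy independence} gives $Z_{t_-} \independent Y_{t_-}(0) \mid U_{t_-}$, so the conditional expectation of the product $\undertilde{q}^*_T(Z_{t_-}) f(Y_{t_-}(0))$ factorizes, and the tower property yields $\cexpect[\undertilde{q}^*_T(Z_{t_-}) f(Y_{t_-})] = \cexpect[\undertilde{q}^*_T(Z_{t_-}) g(U_{t_-})]$. Averaging over $t_- = 1,\dots,T_0$ and applying \eqref{eq: define q nonstationary} to this $g$ turns the pre-treatment average into $\sum_{t_+=T_0+1}^T \ell_T(t_+)\,\cexpect[g(U_{t_+})]$. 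By Condition~\ref{cond: covariate shift} again, $g(U_{t_+}) = \cexpect[f(Y_{t_+}(0)) \mid U_{t_+}]$, so $\cexpect[g(U_{t_+})] = \cexpect[f(Y_{t_+}(0))]$, which is the left-hand side. Specializing $f$ to the identity yields \eqref{eq: identify EY0 treatment nonstationary}, and since $Y_{t_+} = Y_{t_+}(1)$ for $t_+ > T_0$, substituting into the definition of $\undertilde{\phi}^*_T$ identifies it as claimed.

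For the characterization \eqref{eq: identify q nonstationary}, the argument is identical with the observable proxy $W_t$ playing the role of $Y_t(0)$: Condition~\ref{cond: proxy independence} also gives $Z_{t_-} \independent W_{t_-} \mid U_{t_-}$, and Condition~\ref{cond: covariate shift} makes $\tilde{g}(u) := \cexpect[g(W_t) \mid U_t = u]$ time-invariant and square-integrable, so the same chain (factorize, apply \eqref{eq: define q nonstationary} to $\tilde{g}$, revert by the tower property) shows that $\undertilde{q}^*_T$ solves \eqref{eq: identify q nonstationary} for every square-integrable $g: \mathcal{W} \to \real$. For uniqueness, if $\undertilde{q}_1$ and $\undertilde{q}_2$ both solve \eqref{eq: identify q nonstationary}, subtracting the two equations shows their difference $\delta := \undertilde{q}_1 - \undertilde{q}_2$ satisfies $\frac{1}{T_0}\sum_{t_-=1}^{T_0} \cexpect[\delta(Z_{t_-}) g(W_{t_-})] = 0$ for every square-integrable $g$; this is exactly statement (ii) of Condition~\ref{cond: treatment complete nonstationary} applied to $\delta$, whose statement (i) then forces $\delta(Z_{t_-}) = 0$ a.s. for all $t_- \le T_0$.

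The proof is mostly disciplined bookkeeping of the tower property, and the only genuinely load-bearing observation is that both the pre- and post-treatment sides collapse onto the \emph{same} time-invariant function $g$ (resp.\ $\tilde{g}$) of $U_t$; this is where Condition~\ref{cond: covariate shift} is essential, and it is the step I would state most carefully, since it is what legitimizes reading the bridge-function identity in \eqref{eq: define q nonstationary}, posed purely in terms of $U_t$, back onto the observables. The remaining subtlety is purely technical, namely verifying the square-integrability needed to feed $g$ and $\tilde{g}$ into \eqref{eq: define q nonstationary}, which conditional Jensen handles. I anticipate no serious obstacle beyond keeping the pre/post index swaps and the conventions $Y_{t_-} = Y_{t_-}(0)$, $Y_{t_+} = Y_{t_+}(1)$ straight.
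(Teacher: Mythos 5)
Your proposal is correct and takes essentially the same route as the paper's proof: the tower property together with the time-invariance from Condition~\ref{cond: covariate shift} collapses both sides onto the same $t$-free function of $U_t$, the defining property \eqref{eq: define q nonstationary} of $\undertilde{q}^*_T$ swaps the pre- and post-treatment averages, the same factorization via Condition~\ref{cond: proxy independence} handles both $f(Y_{t_-})$ and $g(W_{t_-})$, and uniqueness follows by subtracting two solutions and invoking Condition~\ref{cond: treatment complete nonstationary}. The only difference is cosmetic---you traverse the chain from the pre-treatment side while the paper starts from the post-treatment side---plus your explicit conditional-Jensen verification that $g$ and $\tilde{g}$ are square-integrable, a step the paper leaves implicit.
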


We also have the following doubly robust identification result similar to Theorem~\ref{thm: identify ATT DR}.

\begin{theorem}[Doubly robust identification with $h^*$ and $\undertilde{q}^*_T$] \label{thm: identify ATT DR nonstationary}
	Let
	$h: \mathcal{W} \rightarrow \real$
	and
	$\undertilde{q}: \mathcal{Z} \rightarrow \real$
	be any square-integrable functions. Under Conditions~\ref{cond: proxy independence} and \ref{cond: covariate shift}, 
	 if either (i) Condition~\ref{cond: outcome bridge} holds and $h=h^*$,
	or (ii) Condition~\ref{cond: treatment bridge nonstationary} holds and $\undertilde{q}=\undertilde{q}^*_T$,
	then 
	\begin{align}
	        &\sum_{t_+=T_0+1}^T \ell_T(t_+) \cexpect[Y_{t_+}(0)] = \frac{1}{T_0} \sum_{t_-=1}^{T_0} \cexpect \left[ \undertilde{q}(Z_{t_-}) \{Y_{t_-} - h(W_{t_-})\} \right] + \sum_{t_+=T_0+1}^T \ell_T(t_+) \cexpect \{ h(W_{t_+}) \}, \nonumber\\
	        &\undertilde{\phi}^*_T = \sum_{t_+=T_0+1}^T \ell_T(t_+) \cexpect \left\{ Y_{t_+} - h(W_{t_+}) \right\} - \frac{1}{T_0} \sum_{t_-=1}^{T_0} \cexpect \left[ \undertilde{q}(Z_{t_-}) \{Y_{t_-} - h(W_{t_-})\} \right]. \label{eq: DR identify ATT nonstationary}
	\end{align}
\end{theorem}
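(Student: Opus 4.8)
The plan is to prove the first identity in \eqref{eq: DR identify ATT nonstationary}; the stated expression for $\undertilde{\phi}^*_T$ then follows at once, since $Y_{t_+} = Y_{t_+}(1)$ for $t_+ > T_0$ gives $\undertilde{\phi}^*_T = \sum_{t_+=T_0+1}^T \ell_T(t_+)\cexpect[Y_{t_+}] - \sum_{t_+=T_0+1}^T \ell_T(t_+)\cexpect[Y_{t_+}(0)]$. Abbreviating the right-hand side of the first line of \eqref{eq: DR identify ATT nonstationary} by $R := \frac{1}{T_0}\sum_{t_-=1}^{T_0} \cexpect[\undertilde{q}(Z_{t_-})\{Y_{t_-} - h(W_{t_-})\}] + \sum_{t_+=T_0+1}^T \ell_T(t_+)\cexpect[h(W_{t_+})]$, the goal in each of the two scenarios is to show $R = \sum_{t_+=T_0+1}^T \ell_T(t_+)\cexpect[Y_{t_+}(0)]$. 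I would treat the two cases separately, in each case arguing that the contribution of the possibly-misspecified function drops out; this is the standard double-robustness bookkeeping.

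For case (i), where Condition~\ref{cond: outcome bridge} holds and $h = h^*$, I would show that each summand $\cexpect[\undertilde{q}(Z_{t_-})\{Y_{t_-} - h^*(W_{t_-})\}]$ vanishes. Since $t_- \leq T_0$, the observed outcome satisfies $Y_{t_-} = Y_{t_-}(0)$ by definition. Conditioning on $U_{t_-}$ and invoking Condition~\ref{cond: proxy independence}, which gives $Z_{t_-} \independent (Y_{t_-},W_{t_-}) \mid U_{t_-}$, the inner conditional expectation factorizes as $\cexpect[\undertilde{q}(Z_{t_-}) \mid U_{t_-}]\, \cexpect[Y_{t_-}(0) - h^*(W_{t_-}) \mid U_{t_-}]$, and the second factor is zero by Condition~\ref{cond: outcome bridge}. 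Hence the entire weighting term is zero, so $R = \sum_{t_+=T_0+1}^T \ell_T(t_+)\cexpect[h^*(W_{t_+})]$, which equals $\sum_{t_+=T_0+1}^T \ell_T(t_+)\cexpect[Y_{t_+}(0)]$ by the outcome-model identification \eqref{eq: identify EY0 outcome}.

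For case (ii), where Condition~\ref{cond: treatment bridge nonstationary} holds and $\undertilde{q} = \undertilde{q}^*_T$, I would split the weighting term into a $Y$-piece and an $h$-piece. The $Y$-piece $\frac{1}{T_0}\sum_{t_-=1}^{T_0}\cexpect[\undertilde{q}^*_T(Z_{t_-}) Y_{t_-}]$ equals $\sum_{t_+=T_0+1}^T \ell_T(t_+)\cexpect[Y_{t_+}(0)]$ by \eqref{eq: identify EY0 treatment nonstationary} in Theorem~\ref{thm: identify ATT treatment nonstationary}, which applies here because Conditions~\ref{cond: proxy independence}, \ref{cond: covariate shift} and \ref{cond: treatment bridge nonstationary} are all in force. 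It then remains to cancel the two $h$-terms, i.e.\ to show $\frac{1}{T_0}\sum_{t_-=1}^{T_0}\cexpect[\undertilde{q}^*_T(Z_{t_-}) h(W_{t_-})] = \sum_{t_+=T_0+1}^T \ell_T(t_+)\cexpect[h(W_{t_+})]$; this is precisely \eqref{eq: identify q nonstationary} with the admissible choice $g = h$, which is legitimate since $h$ is square-integrable by hypothesis and \eqref{eq: identify q nonstationary} holds for all square-integrable $g$. Adding the two pieces gives $R = \sum_{t_+=T_0+1}^T \ell_T(t_+)\cexpect[Y_{t_+}(0)]$.

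The argument is essentially mechanical once Theorem~\ref{thm: identify ATT treatment nonstationary} is available, so there is no deep obstacle; the point requiring the most care is simply tracking which condition powers which branch, and in particular confirming that case (ii) has access to all hypotheses of Theorem~\ref{thm: identify ATT treatment nonstationary} so that both \eqref{eq: identify EY0 treatment nonstationary} and \eqref{eq: identify q nonstationary} may be invoked, and that \eqref{eq: identify q nonstationary} is genuinely quantified over \emph{all} square-integrable $g$ rather than only the function used to characterize $\undertilde{q}^*_T$ in \eqref{eq: define q nonstationary}. Finally, I would obtain the ATT identity in \eqref{eq: DR identify ATT nonstationary} by subtracting the established identity $R = \sum_{t_+=T_0+1}^T \ell_T(t_+)\cexpect[Y_{t_+}(0)]$ from $\sum_{t_+=T_0+1}^T \ell_T(t_+)\cexpect[Y_{t_+}]$ and collecting terms.
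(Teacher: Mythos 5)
Your proposal is correct and takes essentially the same route as the paper's proof: in case~(i) you factorize via Condition~\ref{cond: proxy independence}, annihilate the weighting term using Condition~\ref{cond: outcome bridge}, and finish with \eqref{eq: identify EY0 outcome}; in case~(ii) you split the weighting term into a $Y$-piece identified by \eqref{eq: identify EY0 treatment nonstationary} and an $h$-piece cancelled by applying \eqref{eq: identify q nonstationary} with $g=h$, which is exactly the paper's argument. Your closing observation that case~(ii) inherits all hypotheses of Theorem~\ref{thm: identify ATT treatment nonstationary}, so that both \eqref{eq: identify EY0 treatment nonstationary} and \eqref{eq: identify q nonstationary} are legitimately available, matches the paper's (implicit) bookkeeping, and no gap remains.
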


Since our estimators and associated theoretical results are asymptotic as $T \rightarrow \infty$, we also present identification results under the same asymptotic regime. The causal estimand of interest is $\undertilde{\phi}^*$ defined as $\lim_{T \rightarrow \infty} \undertilde{\phi}^*_T = \lim_{T \rightarrow \infty} \sum_{t_+=T_0+1}^T \ell_T(t_+) \cexpect \{ Y_{t_+}(1)-Y_{t_+}(0) \}$. The associated identification results are similar to Theorem~\ref{thm: identify ATT treatment nonstationary} with the key difference being the limit as $T \rightarrow \infty$, so we abbreviate our presentation.
Recall that we assume that $T_0/T \rightarrow \gamma \in (0,1)$ as $T \rightarrow \infty$.

\begin{condition} \label{cond: treatment bridge nonstationary limit}
    There exists a function $\undertilde{q}^*: \mathcal{Z} \rightarrow \real$ such that, for any square-integrable function $g: \mathcal{U} \rightarrow \real$, it holds that
    \begin{equation}
        \lim_{T \rightarrow \infty} \frac{1}{T_0} \sum_{t_-=1}^{T_0} \cexpect \{ \undertilde{q}^*(Z_{t_-}) g(U_{t_-}) \} = \lim_{T \rightarrow \infty} \sum_{t_+=T_0+1}^T \ell_T(t_+) \cexpect \{ g(U_{t_+}) \}. \label{eq: define q nonstationary limit}
    \end{equation}
\end{condition}

\begin{condition} \label{cond: treatment complete nonstationary limit}
	Let $g: \mathcal{Z} \to  \real$ be any square-integrable function. The following two statements are equivalent: (i) $g(Z_{t})=0$ for all $t \in \{1,2,\ldots\}$, (ii) $\lim_{T \rightarrow \infty}\sum_{t_-=1}^{T_0} \cexpect \{ g(Z_{t_-})$ $f(W_{t_-}) \}/T_0 = 0$ for any square-integrable function $f$.
\end{condition}

\begin{theorem}[Identification of average treatment effect with $\undertilde{q}^*$] \label{thm: identify ATT treatment nonstationary limit}
	Let $f:\real \rightarrow \real$ be any square-integrable function. Under Conditions~\ref{cond: proxy independence}, \ref{cond: covariate shift}, and \ref{cond: treatment bridge nonstationary limit}, it holds that
	\begin{equation}
	    \lim_{T \rightarrow \infty} \sum_{t_+=T_0+1}^T \ell_T(t_+) \cexpect \{ f(Y_{t_+}(0)) \} = \lim_{T \rightarrow \infty} \frac{1}{T_0} \sum_{t_-=1}^{T_0} \cexpect \{ \undertilde{q}^*(Z_{t_-}) f(Y_{t_-}) \}.
	    \label{eq: identify EfY0 treatment nonstationary limit}
	\end{equation}
	In particular, taking $f$ to be the identity function, it holds that
	\begin{equation}
        \lim_{T \rightarrow \infty} \sum_{t_+=T_0+1}^T \ell_T(t_+) \cexpect \{ Y_{t_+}(0) \} = \lim_{T \rightarrow \infty} \frac{1}{T_0} \sum_{t_-=1}^{T_0} \cexpect \{ \undertilde{q}^*(Z_{t_-}) Y_{t_-} \}.
		\label{eq: identify EY0 treatment nonstationary limit}
	\end{equation}
	and thus
	$\undertilde{\phi}^*=\lim_{T \rightarrow \infty} \left[ \sum_{t_+=T_0+1}^T \cexpect \{ Y_{t_+} \} - \sum_{t_-=1}^{T_0} \cexpect \{ q^*(Z_{t_-}) Y_{t_-} \}/T_0 \right]$. In addition, the treatment confounding bridge function $\undertilde{q}^*_T$ is a solution to
    \begin{align}
        \begin{split}
            &\lim_{T \rightarrow \infty} \frac{1}{T_0} \sum_{t_-=1}^{T_0} \cexpect \{ \undertilde{q}(Z_{t_-}) g(W_{t_-}) \} = \lim_{T \rightarrow \infty} \sum_{t_+=T_0+1}^T \ell_T(t_+) \cexpect \{ g(W_{t_+}) \} \\
            &\hspace{1in} \text{for all square-integrable function $g$}
        \end{split}\label{eq: identify q nonstationary limit}
    \end{align}
	in
    $\undertilde{q}: \mathcal{Z} \rightarrow \real$.
    Further, under Condition~\ref{cond: treatment complete nonstationary limit}, \eqref{eq: identify q nonstationary limit} has a unique solution almost surely.
\end{theorem}

\begin{theorem}[Doubly robust identification with $h^*$ and $\undertilde{q}^*$] \label{thm: identify ATT DR nonstationary limit}
	Let
	$h: \mathcal{W} \rightarrow \real$
	and
	$\undertilde{q}: \mathcal{Z} \rightarrow \real$
	be any square-integrable functions. Under Conditions~\ref{cond: proxy independence} and \ref{cond: covariate shift}, 
if either (i) Condition~\ref{cond: outcome bridge} holds and $h=h^*$,
	or (ii) Condition~\ref{cond: treatment bridge nonstationary limit} holds and $\undertilde{q}=\undertilde{q}^*$, then
    \begin{align}
	        &\lim_{T \rightarrow \infty} \sum_{t_+=T_0+1}^T \ell_T(t_+) \cexpect \{ Y_{t_+}(0) \} \nonumber \\
         &= \lim_{T \rightarrow \infty} \left( \frac{1}{T_0} \sum_{t_-=1}^{T_0} \cexpect \left[ \undertilde{q}(Z_{t_-}) \{Y_{t_-} - h(W_{t_-})\} \right] + \sum_{t_+=T_0+1}^T \ell_T(t_+) \cexpect \{ h(W_{t_+}) \} \right), \nonumber\\
	        &\undertilde{\phi}^*_T = \lim_{T \rightarrow \infty} \left( \sum_{t_+=T_0+1}^T \ell_T(t_+) \cexpect \left\{ Y_{t_+} - h(W_{t_+}) \right\} - \frac{1}{T_0} \sum_{t_-=1}^{T_0} \cexpect \left[ \undertilde{q}(Z_{t_-}) \{Y_{t_-} - h(W_{t_-})\} \right] \right). \label{eq: DR identify ATT nonstationary limit}
	\end{align}
\end{theorem}

\begin{remark}
    We use causal conditions on the treatment confounding bridge that connect to the unobserved confounder $U_t$, namely Conditions~\ref{cond: treatment bridge}, \ref{cond: treatment bridge nonstationary} and \ref{cond: treatment bridge nonstationary limit}. With such conditions, we do not rely on the existence of the outcome confounding bridge function in Condition~\ref{cond: outcome bridge} to obtain the doubly robust identification formulas \eqref{eq: DR identify ATT}, \eqref{eq: DR identify ATT nonstationary} and \eqref{eq: DR identify ATT nonstationary limit}. 
    An alternative set of conditions that leads to the same identification formulas is the existence of the treatment confounding bridge function in terms of $W_t$ (namely \eqref{eq: identify q}, \eqref{eq: identify q nonstationary} and \eqref{eq: identify q nonstationary limit}) as well as the existence of the outcome confounding bridge function in Condition~\ref{cond: outcome bridge}. Thus, these two sets of causal conditions do not imply each other. 
    One potential drawback of the alternative set of conditions is that, for example, to identify $\cexpect[f \{ Y_{t_+}(0) \}]$ for some given function $f$ under stationarity of $U_{t_+}$ (Condition~\ref{cond: posttreatment identical distribution}), a different outcome confounding bridge function that can identify the same causal estimand is required to exist, but the approach we present in this paper does not require this (see \eqref{eq: identify EfY0 treatment}, \eqref{eq: identify EfY0 treatment nonstationary} and \eqref{eq: identify EfY0 treatment nonstationary limit}). Identification of $\cexpect[f\{ Y_{t_+}(0) \}]$ can be useful, for example, in constructing prediction intervals for the actual treatment effect for the treated unit. Since our main focus is the average treatment effect for the treated unit and both approaches lead to the same nonparametric identification formulas and therefore the same estimation procedures, we do not present this alternative approach in detail.
\end{remark}

\subsection{Doubly robust estimation} \label{sec: estimate ATT nonstationary}

In this section, we describe a doubly robust method to estimate the average treatment effect for the treated unit based on Theorem~\ref{thm: identify ATT DR nonstationary limit}, along with its theoretical properties. 
Due to the similarity between Theorems~\ref{thm: identify ATT DR} and \ref{thm: identify ATT DR nonstationary limit}, this method is also similar to that described in Section~\ref{sec: estimate ATT}.
We focus on the estimation of $\undertilde{\phi}^*$, the limit (as the number of time periods tends to infinity) of an average of time-specific average treatment effect for the treated unit $\phi^*_t$ over post-treatment time periods. We propose to use a generalized method of moments method that is similar to that described in Section~\ref{sec: estimate ATT}. We therefore abbreviate our presentation with emphasis on key differences.

We still use $\alpha$ and $\beta$ to parameterize $h^*$ and $\undertilde{q}^*$, and will use $h_\alpha$ and $\undertilde{q}_\beta$ to denote the models of confounding bridge functions with parameter $\alpha$ and $\beta$, respectively. Key changes of the method without stationarity of confounders are (i) that we no longer parameterize the average treatment effect for the treated unit with $\lambda$ since the estimand $\undertilde{\phi}^*$ is a scalar, and (ii) that we need to incorporate the user-specified importance weighting function $\ell_T$. We still let
$g_h: \mathcal{Z} \rightarrow \real^{d_\alpha'}$, with
$d_\alpha' \geq d_\alpha$, and 
$g_q: \mathcal{W} \rightarrow \real^{d_\beta'}$,
with $d_\beta' \geq d_\beta$, be two user-specified functions. Let $\alpha \in \mathcal{A}$, $\beta \in \mathcal{B}$, $\phi \in \Phi \subseteq \real$, $\psi \in \Psi \subseteq \real^{d_{\beta'}}$, and $\psi_- \in \Psi_- \subseteq \real$ be parameters in the generalized method of moments procedure, $\undertilde{\theta} = (\alpha,\beta,\phi,\psi,\psi_-)$ be the collection of parameters, and $\undertilde{\Theta} = \mathcal{A} \times \mathcal{B} \times \Phi \times \Psi \times \Psi_-$ be the parameter space. For each $t=1,\ldots,T$, define $\undertilde{G}_{T,t}:\undertilde{\Theta} \mapsto \real^{2d_\beta'+3}$ as
\begin{align}
    \undertilde{G}_{T,t}: \undertilde{\theta} \mapsto
    \quad \begin{pmatrix}
	\ind(t \leq T_0) \left[ \{Y_t-h_\alpha(W_t)\} g_h(Z_t) \right] \\
	\ind(t > T_0) \left\{ \psi - g_q(W_t) \right\} \\
	\ind(t \leq T_0) \left\{ \undertilde{q}_\beta(Z_t) g_q(W_t) - \psi \right\} \\
	\ind(t>T_0) \left[ \phi - (T-T_0) \ell_T(t) \{Y_t - h_\alpha(W_t)\} + \psi_- \right] \\
	\ind(t \leq T_0) \left[ \psi_- - \undertilde{q}_\beta(Z_t) \{Y_t - h_\alpha(W_t)\} \right]
\end{pmatrix}. \label{eq: Gt nonstationary}
\end{align}
Let $\undertilde{\Omega}_T$ be a user-specified symmetric positive semi-definite $(d_\alpha'+2d_\beta'+2) \times (d_\alpha'+2d_\beta'+2)$ matrix, for example, the identity matrix. Consider the generalized method of moments estimator
\begin{align}
    &\hat{\undertilde{\theta}}_T = \left( \hat{\undertilde{\alpha}}_T,\hat{\undertilde{\beta}}_T,\hat{\undertilde{\phi}}_T,\hat{\undertilde{\psi}}_T,,\hat{\undertilde{\psi}}_{-,T} \right) = \argmin_{\undertilde{\theta} \in \undertilde{\Theta}} 
    \left\{\frac{1}{T} \sum_{t=1}^T \undertilde{G}_{T,t}
    (\undertilde{\theta})\right\}^\top \undertilde{\Omega}_T 
    \left\{\frac{1}{T} \sum_{t=1}^T \undertilde{G}_{T,t}(\undertilde{\theta})\right\}. \label{eq: thetahat nonstationary}
\end{align}
The entry $\hat{\undertilde{\phi}}_T$ in the generalized method of moments estimator is the estimator of the average treatment effect for the treated unit $\undertilde{\phi}^*$ averaged over the post-treatment time periods. This generalized method of moments estimator coincides with the estimator $\phi_{\hat{\lambda}_T}(t)$ in Section~\ref{sec: estimate ATT} if the time-varying model $\phi_{\lambda}$ for the average treatment effect for the treated unit is a constant function $t \mapsto \lambda$ and the user-specified importance weight $\ell_T(t)=1/(T-T_0)$. This is the case in all our simulations in Sections~\ref{sec: sim} and \ref{sec: sim2}, as well as in our analysis of Kansas GDP in Section~\ref{sec: Kansas analysis}.

Similarly to Condition~\ref{cond: misspecify h or q}, we make the following regularity conditions to obtain consistency and valid inference about $\undertilde{\phi}^*$.

\begin{condition} \label{cond: misspecify h or q nonstationary}
	(i) There exists a unique $\undertilde{\theta}_\infty = (\undertilde{\alpha}_\infty,\undertilde{\beta}_\infty,\undertilde{\phi}_\infty,\undertilde{\psi}_\infty,\undertilde{\psi}_{-,\infty}) \in \undertilde{\Theta}$ such that $\lim_{T \rightarrow \infty} \sum_{t=1}^T \cexpect \{ \undertilde{G}_{T,t}(\undertilde{\theta}_\infty) \}/T=0$, (ii) $h^*=h_{\undertilde{\alpha}_\infty}$ satisfies Condition~\ref{cond: outcome bridge}, or $\undertilde{q}^*=\undertilde{q}_{\undertilde{\beta}_\infty}$ satisfies Condition~\ref{cond: treatment bridge nonstationary}.
\end{condition}

Though similar to Condition~\ref{cond: misspecify h or q}, Condition~\ref{cond: misspecify h or q nonstationary} is different.
In part~(i) of Condition~\ref{cond: misspecify h or q}, we require that the moment equation $G_t$ at the true parameter has mean zero for all time periods; that is, $\cexpect \{ G_t(\theta_\infty) \}=0$ for all $t=1,\ldots,T$. This is a standard assumption for the generalized method of moments, and therefore Theorem~\ref{thm: DR ATT asymptotic normality} follows immediately from standard theory for the generalized method of moments. In contrast, part~(i) of Condition~\ref{cond: misspecify h or q nonstationary} only requires that the average of the moment equation means over all time periods has approximately mean zero, namely $\lim_{T \rightarrow \infty} \sum_{t=1}^T \cexpect \{ \undertilde{G}_{T,t}(\undertilde{\theta}_\infty) \}/T=0$. However, the moment equation might not have mean zero for all time periods; that is, we do not require that $\cexpect \{ \undertilde{G}_{T,t}(\undertilde{\theta}_\infty) \} =0$ for all $t=1,\ldots,T$. This relaxed condition deviates from the standard generalized method of moments assumptions, and therefore the standard theory for the generalized method of moments does not apply. 

Another deviation from standard assumptions for the generalized method of moments is that the moment equation $\undertilde{G}_{t,T}$ depends on the sample size $T$. Therefore, when deriving the asymptotic normality of the estimator $\undertilde{\theta}_T$, in order to apply an argument based on Taylor series, we require a slightly different differentiability condition (see Condition~\ref{cond: uniform law of large numbers derivative} in Web Appendix~\ref{sec: technical conditions}) on the moment equation from standard conditions for the generalized method of moments. We obtain the following theoretical result for the above generalized method of moments method after carefully modifying the proof for the standard theory of the generalized method of moments.

\begin{theorem} \label{thm: DR ATT asymptotic normality nonstationary}
	Under Conditions~\ref{cond: proxy independence}, \ref{cond: covariate shift}, \ref{cond: misspecify h or q nonstationary}, \ref{cond: positive weight matrix}--\ref{cond: reg conditions for estimating function} and \ref{cond: uniform law of large numbers}, with the estimator $\hat{\undertilde{\theta}}_T$ from \eqref{eq: thetahat nonstationary} and $\undertilde{\theta}_\infty$ in Condition~\ref{cond: misspecify h or q nonstationary}, it holds that $\undertilde{\phi}_\infty = \undertilde{\phi}^*$ and, 
	as $T\to\infty$,
	$\hat{\undertilde{\theta}}_T$ is consistent for $\undertilde{\theta}_\infty$
	and
	$\hat{\undertilde{\phi}}_T$ is consistent for $\undertilde{\phi}^*$. 
	Additionally, under Conditions~\ref{cond: full rank matrix}, \ref{cond: uniform law of large numbers derivative} and \ref{cond: CLT weighted moment}, it holds that, as $T\to\infty$, $\sqrt{T} (\hat{\undertilde{\theta}}_T-\undertilde{\theta}_\infty)$ is asymptotically distributed as $\mathrm{N}(0,\undertilde{A}^{-1} \undertilde{B} \undertilde{A}^{-1})$,
	where, with $\undertilde{G}_{T,t}$ in \eqref{eq: Gt nonstationary} and $\undertilde{\Omega}$ being the probability limit of $\undertilde{\Omega}_T$ in Condition~\ref{cond: positive weight matrix}, we use the following quantities, whose existence follows based on our assumptions:
	\begin{align*}
		& \undertilde{R}=\lim_{T \rightarrow \infty} \frac{1}{T} \sum_{t=1}^T \cexpect \{ \nabla_{\undertilde{\theta}} \undertilde{G}_{T,t}(\undertilde{\theta}) |_{\undertilde{\theta}=\undertilde{\theta}_\infty} \}, 
		\qquad \undertilde{A} = \undertilde{R}^\top \undertilde{\Omega} \undertilde{R},
		\\
		& \undertilde{B} = \undertilde{R}^\top \undertilde{\Omega} \left[ \lim_{T \rightarrow \infty} \Var \left\{ T^{-1/2} \sum_{t=1}^T \undertilde{G}_{T,t}(\undertilde{\theta}_\infty) \right\} \right] \undertilde{\Omega} \undertilde{R}.
	\end{align*}
\end{theorem}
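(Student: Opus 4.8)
The plan is to follow the two-step template of standard GMM asymptotics---first consistency, then a mean-value expansion for asymptotic normality---while accommodating the two nonstandard features flagged in the text: the moment function $\undertilde{G}_{T,t}$ depends on the sample size $T$, forming a triangular array, and it is not mean-zero at each individual time period but only on average, i.e. $\frac{1}{T}\sum_{t=1}^T \cexpect[\undertilde{G}_{T,t}(\undertilde{\theta}_\infty)] \to 0$. Throughout I would write $\hat{g}_T(\undertilde{\theta}) := \frac{1}{T}\sum_{t=1}^T \undertilde{G}_{T,t}(\undertilde{\theta})$ and $\bar{g}_T(\undertilde{\theta}) := \frac{1}{T}\sum_{t=1}^T \cexpect[\undertilde{G}_{T,t}(\undertilde{\theta})]$, so that the objective in \eqref{eq: thetahat nonstationary} is $\hat{g}_T(\undertilde{\theta})^\top \undertilde{\Omega}_T \hat{g}_T(\undertilde{\theta})$.

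For consistency, I would first invoke the uniform law of large numbers in Condition~\ref{cond: uniform law of large numbers} to get $\sup_{\undertilde{\theta} \in \undertilde{\Theta}}\|\hat{g}_T(\undertilde{\theta}) - \bar{g}_T(\undertilde{\theta})\| \overset{p}{\rightarrow} 0$, combined with $\undertilde{\Omega}_T \overset{p}{\rightarrow} \undertilde{\Omega}$ from Condition~\ref{cond: positive weight matrix} and the compactness, continuity, and dominance assumptions in Conditions~\ref{cond: positive weight matrix}--\ref{cond: reg conditions for estimating function}. This yields uniform convergence of the sample objective to the limiting population objective $\undertilde{\theta} \mapsto \bar{g}(\undertilde{\theta})^\top \undertilde{\Omega}\, \bar{g}(\undertilde{\theta})$, which by part~(i) of Condition~\ref{cond: misspecify h or q nonstationary} is uniquely minimized (with value zero) at $\undertilde{\theta}_\infty$; the usual argmin-consistency argument for extremum estimators then gives $\hat{\undertilde{\theta}}_T \overset{p}{\rightarrow} \undertilde{\theta}_\infty$. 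To identify the limit of the $\phi$-coordinate I would read off the population moment equations at $\undertilde{\theta}_\infty$ coming from the fourth and fifth blocks of $\undertilde{G}_{T,t}$: eliminating the centering parameter $\undertilde{\psi}_{-,\infty}$ between these two equations reproduces exactly the doubly robust identifying expression \eqref{eq: DR identify ATT nonstationary limit} evaluated at $h=h_{\undertilde{\alpha}_\infty}$ and $\undertilde{q}=\undertilde{q}_{\undertilde{\beta}_\infty}$. Hence part~(ii) of Condition~\ref{cond: misspecify h or q nonstationary} together with Theorem~\ref{thm: identify ATT DR nonstationary limit} gives $\undertilde{\phi}_\infty = \undertilde{\phi}^*$ (regardless of which nuisance model is correct), so $\hat{\undertilde{\phi}}_T \overset{p}{\rightarrow} \undertilde{\phi}^*$.

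For asymptotic normality I would start from the first-order condition $[\frac{1}{T}\sum_t \nabla_{\undertilde{\theta}}\undertilde{G}_{T,t}(\hat{\undertilde{\theta}}_T)]^\top \undertilde{\Omega}_T\, \hat{g}_T(\hat{\undertilde{\theta}}_T) = 0$ at the interior minimizer and perform a coordinatewise mean-value expansion of $\hat{g}_T(\hat{\undertilde{\theta}}_T)$ about $\undertilde{\theta}_\infty$. Because $\undertilde{G}_{T,t}$ changes with $T$, the expansion point is a $T$-dependent random $\bar{\undertilde{\theta}}_T$ between $\hat{\undertilde{\theta}}_T$ and $\undertilde{\theta}_\infty$, so an off-the-shelf uniform law does not apply; instead I would invoke the tailored derivative uniform law in Condition~\ref{cond: uniform law of large numbers derivative} to conclude $\frac{1}{T}\sum_t \nabla_{\undertilde{\theta}}\undertilde{G}_{T,t}(\bar{\undertilde{\theta}}_T) \overset{p}{\rightarrow} \undertilde{R}$ and likewise at $\hat{\undertilde{\theta}}_T$. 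Solving the expansion gives
\begin{equation*}
    \sqrt{T}(\hat{\undertilde{\theta}}_T - \undertilde{\theta}_\infty) = -(\undertilde{R}^\top \undertilde{\Omega}\, \undertilde{R})^{-1}\undertilde{R}^\top \undertilde{\Omega} \cdot \frac{1}{\sqrt{T}}\sum_{t=1}^T \undertilde{G}_{T,t}(\undertilde{\theta}_\infty) + o_p(1),
\end{equation*}
where invertibility of $\undertilde{A} = \undertilde{R}^\top \undertilde{\Omega}\, \undertilde{R}$ uses the full-rank Condition~\ref{cond: full rank matrix}. Finally I would apply the (serially dependent) triangular-array central limit theorem for the weighted moment in Condition~\ref{cond: CLT weighted moment} to get $\frac{1}{\sqrt{T}}\sum_t \undertilde{G}_{T,t}(\undertilde{\theta}_\infty) \overset{d}{\rightarrow} \mathrm{N}(0, V)$ with $V := \lim_{T \to \infty} \V(T^{-1/2}\sum_{t=1}^T \undertilde{G}_{T,t}(\undertilde{\theta}_\infty))$, and conclude by Slutsky's theorem and the continuous mapping theorem that $\sqrt{T}(\hat{\undertilde{\theta}}_T - \undertilde{\theta}_\infty) \overset{d}{\rightarrow} \mathrm{N}(0, \undertilde{A}^{-1}\undertilde{B}\, \undertilde{A}^{-1})$ with $\undertilde{B}$ as defined.

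The step I expect to be the main obstacle is controlling the expansion error uniformly despite the $T$-dependence of $\undertilde{G}_{T,t}$: the mean-value Jacobian is evaluated at a moving random point $\bar{\undertilde{\theta}}_T$ while the function itself is a different object for each $T$, so the usual appeal to a single fixed limiting Jacobian fails and one must verify that Condition~\ref{cond: uniform law of large numbers derivative} delivers convergence uniform over shrinking neighbourhoods of $\undertilde{\theta}_\infty$ simultaneously with the change in $T$. A closely related subtlety, tied to the loss of per-period mean-zero, is that the deterministic bias $\frac{1}{\sqrt{T}}\sum_t \cexpect[\undertilde{G}_{T,t}(\undertilde{\theta}_\infty)]$ must be asymptotically negligible---which is strictly stronger than the $o(1)$ guaranteed after dividing by $T$ in Condition~\ref{cond: misspecify h or q nonstationary}(i). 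I would read Condition~\ref{cond: CLT weighted moment} as ensuring that the \emph{uncentered} weighted sum converges to a mean-zero normal, so that this bias vanishes and the limiting law is correctly centered at zero.
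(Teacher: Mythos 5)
Your proposal is correct and follows essentially the same route as the paper's proof: uniform-law-based argmin consistency over the compact $\undertilde{\Theta}$, identification of $\undertilde{\phi}_\infty=\undertilde{\phi}^*$ by reading the population moment equations off the blocks of $\undertilde{G}_{T,t}$ together with Theorem~\ref{thm: identify ATT DR nonstationary limit}, and asymptotic normality from the first-order condition combined with a first-order (mean-value) expansion, the tailored derivative uniform law in Condition~\ref{cond: uniform law of large numbers derivative}, the CLT in Condition~\ref{cond: CLT weighted moment}, and Slutsky's theorem, with invertibility of $\undertilde{A}$ from Conditions~\ref{cond: positive weight matrix} and \ref{cond: full rank matrix}. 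The two subtleties you flag are resolved exactly as the paper resolves them: the $T$-dependence of the Jacobian at the moving expansion point is absorbed by Condition~\ref{cond: uniform law of large numbers derivative}, and the potential $\sqrt{T}$-scale bias from the loss of per-period centering is handled because Condition~\ref{cond: CLT weighted moment} is stated for the \emph{uncentered} weighted sum with a mean-zero normal limit.
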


Theorem~\ref{thm: DR ATT asymptotic normality nonstationary} appears similar to Theorem~\ref{thm: DR ATT asymptotic normality}, but there is a key difference in their practical implications in data analysis. 
Under stationarity of confounders, the moment equation for $G_t$ has mean zero at the limiting parameter value $\theta_\infty$ at each time period (see Condition~\ref{cond: misspecify h or q}). In this case, the matrix $B$ in Theorem~\ref{thm: DR ATT asymptotic normality} can be consistently estimated with heteroskedasticity and autocorrelation consistent estimators. In contrast, without this stationarity, the moment equation $\undertilde{G}_{T,t}$ might not have mean zero at any parameter value at all time periods; only the average of the moment equation over all time periods is zero at the true parameter value (see Condition~\ref{cond: misspecify h or q nonstationary}).

Therefore, it is challenging, if possible at all, to consistently estimate $\undertilde{B}$ in Theorem~\ref{thm: DR ATT asymptotic normality nonstationary}, which is a key component of obtaining a consistent variance estimator. Indeed, standard generalized method of moments software outputs standard errors under the assumption that the moment equation has mean zero at all time periods, and the limit of the variance in the ``meat'' of $\undertilde{B}$ is typically estimated as a weighted average of squares of $\undertilde{G}_{T,t}$. This standard error is conservative, because $\cexpect(X^2) \geq \Var(X)$ for any random variable $X$. Therefore, implementing the generalized method of moments without assuming the stationarity would lead to conservative statistical inference.

\section{Estimation of average treatment effect based on weighting} \label{sec: weighting GMM}

In this section, we describe the generalized method of moments estimators of the average treatment effect for the treated unit $\phi^*(t)$ under stationarity of confounders and $\undertilde{\phi}^*$ without stationarity. Since the estimators are similar to the doubly robust generalized method of moments estimators, we abbreviate our presentation.

\subsection{Estimation under stationarity} \label{sec: weighting GMM stationary}

Let $g_q: \mathcal{W} \rightarrow \real^{d_\beta'}$, with $d_\beta' \geq d_\beta$, be a user-specified function. 
For $\beta \in \mathcal{B}$, $\lambda \in \Lambda$, $\psi \in \Psi$, and $\psi_- \in \Psi_-$, define $\theta^q = (\beta,\lambda,\psi,\psi_-)$, $\Theta^q = \mathcal{B} \times \Lambda \times \Psi \times \Psi_-$, and
$$G^q_t: \theta^q \mapsto \begin{pmatrix}
    \ind(t>T_0) \left\{ \psi - g_q(W_t) \right\} \\
    \ind(t \leq T_0) \left\{ q_\beta(Z_t) g_q(W_t) - \psi \right\} \\
    \ind(t>T_0) \left\{ \phi_\lambda(t) - Y_t + \psi_- \right\} \\
	\ind(t \leq T_0) \left\{ \psi_- - q_\beta(Z_t) Y_t \right\}
\end{pmatrix}.$$
In case of non-convexity, the interpretation is as for $\hat{\theta}_T$ from \eqref{eq: thetahat}.
Let $\Omega^q_T$ be a user-specified symmetric positive semi-definite $(2d_\beta'+2) \times (2d_\beta'+2)$ matrix. Consider the generalized method of moments estimator
$$\hat{\theta}^q_T = \left( \hat{\beta}^q_T,\hat{\lambda}^q_T,\hat{\psi}^q_T,\hat{\psi}^q_{-,T} \right) = \argmin_{\theta \in \Theta^q} \left\{ \frac{1}{T} \sum_{t=1}^T G^q_t(\theta)\right\}^\top \Omega^q_T \left\{ \frac{1}{T} \sum_{t=1}^T G^q_t(\theta) \right\}.$$
We propose to use $\phi_{\hat{\lambda}^q_T}(t_+)$ as the estimator of the average treatment effect for the treated unit $\phi^*(t_+)$ at time period $t_+>T_0$ based on the treatment confounding bridge function. 
We require the following assumption, similarly to condition~\ref{cond: misspecify h or q}.

\begin{condition} \label{cond: correctly specify q}
    (i) There exists a unique $\theta^q_\infty = (\beta^q_\infty,\lambda^q_\infty,\psi^q_\infty,\psi^q_{-,\infty},\phi^q_\infty) \in \Theta$ such that $\cexpect \{ G^q_t(\theta^q_\infty) \}=0$ for all $t=1,\ldots,T$. (ii) $q^*=q_{\beta^q_\infty}$ satisfies Condition~\ref{cond: treatment bridge}.
\end{condition}

Under Condition~\ref{cond: correctly specify q}, Theorem~\ref{thm: identify ATT treatment} implies that $\phi_{\lambda^q_\infty}(t_+)$ equals the average treatment effect for the treated unit $\phi^*(t_+)$ at time period $t_+>T_0$.
Similarly to Theorem \ref{thm: DR ATT asymptotic normality}, we have the following asymptotic result about the generalized method of moments estimator $\hat{\theta}^q_T$, under the additional assumptions \ref{cond: positive weight matrix}--\ref{cond: uniform law of large numbers}.

\begin{theorem} \label{thm: treatment ATT asymptotic normality}
	Under Conditions~\ref{cond: proxy independence}, \ref{cond: covariate shift}, \ref{cond: posttreatment identical distribution}, \ref{cond: correctly specify q}, \ref{cond: positive weight matrix}--\ref{cond: reg conditions for estimating function} and \ref{cond: uniform law of large numbers}, 
	as $T\to\infty$,
	we have that $\hat{\theta}^q_T$ is consistent for $\theta^q_\infty$. 
	Additionally under Conditions~\ref{cond: full rank matrix}, \ref{cond: uniform law of large numbers derivative} and \ref{cond: CLT weighted moment}, as $T\to\infty$, $\sqrt{T} (\hat{\theta}^q_T - \theta^q_\infty)$ is asymptotically $\mathrm{N}(0,A_q^{-1} B_q A_q^{-1})$,
	where we define
	\begin{align*}
		& R_q=\lim_{T \rightarrow \infty} \frac{1}{T} \sum_{t=1}^T \cexpect \{ \nabla_{\theta^q} G^q_t(\theta^q) |_{\theta^q=\theta^q_\infty} \}, 
		\qquad
		A_q = R_q^\top \Omega^q R_q,
		\\
		& B_q = R_q^\top \Omega^q \left[ \lim_{T \rightarrow \infty} \Var \left\{ T^{-1/2} \sum_{t=1}^T G^q_t(\theta^q_\infty) \right\} \right] \Omega^q R_q.
	\end{align*}
\end{theorem}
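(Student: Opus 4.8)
The plan is to follow classical GMM asymptotic theory for stationary, weakly dependent time series, in complete parallel with the argument behind Theorem~\ref{thm: DR ATT asymptotic normality}; indeed, $G^q_t$ is precisely the subvector of $G_t$ in \eqref{eq: Gt} obtained by deleting the outcome-bridge block and setting $h_\alpha \equiv 0$ in the remaining blocks, so the claim should reduce to the same machinery applied to a smaller parameter vector $\theta^q$. The essential simplification relative to the nonstationary Theorem~\ref{thm: DR ATT asymptotic normality nonstationary} is that here Condition~\ref{cond: correctly specify q}(i) supplies an estimating function with mean zero at \emph{every} time period, $\cexpect[G^q_t(\theta^q_\infty)]=0$ for all $t$, and $G^q_t$ does not depend on $T$; this places us squarely inside the standard GMM framework (e.g., Theorems~7.1 and 7.2 in \citet{Wooldridge1994}), so the theorem is a consequence of those results specialized to $G^q_t$ together with Theorem~\ref{thm: identify ATT treatment}.

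First I would record that the limiting parameter identifies the target. By Condition~\ref{cond: correctly specify q}(ii), $q^*=q_{\beta^q_\infty}$ satisfies Condition~\ref{cond: treatment bridge}, so the identifying formula $\phi^*(t_+)=\cexpect[Y_{t_+}-q^*(Z_{t_-})Y_{t_-}]$ from Theorem~\ref{thm: identify ATT treatment}, combined with the period-by-period moment-zero property in Condition~\ref{cond: correctly specify q}(i), forces $\phi_{\lambda^q_\infty}(t_+)=\phi^*(t_+)$ for all $t_+>T_0$. The centering parameters $\psi$ and $\psi_-$ play exactly the role of decoupling the pre- and post-treatment expectations appearing in \eqref{eq: identify q} and \eqref{eq: identify EY0 treatment}, so that the population moment vanishes at each $t$ rather than only on average.

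Second, for consistency I would invoke the standard GMM argument: the sample objective $\{T^{-1}\sum_t G^q_t(\theta^q)\}^\top \Omega^q_T \{T^{-1}\sum_t G^q_t(\theta^q)\}$ converges uniformly over $\Theta^q$ to a nonrandom limit, using the uniform law of large numbers in Condition~\ref{cond: uniform law of large numbers}, the convergence $\Omega^q_T\to\Omega^q$ in Condition~\ref{cond: positive weight matrix}, and the continuity/compactness hypotheses in Conditions~\ref{cond: positive weight matrix}--\ref{cond: reg conditions for estimating function}; this limit is uniquely minimized at $\theta^q_\infty$ by the identification in Condition~\ref{cond: correctly specify q}(i) and positive definiteness of $\Omega^q$, whence $\hat{\theta}^q_T\overset{p}{\to}\theta^q_\infty$. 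Then for asymptotic normality I would Taylor-expand the first-order condition $\{T^{-1}\sum_t\nabla_{\theta^q}G^q_t(\hat{\theta}^q_T)\}^\top\Omega^q_T\{T^{-1}\sum_t G^q_t(\hat{\theta}^q_T)\}=0$ about $\theta^q_\infty$ and solve for $\sqrt{T}(\hat{\theta}^q_T-\theta^q_\infty)$. The gradient averages converge via Condition~\ref{cond: uniform law of large numbers derivative} to $R_q$; the full-rank Condition~\ref{cond: full rank matrix} makes $A_q=R_q^\top\Omega^q R_q$ invertible; and the score term $T^{-1/2}\sum_t G^q_t(\theta^q_\infty)$ is handled by the central limit theorem for the serially dependent moment process in Condition~\ref{cond: CLT weighted moment}, whose limiting variance is $\lim_T \V(T^{-1/2}\sum_t G^q_t(\theta^q_\infty))$. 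Assembling these through Slutsky's theorem yields the sandwich form $A_q^{-1}B_q A_q^{-1}$.

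The main obstacle is not a single hard estimate but the serial-dependence bookkeeping: because the panel is one long, serially correlated trajectory rather than an i.i.d.\ sample, both the law of large numbers and the central limit theorem must be the weak-dependence (e.g.\ mixing) versions encoded in Conditions~\ref{cond: uniform law of large numbers}, \ref{cond: uniform law of large numbers derivative} and \ref{cond: CLT weighted moment}, and the ``meat'' of the variance must retain the full long-run covariance $\lim_T\V(T^{-1/2}\sum_t G^q_t(\theta^q_\infty))$ rather than a per-period variance. Once these dependent-data limit theorems are in force, the residual work is purely the routine GMM manipulation; the only genuine verification is that $G^q_t$ and $\nabla_{\theta^q}G^q_t$ satisfy the envelope and continuity hypotheses required to apply them, which is exactly what Conditions~\ref{cond: positive weight matrix}--\ref{cond: reg conditions for estimating function} and \ref{cond: uniform law of large numbers}--\ref{cond: CLT weighted moment} are designed to guarantee.
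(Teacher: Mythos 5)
Your proposal is correct and matches the paper's proof, which simply invokes standard GMM asymptotic theory (Theorems~7.1 and 7.2 of \citet{Wooldridge1994}) together with the identification result of Theorem~\ref{thm: identify ATT treatment}, noting that Condition~\ref{cond: correctly specify q}(i) places the moment function in the standard framework where $\cexpect[G^q_t(\theta^q_\infty)]=0$ for every $t$. Your elaboration of the uniform-LLN consistency argument and the Taylor-expanded first-order condition with the dependent-data CLT is exactly the standard machinery the paper appeals to (and which it writes out explicitly only for the nonstationary variant in Theorem~\ref{thm: DR ATT asymptotic normality nonstationary}).
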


\subsection{Estimation without stationarity} \label{sec: weighting GMM nonstationary}

Let $g_q: \mathcal{W} \rightarrow \real^{d_\beta'}$, with $d_\beta' \geq d_\beta$, be a user-specified function. 
For $\beta \in \mathcal{B}$, $\phi \in \Phi$, $\psi \in \Psi$, and $\psi_- \in \Psi_-$, define $\undertilde{\theta}^q = (\beta,\phi,\psi,\psi_-)$, $\undertilde{\Theta}^q = \mathcal{B} \times \Phi \times \Psi \times \Psi_-$, and
$$\undertilde{G}^q_{T,t}: \undertilde{\theta}^q \mapsto \begin{pmatrix}
    \ind(t>T_0) \left\{ \psi - g_q(W_t) \right\} \\
    \ind(t \leq T_0) \left\{ \undertilde{q}_\beta(Z_t) g_q(W_t) - \psi \right\} \\
    \ind(t>T_0) \left\{ \phi - (T-T_0) \ell_T(t) Y_t + \psi_- \right\} \\
	\ind(t \leq T_0) \left\{ \psi_- - \undertilde{q}_\beta(Z_t) Y_t \right\}
\end{pmatrix}.$$
Let $\undertilde{\Omega}^q_T$ be a user-specified symmetric positive semi-definite $(2d_\beta'+2) \times (2d_\beta'+2)$ matrix. Consider the generalized method of moments estimator
$$\hat{\undertilde{\theta}}^q_T = \left( \hat{\undertilde{\beta}}^q_T,\hat{\undertilde{\phi}}^q_T,\hat{\undertilde{\psi}}^q_T,\hat{\undertilde{\psi}}^q_{-,T} \right) = \argmin_{\undertilde{\theta} \in \undertilde{\Theta}^q} 
\left\{ \frac{1}{T} \sum_{t=1}^T \undertilde{G}^q_{T,t}(\theta)\right\}^\top \undertilde{\Omega}^q_T \left\{ \frac{1}{T} \sum_{t=1}^T \undertilde{G}^q_t(\undertilde{\theta}) \right\}.$$
We propose to use $\hat{\undertilde{\phi}}_T$ as the estimator of the average treatment effect for the treated unit $\undertilde{\phi}^*$ based on the treatment confounding bridge function. 
We require the following assumption, similarly to condition~\ref{cond: misspecify h or q nonstationary}.

\begin{condition} \label{cond: correctly specify q nonstationary}
    (i) There exists a unique $\undertilde{\theta}^q_\infty = (\undertilde{\beta}^q_\infty,\undertilde{\phi}^q_\infty,\undertilde{\psi}^q_\infty,\undertilde{\psi}^q_{-,\infty},\undertilde{\phi}^q_\infty) \in \undertilde{\Theta}$ such that $\lim_{T \rightarrow \infty} \sum_{t=1}^T \cexpect \{ \undertilde{G}^q_{T,t}(\undertilde{\theta}^q_\infty) \}=0$. (ii) $\undertilde{q}^*=\undertilde{q}_{\undertilde{\beta}^q_\infty}$ satisfies Condition~\ref{cond: treatment bridge}.
\end{condition}

Under Condition~\ref{cond: correctly specify q nonstationary}, Theorem~\ref{thm: identify ATT treatment nonstationary limit} implies that $\undertilde{\phi}^q_\infty$ equals the average treatment effect for the treated unit $\undertilde{\phi}^*$ averaged over post-treatment periods.
Similarly to Theorem \ref{thm: DR ATT asymptotic normality nonstationary}, we have the following asymptotic result about the generalized method of moments estimator $\hat{\undertilde{\theta}}^q_T$, under the additional assumptions \ref{cond: positive weight matrix}--\ref{cond: uniform law of large numbers}.

\begin{theorem} \label{thm: treatment ATT asymptotic normality nonstationary}
	Under Conditions~\ref{cond: proxy independence}, \ref{cond: covariate shift}, \ref{cond: correctly specify q nonstationary}, \ref{cond: positive weight matrix}--\ref{cond: reg conditions for estimating function} and \ref{cond: uniform law of large numbers}, it holds that $\undertilde{\phi}^q_\infty = \undertilde{\phi}^*$, and
	as $T\to\infty$,
	we have that $\hat{\undertilde{\theta}}^q_T$ is consistent for $\undertilde{\theta}^q_\infty$, which implies that $\hat{\undertilde{\phi}}^q_T$ is consistent for $\undertilde{\phi}^*$. 
	Additionally under Conditions~\ref{cond: full rank matrix}, \ref{cond: uniform law of large numbers derivative} and \ref{cond: CLT weighted moment}, as $T\to\infty$, $\sqrt{T} (\hat{\undertilde{\theta}}^q_T - \undertilde{\theta}^q_\infty)$ is asymptotically $\mathrm{N}(0,\undertilde{A}_q^{-1} \undertilde{B}_q \undertilde{A}_q^{-1})$,
	where we define
	\begin{align*}
		& \undertilde{R}_q=\lim_{T \rightarrow \infty} \frac{1}{T} \sum_{t=1}^T \cexpect \{ \nabla_{\undertilde{\theta}^q} \undertilde{G}^q_{T,t}(\undertilde{\theta}^q) |_{\undertilde{\theta}^q=\undertilde{\theta}^q_\infty} \}, 
		\qquad
		\undertilde{A}_q = \undertilde{R}_q^\top \undertilde{\Omega}^q \undertilde{R}_q,
		\\
		& \undertilde{B}_q = \undertilde{R}_q^\top \undertilde{\Omega}^q \left[ \lim_{T \rightarrow \infty} \Var \left\{ T^{-1/2} \sum_{t=1}^T \undertilde{G}^q_{T,t}(\undertilde{\theta}^q_\infty) \right\} \right] \undertilde{\Omega}^q \undertilde{R}_q.
	\end{align*}
\end{theorem}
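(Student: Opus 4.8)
The present statement is the weighting-only specialization of Theorem~\ref{thm: DR ATT asymptotic normality nonstationary}: the moment vector $\undertilde{G}^q_{T,t}$ is obtained from $\undertilde{G}_{T,t}$ in \eqref{eq: Gt nonstationary} by deleting the first ($g_h$) block and replacing each occurrence of $Y_t - h_\alpha(W_t)$ by $Y_t$. I would therefore follow the same route as the proof of that theorem, which itself modifies standard GMM theory to accommodate the two nonstandard features here: the moment $\undertilde{G}^q_{T,t}$ depends on $T$ (through $\ell_T$ and $T_0 = T_0(T)$), and $\cexpect[\undertilde{G}^q_{T,t}(\undertilde{\theta}^q_\infty)]$ need not vanish at each $t$ but only on time-average as $T \to \infty$, as encoded in part~(i) of Condition~\ref{cond: correctly specify q nonstationary}.

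First I would verify $\undertilde{\phi}^q_\infty = \undertilde{\phi}^*$. Reading the limiting moment equation $\lim_{T \to \infty} \frac{1}{T} \sum_{t=1}^T \cexpect[\undertilde{G}^q_{T,t}(\undertilde{\theta}^q_\infty)] = 0$ block by block and using the regime $T_0/T \to \gamma \in (0,1)$: the second block gives $\undertilde{\psi}^q_\infty = \lim_T \frac{1}{T_0} \sum_{t_- \le T_0} \cexpect[\undertilde{q}_{\undertilde{\beta}^q_\infty}(Z_{t_-}) g_q(W_{t_-})]$, the fourth block gives $\undertilde{\psi}^q_{-,\infty} = \lim_T \frac{1}{T_0} \sum_{t_- \le T_0} \cexpect[\undertilde{q}_{\undertilde{\beta}^q_\infty}(Z_{t_-}) Y_{t_-}]$, and the third block then yields $\undertilde{\phi}^q_\infty = \lim_T \sum_{t_+ > T_0} \ell_T(t_+) \cexpect[Y_{t_+}] - \undertilde{\psi}^q_{-,\infty}$. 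By part~(ii) of Condition~\ref{cond: correctly specify q nonstationary}, $\undertilde{q}_{\undertilde{\beta}^q_\infty} = \undertilde{q}^*$ is a valid treatment confounding bridge function, so Theorem~\ref{thm: identify ATT treatment nonstationary limit} with $f$ the identity identifies $\undertilde{\psi}^q_{-,\infty} = \lim_T \sum_{t_+} \ell_T(t_+) \cexpect[Y_{t_+}(0)]$. Since $Y_{t_+} = Y_{t_+}(1)$ for $t_+ > T_0$, substituting collapses the expression to $\undertilde{\phi}^q_\infty = \lim_T \sum_{t_+} \ell_T(t_+) \cexpect[Y_{t_+}(1) - Y_{t_+}(0)] = \undertilde{\phi}^*$.

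For consistency I would show that the sample criterion $\undertilde{\theta}^q \mapsto \{\frac{1}{T}\sum_t \undertilde{G}^q_{T,t}(\undertilde{\theta}^q)\}^\top \undertilde{\Omega}^q_T \{\frac{1}{T}\sum_t \undertilde{G}^q_{T,t}(\undertilde{\theta}^q)\}$ converges uniformly over $\undertilde{\Theta}^q$ to the deterministic limit built from $\lim_T \frac{1}{T}\sum_t \cexpect[\undertilde{G}^q_{T,t}(\cdot)]$ and the limiting weight matrix $\undertilde{\Omega}^q$ of Condition~\ref{cond: positive weight matrix}, invoking the uniform law of large numbers of Condition~\ref{cond: uniform law of large numbers} and the regularity of the estimating function in Condition~\ref{cond: reg conditions for estimating function}. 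Part~(i) of Condition~\ref{cond: correctly specify q nonstationary} makes $\undertilde{\theta}^q_\infty$ the unique zero of the limiting moment, hence the unique minimizer of the limiting criterion, and a standard argmin-consistency argument (adapted to the $T$-dependent criterion) delivers $\hat{\undertilde{\theta}}^q_T \overset{p}{\to} \undertilde{\theta}^q_\infty$, whence $\hat{\undertilde{\phi}}^q_T \overset{p}{\to} \undertilde{\phi}^*$ by the previous paragraph. For asymptotic normality I would Taylor-expand the first-order condition about $\undertilde{\theta}^q_\infty$, replace the sample Jacobian $\frac{1}{T}\sum_t \nabla_{\undertilde{\theta}^q} \undertilde{G}^q_{T,t}$ by its limit $\undertilde{R}_q$ using the uniform law of large numbers for derivatives of Condition~\ref{cond: uniform law of large numbers derivative} (with $\undertilde{R}_q$ of full column rank by Condition~\ref{cond: full rank matrix}), and reduce the problem to the limiting behaviour of $T^{-1/2}\sum_t \undertilde{G}^q_{T,t}(\undertilde{\theta}^q_\infty)$; the sandwich form $\undertilde{A}_q^{-1} \undertilde{B}_q \undertilde{A}_q^{-1}$ then follows by the usual algebra.

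I expect the central-limit step to be the crux. The summands $\undertilde{G}^q_{T,t}(\undertilde{\theta}^q_\infty)$ form a serially dependent, non-identically-distributed triangular array whose per-$t$ means do not vanish, so no i.i.d. CLT applies; one needs a CLT for heterogeneous dependent arrays, and, crucially, the residual non-centering $T^{-1/2}\sum_t \cexpect[\undertilde{G}^q_{T,t}(\undertilde{\theta}^q_\infty)] = \sqrt{T} \cdot \frac{1}{T}\sum_t \cexpect[\undertilde{G}^q_{T,t}(\undertilde{\theta}^q_\infty)]$ must vanish, i.e.\ the time-averaged moment must tend to zero faster than $T^{-1/2}$, for the limit to be centered with variance $\lim_T \V(T^{-1/2}\sum_t \undertilde{G}^q_{T,t}(\undertilde{\theta}^q_\infty))$. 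This is exactly what Condition~\ref{cond: CLT weighted moment} supplies. Once it is in force, the rest is bookkeeping: verifying that the weighting-only moment inherits the integrability and smoothness needed for the expansions, which is routine given the explicit form of $\undertilde{G}^q_{T,t}$.
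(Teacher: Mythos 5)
Your proposal is correct and follows essentially the same route as the paper: the paper proves the doubly robust nonstationary result (Theorem~\ref{thm: DR ATT asymptotic normality nonstationary}) via argmin-consistency under the uniform law of large numbers, a Taylor expansion of the first-order condition, and the assumed CLT for the $T$-dependent, non-per-$t$-centered moment array, and then declares the weighting-only proof ``almost identical''---which is exactly the specialization you carry out. Your block-by-block derivation of $\undertilde{\phi}^q_\infty = \undertilde{\phi}^*$ from Condition~\ref{cond: correctly specify q nonstationary} and Theorem~\ref{thm: identify ATT treatment nonstationary limit} is merely a more explicit version of the paper's citation of that identification theorem (correctly reading the bridge condition in part~(ii) as its nonstationary form), and your observation that Condition~\ref{cond: CLT weighted moment} is what absorbs the residual non-centering matches the paper's treatment.
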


\section{Incorporating covariates} \label{sec: use covariates}

For some set $\mathcal{X}$, let $X_t\in\mathcal{X}$ be the observed covariates in set at time period $t$. 
A natural way to incorporate covariate $X_t$ is to condition on $X_t$, treating $X_t$ similarly to $U_t$, but as observed.
Below we list the causal assumptions and identification results with a one-to-one correspondence to those in the main text under stationarity of covariates in the post-treatment periods (similar to Condition~\ref{cond: posttreatment identical distribution}). In fact, the identification result Theorem~\ref{thm: identify ATT DR} in the main text is a special case of Theorem~\ref{thm: identify ATT DR2} below with $X_t=\emptyset$. The proof of Theorem~\ref{thm: identify ATT DR2} below is also similar with the only modification being to condition on $X_t$ throughout. We therefore omit their proofs. 
The results without stationarity are strikingly similar and thus omitted.
The generalized method of moments estimation methods corresponding to the results below can also be derived similarly to those without covariates.

An issue with the results below is that the covariates $X_t$ must be used in the post-treatment period in the bridge functions. We therefore need to assume that $X_t$ is exogenous.
However, this assumption may well be violated, since distributional shift from $X_{t_-}$ to $X_{t_+}$ may be a consequence of the treatment,
thus being endogenous. 
This is in contrast to the classical SC method from \cite{Abadie2010}, where only covariates in the pre-treatment period are used.

Another possible way to use covariates is to view them as proxies and concatenate them to $W_t$ or $Z_t$. The user can decide which set of proxies each covariate belongs to based on Conditions~\ref{cond: proxy independence}--\ref{cond: treatment complete}. For example, covariates of donors and the treated unit may be part of $W_t$, while covariates of the other control units may be part of $Z_t$.

Our assumptions are as follows:
\begin{condition}\label{cond: proxy independence2}
	For all $t_- \leq T_0$, $Z_{t_-} \independent (Y_{t_-},W_{t_-}) \mid (X_{t_-},U_{t_-})$.
\end{condition}

\begin{condition}
\label{cond: outcome bridge2}
    There exists a function $h^*: \mathcal{W} \times \mathcal{X} \rightarrow \real$ such that $\cexpect \{ h^*(W_t,X_t) \mid X_t,U_t \}=\cexpect \{ Y_t(0) \mid X_t,U_t \}$ for all $t$.
\end{condition}

\begin{condition}
\label{cond: outcome complete2}
    Let $g: \mathcal{W} \rightarrow \real$ be any square-integrable function. For all $t_- \leq T_0$ and $P_{X_{t_-}}$-a.e. $x \in \mathcal{X}$, the following two statements are equivalent: (i) $g(W_{t_-})=0$, $P_{W_{t_-} \mid X_{t_-}=x}$-almost surely, (ii) $\cexpect \{ g(W_{t_-}) \mid Z_{t_-},X_{t_-}=x \} = 0$, $P_{Z_{t_-} \mid X_{t_-}=x}$-almost surely.
\end{condition}

\begin{condition}
\label{cond: covariate shift2}
    The conditional distribution $(Y_t(0),W_t) \mid (X_t,U_t)$ is identical for all $t$.
\end{condition}

\begin{condition}
\label{cond: posttreatment identical distribution2}
    The distribution of $(X_{t_+},U_{t_+})$ is identical for all $t_+ > T_0$.
\end{condition}

\begin{condition}
\label{cond: treatment bridge2}
    Suppose that $P_{X_{t_+},U_{t_+}}$ is dominated by $P_{X_{t_-},U_{t_-}}$ and there exists a function $q^*: \mathcal{Z} \times \mathcal{X} \rightarrow \real$ such that, 	for $P_{X_{t_-},U_{t_-}}$-a.e. $(x,u)$ and for all $t_- \leq T_0$ and $t_+ > T_0$,
	$$\cexpect \{ q^*(Z_{t_-},x) \mid X_{t_-}=x,U_{t_-}=u \} = \frac{\intd P_{X_{t_+},U_{t_+}}}{\intd P_{X_{t_-},U_{t_-}}}(x,u).$$
\end{condition}

\begin{condition}
\label{cond: treatment complete2}
    Let $g: \mathcal{Z} \rightarrow \real$ be any square-integrable function. For all $t_- \leq T_0$ and $P_{X_{t_-}}$-a.e. $x \in \mathcal{X}$, the following two statements are equivalent: (i) $g(Z_{t_-})=0$, $P_{Z_{t_-} \mid X_{t_-}=x}$-almost surely, (ii) $\cexpect \{ g(Z_{t_-}) \mid W_{t_-},X_{t_-}=x \} = 0$, $P_{W_{t_-} \mid X_{t_-}=x}$-almost surely.
\end{condition}
The resulting theorems are analogous to those from the main text.
\begin{theorem}
\label{thm: identify ATT treatment2}
    Let $f:\real \rightarrow \real$ be any square-integrable function. Under Conditions~\ref{cond: proxy independence2} and \ref{cond: covariate shift2}--\ref{cond: treatment bridge2}, it holds that
	\begin{equation}
	    \cexpect [ f\{Y_{t_+}(0)\} ] = \cexpect \{ q^*(Z_{t_-},X_{t_-}) f(Y_{t_-}) \}.
	    \label{eq: identify EfY0 treatment2}
	\end{equation}
	In particular, taking $f$ to be the identity function, it holds that
	\begin{equation}
		\cexpect\{Y_{t_+}(0)\} = \cexpect\{q^*(Z_{t_-},X_{t_-}) Y_{t_-}\}
		\label{eq: identify EY0 treatment2}
	\end{equation}
	and thus
	$\phi^*(t_+)=\cexpect\{Y_{t_+}-q^*(Z_{t_-},X_{t_-}) Y_{t_-}\}$
	for all $t_- \leq T_0$ and $t_+ > T_0$. In addition,
	$P_{W_{t_+},X_{t_+}}$ is dominated by $P_{W_{t_-},X_{t_-}}$
	and the treatment confounding bridge function $q^*$ is a solution to
	\begin{equation}
		\cexpect\{q(Z_{t_-},x) \mid W_{t_-}=w,X_{t_-}=x\}= \frac{\intd P_{W_{t_+},X_{t_+}}}{\intd P_{W_{t_-},X_{t_-}}}(w,x) \quad \text{for $P_{W_{t_-},X_{t_-}}$-a.e. $(w,x) \in \mathcal{W} \times \mathcal{X}$}
		\label{eq: identify q2}
	\end{equation}
	in
	$q: \mathcal{Z} \times \mathcal{X} \rightarrow \real$.
    Also assuming Condition~\ref{cond: treatment complete2}, \eqref{eq: identify q2} has a unique solution up to probability zero sets.
\end{theorem}

\begin{theorem}
\label{thm: identify ATT DR2}
    Let
	$h: \mathcal{W} \times \mathcal{X} \rightarrow \real$
	and
	$q: \mathcal{Z} \times \mathcal{X} \rightarrow \real$
	be any square-integrable functions. Under Conditions~\ref{cond: proxy independence2}, \ref{cond: outcome bridge2}, \ref{cond: covariate shift2}, \ref{cond: posttreatment identical distribution2} and \ref{cond: treatment bridge2}, it holds that
	\begin{align}
	    \begin{split}
	        \cexpect\{Y_{t_+}(0)\} &= \cexpect \left[ q(Z_{t_-},X_{t_-}) \{Y_{t_-} - h(W_{t_-},X_{t_-})\} + h(W_{t_+},X_{t_+}) \right], \\
	        \phi^*(t_+) &= \cexpect \left[ Y_{t_+} - q(Z_{t_-},X_{t_-}) \{Y_{t_-} - h(W_{t_-},X_{t_-})\} - h(W_{t_+},X_{t_+}) \right] \label{eq: DR identify ATT2}
	    \end{split}
	\end{align}
	for all $t_- \leq T_0$ and $t_+ > T_0$, if (i) Condition~\ref{cond: outcome bridge2} holds and $h=h^*$,
	or (ii) Condition~\ref{cond: treatment bridge2} holds and $q=q^*$.
\end{theorem}

\section{Generalized method of moments implementation in simulations \& data analysis} \label{sec: implementation of GMM}

In this appendix, we describe some implementation details for the generalized method of moments used in our simulations and data analysis with the \texttt{R} package \texttt{gmm}. 
One challenge with the generalized method of moments in our methods is that the treatment confounding bridge function $q^*$ is often parameterized as an exponential function, and thus the value of the moment equation is sensitive to small changes in the coefficients. 
This seems to have led to numerical instability in our experience. We expect better software implementation of the generalized method of moments to resolve these issues by default, but with the current software implementation, we describe some non-default options that we have found to alleviate numerical issues for our methods.

We recommend providing the analytic gradient function to \texttt{gmm} function. Compared to the default numerical gradients, in our experiments, the analytic gradient functions could improve stability and speed in numerical optimization in the generalized method of moments.

Because the generalized method of moments can be highly nonlinear, the associated optimization problem can have several local optima. Thus, proper initialization can be crucial to obtaining an estimator that is close to the truth. 
When feasible, we recommend fitting a GLM assuming no latent confounders and taking the fitted coefficients as the initial values. For example, when $h^*$ is specified as a linear function of $W_t$, we can fit an ordinary least squares regression with outcome $Y_t$ and covariates $W_t$ for $t \leq T_0$, and take the fitted coefficients as the initial coefficient for $h^*$. 

When $q^*$ is specified as a log-linear model, namely $q^*: z \mapsto \exp \left(\beta_0 + \sum_{k=1}^K \beta_k z_k \right)$, we may fit a GLM as follows. If we treat time period $t$ as a random variable uniform over the observed time points, by Bayes' theorem, we find that
\begin{align*}
    \frac{\intd P_{Z_{t_+}}}{\intd P_{Z_{t_-}}}(z) = \frac{\intd P_{Z_{t \mid t>T_0}}}{\intd P_{Z_t \mid t \leq T_0}}(z) = \frac{\Prob(t>T_0 \mid Z_t=z)}{\Prob(t \leq T_0 \mid Z_t=z)} \frac{\Prob(t \leq T_0)}{\Prob(t > T_0)},
\end{align*}
where $\Prob(t \leq T_0)$ and $\Prob(t \leq T_0)$ may be interpreted as the proportions of pre- and post-treatment periods. We set the above to be equal to $\exp \left(\beta_0 + \sum_{k=1}^K \beta_k z_k \right)$ for initialization. Then, with $A_t$ defined to be $\ind(t>T_0)$, we have that
\begin{equation}
    \log \mathrm{OR}(A_t \mid Z_t=z) = \log \frac{\Prob(A_t=1)}{\Prob(A_t=0)} + \beta_0 + \sum_{k=1}^K \beta_k z_k \label{eq: initialize q with GLM}
\end{equation}
where $\mathrm{OR}$ stands for odds ratio, and $\Prob(A_t=1)/\Prob(A_t=0)=(T-T_0)/T_0$. Therefore, we can run a logistic regression with outcome $A_t$ and covariate $Z_t$ for all $t$, and derive an initial value for the coefficient $\beta=(\beta_0,\ldots,\beta_K)$ with an adjustment in the intercept $\beta_0$ according to \eqref{eq: initialize q with GLM}. In simulations, we have found this initialization approach effective in removing a potentially large proportion of estimates that are far from the truth, thus achieving consistency and approximately correct confidence interval coverage.
In practice, a sensitivity analysis can also be conducted by, for example, randomly initializing parameter values or optimizing the objective function via stochastic methods like stochastic gradient descent.

Many choices of the weight matrix $\Omega_T$ lead to asymptotically normal generalized method of moments estimators since the only requirement is that the probability limit $\Omega$ of $\Omega_T$ is positive definite (see Condition~\ref{cond: positive weight matrix}). In addition to setting $\Omega_T$ to be a fixed weight matrix, many choices of $\Omega_T$ have been proposed, including the continuously updated efficient generalized method of moments, the two-step generalized method of moments estimator, and the iteratively updated generalized method of moments estimator.
These three data-adaptive choices all lead to asymptotically efficient generalized method of moments estimators \citep{Hansen1996}, while a fixed weight matrix is generally asymptotically inefficient. In finite samples, the continuously updated efficient generalized method of moments estimator has been found to perform better than the two-step generalized method of moments estimator and the iteratively updated estimator \citep{Hansen1996}, which are preferable to using a fixed weight matrix \citep{Hall2007}. Nevertheless, we have found more numerical issues with these more complicated methods for our nonlinear generalized method of moments in finite samples. We thus recommend using a fixed weight matrix because it still leads to asymptotically valid inference despite its asymptotic inefficiency,
and it has led to fewer numerical issues in our experience. In particular, we chose the weight matrix to be the identity matrix throughout; that is, we set \texttt{wmatrix=``ident''} in the \texttt{gmm} function.

In addition, the option \texttt{vcov=``HAC''} in the \texttt{gmm} function in the \texttt{R} package \texttt{gmm} has led to numerical errors in our experiments. The issue seems to be caused by prewhitening. Since our theory does not require prewhitening, we recommend not prewhitening and instead setting \texttt{vcov=``iid''} in the \texttt{gmm} function, and then using \texttt{vcovHAC} with the default option \texttt{prewhite=FALSE} from the \texttt{sandwich} package to estimate the variance for dependent time series data.
This has led to significantly fewer numerical issues and appears to have correct asymptotic behavior in our simulations. 
It is also feasible to use the \texttt{NeweyWest} function with the option \texttt{prewhite=FALSE} from the \texttt{sandwich} package to estimate the variance.

In some applications, the magnitude of the variables can be large. As an example, the number of hospitalizations in Section~\ref{sec: vaccine analysis} is of order $10^2$--$10^3$. Directly using these values in the GMM involving estimation of the treatment confounding bridge function $q^*$ can lead to severe numerical instability.
The reason seems that, when $Z_t$ has large magnitudes and the treatment confounding bridge function is parameterized by a log-linear model as in our simulation and data analysis, even a tiny change in the parameter value will lead to a drastic change in the confounding bridge function value. The above issue with the magnitude of $Z_t$ appears to be another challenge caused by strong nonlinearity and large gradients.
Scaling all outcomes to be of a unit order before feeding the data into GMMs appears to significantly improve the numerical stability.

\section{Additional simulation details and results} \label{sec: more sim}

\subsection{Data-generating mechanism for just-identified seeting} \label{sec: sim DGP}

We let $T \in \{500,1000,2000,4000\}$ and $T_0=T/2$. For each $T$, we run 200 Monte Carlo simulations.

We let the number of latent confounders, the number of donors and of the other control units all be  $K$. 
We generate random vectors $U_t=(U_{t,1},\ldots,U_{t,K})^\top$, $W_t=(W_{t,1},\ldots,W_{t,K})^\top$ and $Z_t=(Z_{t,1},\ldots,Z_{t,K})^\top$. 
We consider $K \in \{2,3,4,5\}$. 
We generate serially correlated latent confounders $U_t$ via a Gaussian copula \citep[e.g., ][]{Jaworski2010} as follows:
\begin{align*}
    & \epsilon_t \overset{i.i.d.}{\sim} \mathrm{N}(0,I_K), \qquad \undertilde{U}_1 = \epsilon_1, \qquad \undertilde{U}_t = 0.1 \undertilde{U}_{t-1} + 0.9 \epsilon_t, \quad t \geq 2, \\
    & U_{t_-,k} = F_1^{-1}\{ \Phi(\undertilde{U}_{t_-,k}) \}, \qquad U_{t_+,k} = F_2^{-1} \{ \Phi(\undertilde{U}_{t_+,k}) \}, 
    \quad t_- \leq T_0,\, t_+ > T_0,\, k=1,\ldots,K,
\end{align*}
where $I_K$ is the $K \times K$ identity matrix, $\Phi$ is the cumulative distribution function of the standard normal distribution, and $F_\lambda$ is the cumulative distribution function of the $\mathrm{Exponential}(\lambda)$ distribution with rate $\lambda>0$. 
Subsequently, for each $t\ge 1$, the observed outcomes $(Y_t,W_t,Z_t)$ are generated independently conditional on $U_t$ as follows:
\begin{align*}
    & Y_t \mid U_t \sim \mathrm{Unif} \left( 2 \ind(t > T_0) + 2 \sum_{k=1}^K U_{t,k} - 1, 2 \ind(t > T_0) + 2 \sum_{k=1}^K U_{t,k} + 1 \right) \\
    & W_{t,k} \mid U_t \sim \mathrm{Unif} \left( 2 U_{t,k} - 1, 2 U_{t,k} + 1 \right), \qquad Z_{t,k} \mid U_t \sim \mathrm{Unif} \left( 2 U_{t,k} - 1, 2 U_{t,k} + 1 \right)
\end{align*}
for $k=1,\ldots,K$.
Since $U_t$ contains serial correlation, so do the observed variables $(W_t,Z_t,Y_t)$.
As an example, $(Y_t,W_t,Z_t)$ may be numbers of hospitalizations due to various causes up to shifting and scaling, while $U_t$ may be the unobserved confounding factors, such as the overall infection level and the overall immune status.
From the above formula for $Y_t \mid U_t$, we see that the true ATT equals two.

Therefore, the outcomes are generated from a factor model \citep{Abadie2010} and $h^*: w \mapsto \alpha_0 + \sum_{k=1}^K \alpha_k w_k$ satisfies Condition~\ref{cond: outcome bridge} for some coefficients $\alpha_0,\ldots,\alpha_K \in \real$. 
By the properties of the copula generating $U_t$, the likelihood ratio in Condition~\ref{cond: treatment bridge} is
$C \exp( - \sum_{k=1}^K u_k)$
for some constant $C>0$. 
By a simple calculation, we can check that $q^*: z \mapsto \exp \left( \beta_0 + \sum_{k=1}^K \beta_k z_k \right)$  satisfies Condition~\ref{cond: treatment bridge} for some coefficients $\beta_0,\ldots,\beta_K \in \real$.
For methods that involve a correctly specified outcome confounding bridge function $h^*$, we parameterize $h^*$ as above and choose $g_h: z \mapsto (1,z^\top)^\top$. When $q^*$ is correctly specified, we parameterize $q^*$ as above and choose $g_q: w \mapsto (1,w^\top)^\top$. When $h^*$ is misspecified, we omit the last several proxies: we parameterize $h^*$ as $h: w \mapsto \alpha_0 + \alpha_1 w_1$ with coefficients $\alpha_0,\alpha_1 \in \real$, and choose $g_h: z \mapsto (1,z_1)^\top$. 
Similarly, when $q^*$ is misspecified, we parameterize $q^*$ as $q: z \mapsto \exp(\beta_0 + \beta_1 z_1)$ with coefficients $\beta_0,\beta_1 \in \real$, and choose $g_q: w \mapsto (1,w_1)^\top$.

\subsection{Over-identified setting} \label{sec: sim2}

The data generating mechanism is similar to the just identified setting, and we abbreviate the description. There are three latent confounders $U_t \in \real^3$, 
five donors $W_t \in \real^5$ and ten other control units $Z_t \in \real^{10}$. 
The latent confounders are generated as follows:
\begin{align*}
    & \epsilon_t \overset{iid}{\sim} \mathrm{N}(0,I_3), \qquad \undertilde{U}_1 = \epsilon_1, \qquad \undertilde{U}_t = 0.1 \undertilde{U}_{t-1} + 0.9 \epsilon_t \quad (t \geq 2), \\
    & U_{t_-,k} = F_1^{-1}\{\Phi(\undertilde{U}_{t_-,k})\}, \qquad U_{t_+,k} = F_2^{-1}\{\Phi(\undertilde{U}_{t_+,k})\}, \quad t_- \leq T_0,\, t_+ > T_0,\, k=1,\ldots,3.
\end{align*}
The observed outcomes are generated as follows:
\begin{align*}
    & Y_t \mid U_t \sim \mathrm{Unif} \left( 2 \ind(t>T_0) + 2 \sum_{k=1}^3 U_{t,k} - 1, 2 \ind(t>T_0) + 2 \sum_{k=1}^3 U_{t,k} + 1 \right), \\
    & W_{t,k} \mid U_t \sim \mathrm{Unif}(a_k^\top U_t - 1, a_k^\top U_t + 1), \quad k=1,\ldots,5, \\
    & Z_{t,k} \mid U_t \sim \mathrm{Unif}(b_k^\top U_t - 1, b_k^\top U_t + 1),\, \quad k=1,\ldots,10,
\end{align*}
where
\begin{align*}
    &a_1 = (1,0,0)^\top, \quad a_2 = (0,1,0)^\top, \quad a_3 = (0,0,1)^\top, \quad a_4 = (1,1,0)^\top, \quad a_5 = (1,0,1)^\top, \\
    & b_1 = (2,0,0)^\top, \quad b_2 = (0,2,0)^\top, \quad b_3 = (0,0,2)^\top, \quad b_4 = (-3,0,0)^\top, \quad b_5 = (0,-3,0)^\top, \\
    & b_6 = (0,0,-3)^\top, \,\, b_7 = (1,-1,0)^\top, \,\, b_8 = (1,0,-1)^\top, \,\, b_9 = (0,1,-1)^\top, \, b_{10} = (2,-0.5,-0.5)^\top.
\end{align*}
From the form of $Y_t \mid U_t$, the true average treatment effect for the treated unit equals two.

One can verify that there is a
valid outcome confounding bridge function takes of form $h^*: w \mapsto \alpha_0 + \sum_{k=1}^5 \alpha_k w_k$ for some coefficients $\alpha_0,\alpha_1,\ldots,\alpha_5 \in \real$. 
Further, there is a valid treatment confounding bridge function of the form $q^*: z \mapsto \exp(\beta_0 + \sum_{k=1}^{10} \alpha_k z_k)$ for some coefficients $\beta_0,\beta_1,\ldots,\beta_{10} \in \real$. When $h^*$ is correctly specified, we parameterize $h^*$ as above and take $g_h: z \mapsto (1,z^\top)^\top$. When $q^*$ is correctly specified, we also parameterize $q^*$ as above and take $g_q(w)$ to be the vector in $\real^{21}$ consisting of $w_{k_1}^{\gamma_{k_1}} w_{k_2}^{\gamma_{k_2}}$ with $k_1 \neq k_2$, $\gamma_{k_1}+\gamma_{k_2} \leq 2$, and both of $\gamma_{k_1}$ and $\gamma_{k_2}$ are non-negative integers. 
This can be concisely expressed as \texttt{cbind(1,poly(w,degree=2,raw=TRUE))} in \texttt{R}.

We did not encounter numerical issues in this simulation. The sampling distributions of the estimated average treatment effect for the treated unit and the 95\%-Wald confidence interval coverage are presented in Figure~\ref{fig: sim2 phi hat} and \ref{fig: sim2 CI cover}, respectively. As shown in these figures, the performance of the methods is similar to the just identified setting from Section~\ref{sec: sim}.

\begin{figure}[bt]
    \centering
    \includegraphics[scale=0.7]{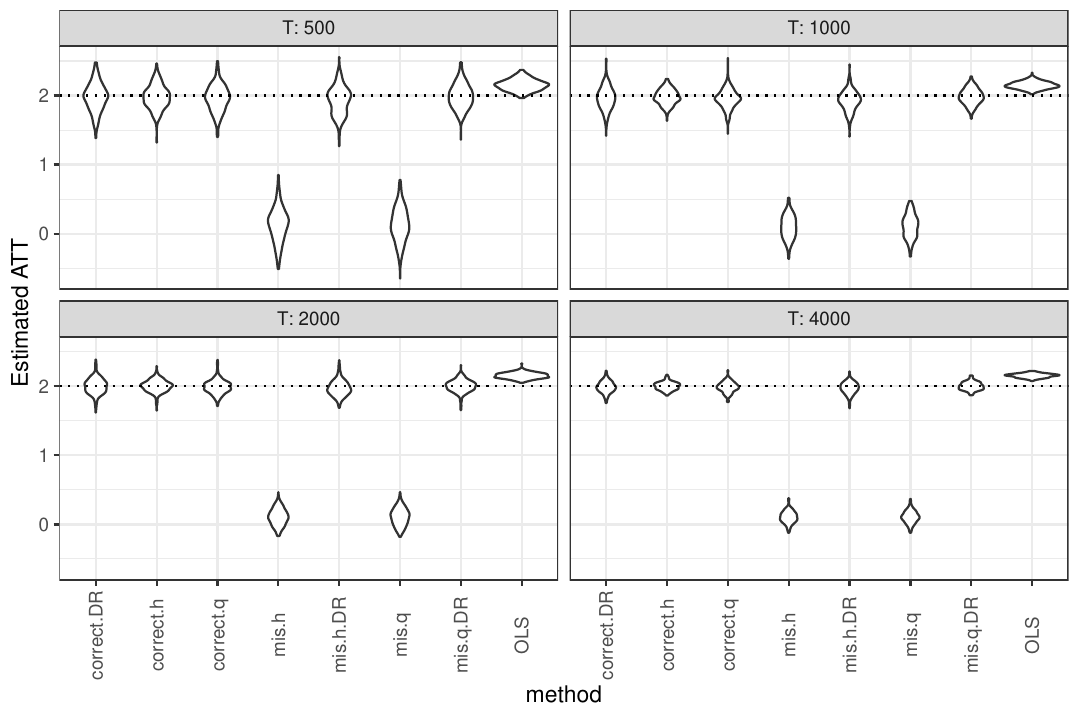}
    \caption{Figures similar to Figure~\ref{fig: sim phi hat} for the over-identified setting.}
    \label{fig: sim2 phi hat}
\end{figure}

\begin{figure}[bt]
    \centering
    \includegraphics[scale=0.7]{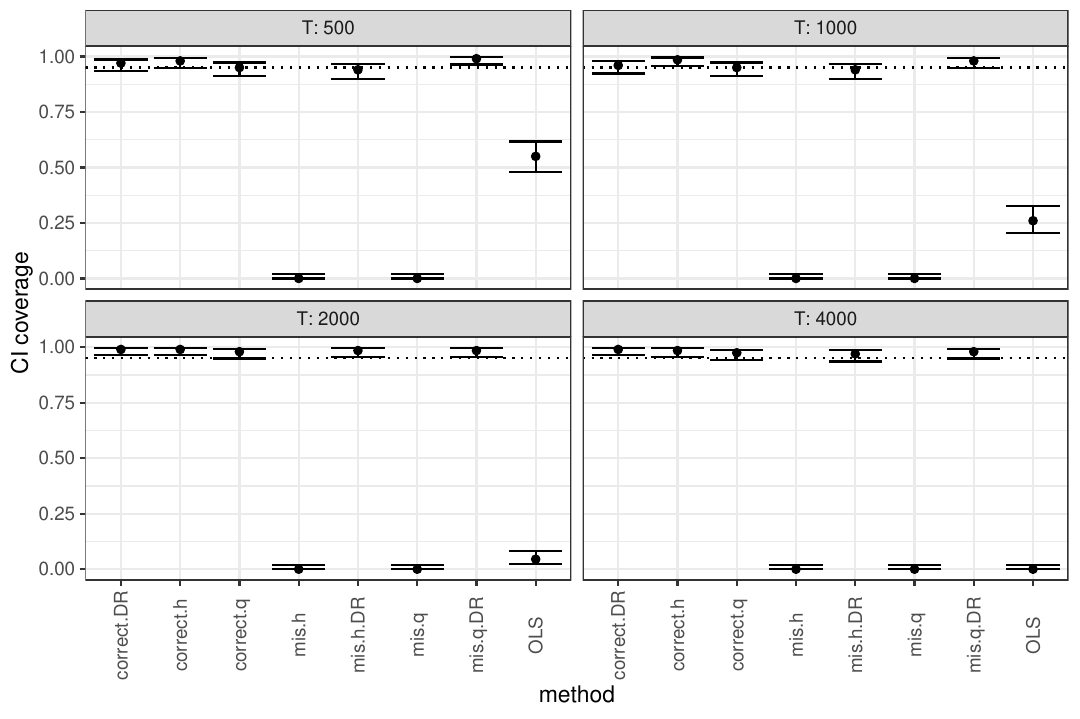}
    \caption{Figures similar to Figure~\ref{fig: sim CI cover} for the overidentified setting.}
    \label{fig: sim2 CI cover}
\end{figure}

\subsection{Smaller sample size $T$} \label{sec: sim smallT}

We investigate the performance of our methods when the sample size $T$ is relatively small. In this simulation, we consider a similar setting as in Section~\ref{sec: sim} except that $T \in \{80,100\}$, $T-T_0=20$, and we restrict to $K=2$. The choice of $T$ and $T-T_0$ mimics the values in the analysis of Brazil all-cause pneumonia hospitalization in Section~\ref{sec: vaccine analysis}.
The simulation results are presented in Figures~\ref{fig: sim smallT phi hat} and \ref{fig: sim smallT CI cover}.
We observe similar phenomena as in Section~\ref{sec: sim}.
However, due to smaller sample sizes $T$, the estimators' distributions are further from Gaussian than in Section~\ref{sec: sim}, suggesting that the asymptotic results (Theorems~\ref{thm: DR ATT asymptotic normality}, \ref{thm: DR ATT asymptotic normality nonstationary}, \ref{thm: treatment ATT asymptotic normality}, and \ref{thm: treatment ATT asymptotic normality nonstationary}) might not approximate the sampling distributions well at such small sample sizes.
Thus, the somewhat outlying estimate from \texttt{DR} in Section~\ref{sec: vaccine analysis}, which uses a very simple model for the treatment confounding bridge function under a relatively small sample size ($T=108, T_0=84$), might not be surprising.

\begin{figure}[bt]
    \centering
    \includegraphics[scale=0.7]{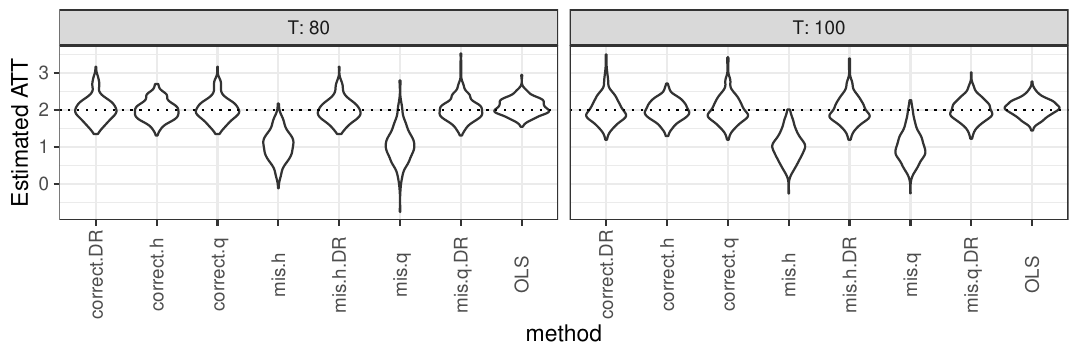}
    \caption{Figures similar to Figure~\ref{fig: sim phi hat} for small sample sizes $T$.}
    \label{fig: sim smallT phi hat}
\end{figure}

\begin{figure}[bt]
    \centering
    \includegraphics[scale=0.7]{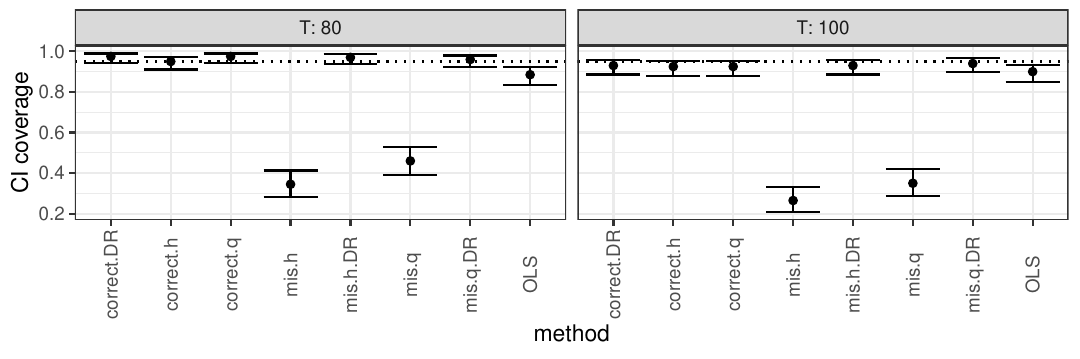}
    \caption{Figures similar to Figure~\ref{fig: sim CI cover} for small sample sizes $T$.}
    \label{fig: sim smallT CI cover}
\end{figure}

\subsection{Short post-treatment period} \label{sec: sim shortpost}

We also investigate the performance of our methods when the post-treatment period is relatively short. The simulation setting is also similar to Section~\ref{sec: sim}, except that $T_0=T- \lfloor \sqrt{T} \rfloor$ and we restrict to $K=2$, where $\lfloor \cdot \rfloor$ denotes the floor function. Although the number of post-treatment time periods $T-T_0$ still grows to infinity as $T \to \infty$, the growth has a slower rate and thus $T_0/T \to 1$. The results are presented in Figures~\ref{fig: sim shortpost phi hat} and \ref{fig: sim shortpost CI cover}. Unlike Section~\ref{sec: sim} where $T_0/T=1/2$, proximal causal inference methods whose validity relies on the correct weighting function (\texttt{corect.q} and \texttt{mis.h.DR}) perform poorly with large bias, even if the treatment confounding bridge function is correctly specified. In contrast, methods whose validity can be justified by correct outcome bridge functions (\texttt{correct.DR}, \texttt{correct.h} and \texttt{mis.q.DR}) still appear to perform well as in Section~\ref{sec: sim}. Thus, when $T_0/T \to 1$, our proposed method is not robust against misspecification of the outcome bridge function, even if the treatment confounding bridge function is correctly specified. Having a sufficient proportion of post-treatment periods appears necessary for our proposed method to be doubly robust.
One explanation for this phenomenon is that, to obtain a stable estimator of the treatment confounding bridge function capturing the likelihood ratio in \eqref{eq: define q}, a comparable amount of data from pre- and post-treatment periods is needed. Having too little data from either period could lead to substantially larger variances for the treatment confounding bridge function estimator.

\begin{figure}[bt]
    \centering
    \includegraphics[scale=0.7]{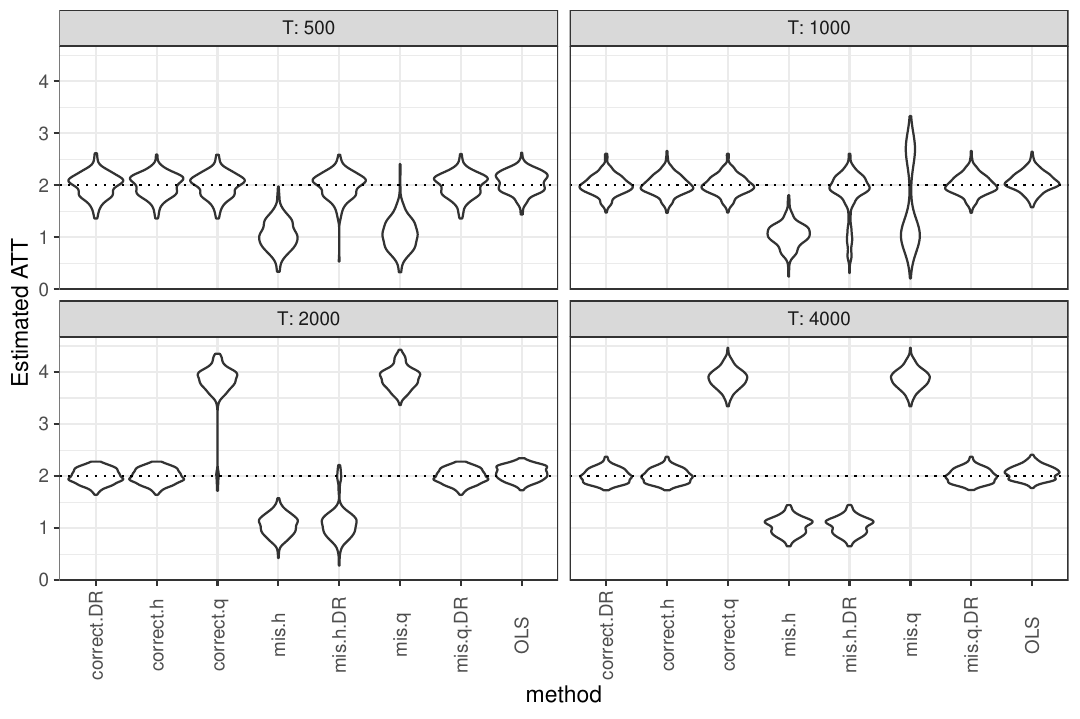}
    \caption{Figures similar to Figure~\ref{fig: sim phi hat} for short post-treatment periods.}
    \label{fig: sim shortpost phi hat}
\end{figure}

\begin{figure}[bt]
    \centering
    \includegraphics[scale=0.7]{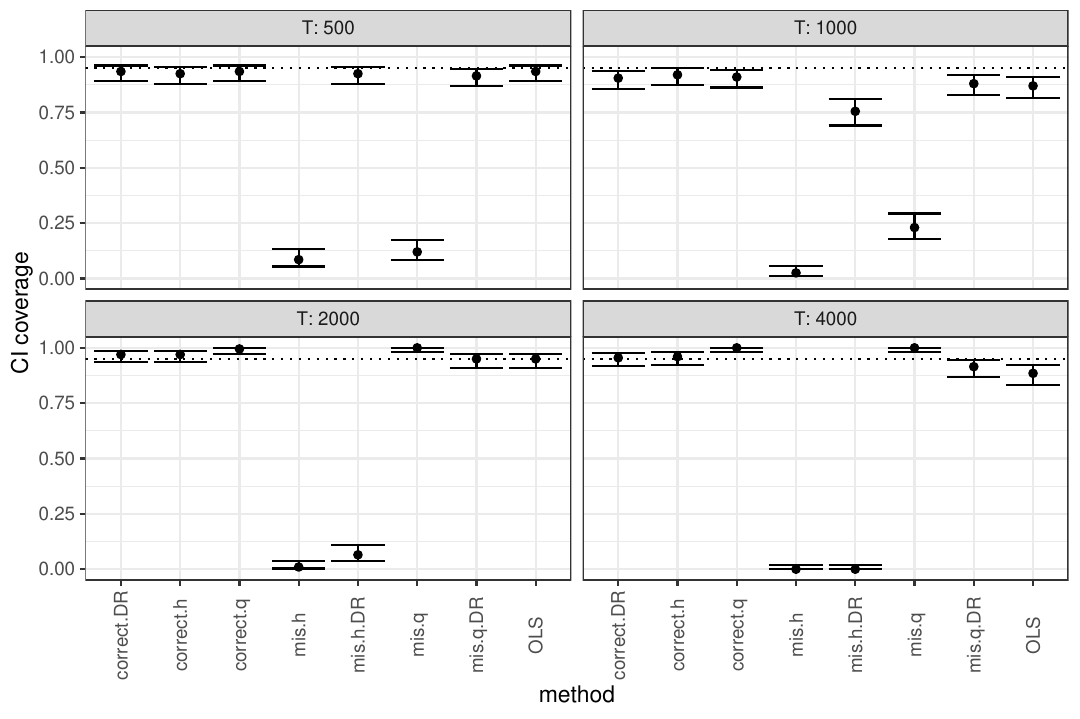}
    \caption{Figures similar to Figure~\ref{fig: sim CI cover} for short post-treatment periods.}
    \label{fig: sim shortpost CI cover}
\end{figure}

\subsection{Autoregressive data-generating model} \label{sec: sim normal}

We consider the following autoregressive data-generating model:
\begin{align*}
    & \epsilon_t \overset{i.i.d.}{\sim} \mathrm{N}(0,I_K), \qquad U_1 = \epsilon_1, \qquad U_t = 0.1 U_{t-1} + 0.9 \epsilon_t, \quad t \geq 2, \\
    & Y_t \mid U_t \sim \mathrm{N} \left( 2 \ind(t > T_0) + 2 \sum_{k=1}^K U_{t,k},1 \right), \\
    & W_{t,k} \mid U_t \sim \mathrm{N}(2 U_{t,k},1), \qquad Z_{t,k} \mid U_t \sim \mathrm{N}(2 U_{t,k},1).
\end{align*}
We also restrict to $K=2$ and consider the seven methods as in Section~\ref{sec: sim}. The only difference is the forms of the confounding bridge functions because $U_t$ is marginally Gaussian, rather than exponentially, distributed (see Examples~\ref{ex: normal} and \ref{ex: exponential}). In particular, when the outcome confounding bridge function is correctly specified, we parameterize $h^*$ as $w \mapsto \alpha_0 + \alpha_1 w_1 + \alpha_2 w_2 + \alpha_3 w_1^2 + \alpha_4 w_1 w_2 + \alpha_5 w_2^2$ and choose $g_h: z \mapsto (1,z_1,z_2,z_1^2,z_1 z_2,z_2^2)^\top$; when the treatment confounding bridge function is correctly specified, we parameterize $q^*$ as $z \mapsto \exp( \beta_0 + \beta_1 z_1 + \beta_2 z_2 + \beta_3 z_1^2 + \beta_4 z_1 z_2 + \beta_5 z_2^2 )$ and choose $g_q: w \mapsto (1,w_1,w_2,w_1^2,w_1 w_2,w_2^2)^\top$. We also consider misspecified confounding bridge functions similar to Section~\ref{sec: sim}.

The simulation results are shown in Figures~\ref{fig: sim normal phi hat} and \ref{fig: sim normal CI cover}. We overall performance of all methods are similar to Section~\ref{sec: sim}. The only exception is that \texttt{mis.h} and \texttt{OLS} happen to also perform reasonably well. However, as we show in Section~\ref{sec: sim}, these two methods are not guaranteed to produce consistent and asymptotically normal estimators.

\begin{figure}[bt]
    \centering
    \includegraphics[scale=0.7]{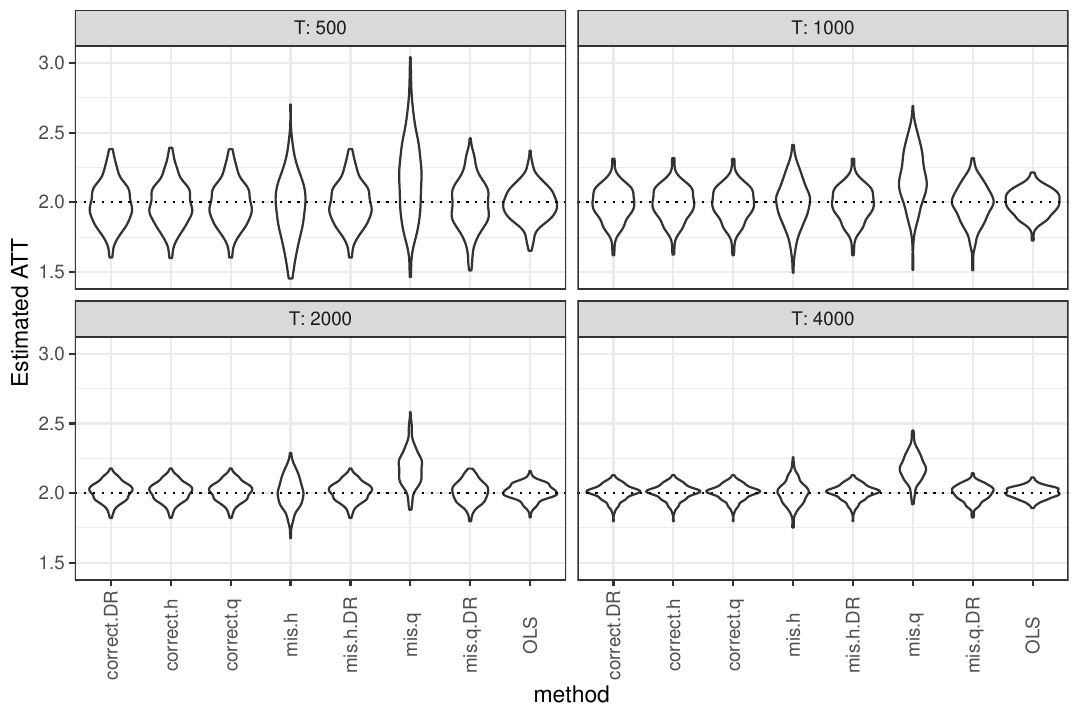}
    \caption{Figures similar to Figure~\ref{fig: sim phi hat} for an autoregressive data-generating model.}
    \label{fig: sim normal phi hat}
\end{figure}

\begin{figure}[bt]
    \centering
    \includegraphics[scale=0.7]{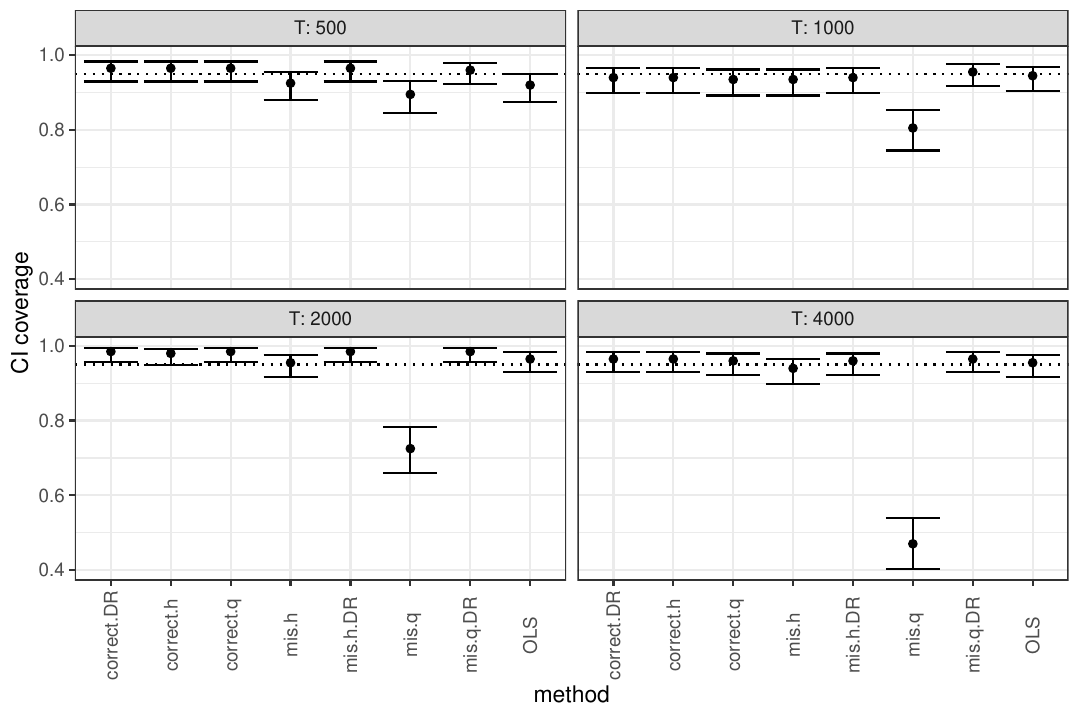}
    \caption{Figures similar to Figure~\ref{fig: sim CI cover} for an autoregressive data-generating model.}
    \label{fig: sim normal CI cover}
\end{figure}

\subsection{Detrending before analysis} \label{sec: sim detrend}

We investigate the effect of detrending the panel data before applying our methods. We consider a scenario similar to that in Section~\ref{sec: sim normal}, except that we add a time trend $t \mapsto 10 t/T + 10 (t/T)^2$ when generating the observed data $(W_t,Z_t,Y_t)$. In all methods, we detrend the data as described in the following Section~\ref{sec: data analysis2} before applying all methods. We focus on the methods with correctly specified confounding bridge functions and the case $K=2$. We run 1,000 Monte Carlo simulations for each $T$ to obtain more accurate inference about the CI coverage.

The results are shown in Figures~\ref{fig: sim detrend phi hat} and \ref{fig: sim detrend CI cover}.
Although the estimators still appear normally distributed, the CI coverage is lower than the nominal level 95\% by around 2--5\%. Such anti-conservativeness might be caused by the fact that the uncertainty in estimating the trend is not accounted for in this procedure.

\begin{figure}[bt]
    \centering
    \includegraphics[scale=0.7]{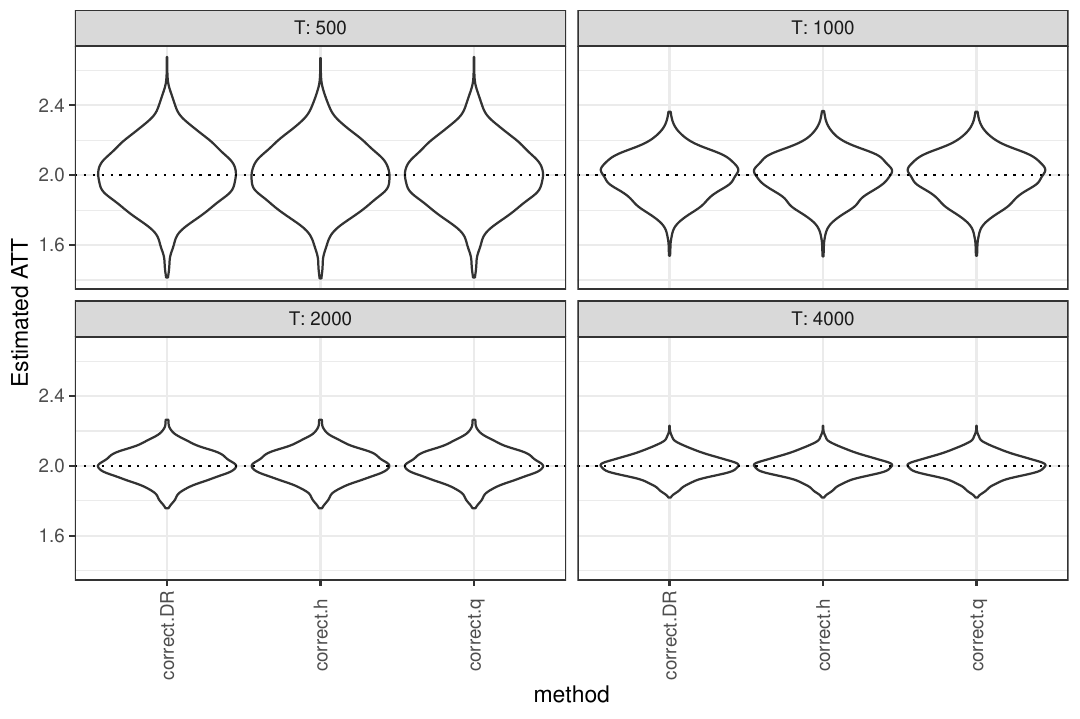}
    \caption{Figures similar to Figure~\ref{fig: sim phi hat} with detrending.}
    \label{fig: sim detrend phi hat}
\end{figure}

\begin{figure}[bt]
    \centering
    \includegraphics[scale=0.7]{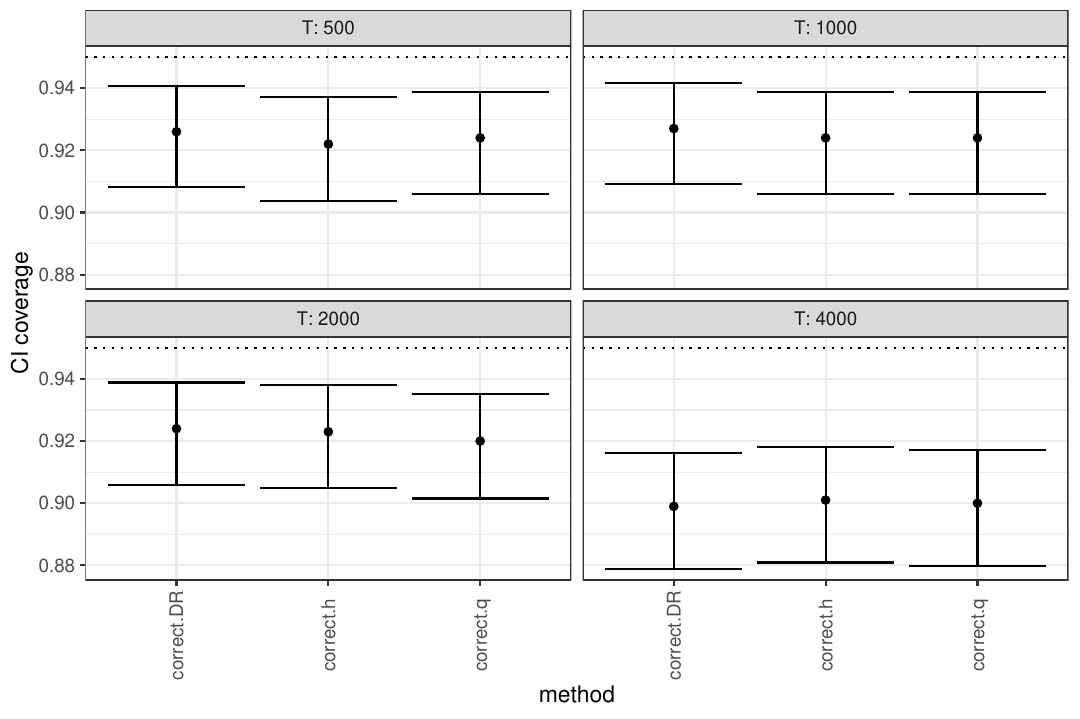}
    \caption{Figures similar to Figure~\ref{fig: sim CI cover} with detrending.}
    \label{fig: sim detrend CI cover}
\end{figure}

\subsection{Post-selection} \label{sec: sim post-select}

We investigate the effect of post-selecting $W$ and $Z$ before applying our methods. We consider a scenario similar to that in Section~\ref{sec: sim normal}, except that we first use Abadie's synthetic control method on all control units, and then select those with top $K$ weights to be donors $W$ and the others to be supplemental proxies $Z$. We also focus on the methods with correctly specified confounding bridge functions and the case $K=2$.

The results are shown in Figures~\ref{fig: sim detrend phi hat} and \ref{fig: sim detrend CI cover}.
Among the 2,400 estimates, 202 estimates (8.4\%) are outside the range $(1, 3)$ shown in Figure~\ref{fig: sim detrend phi hat}.
The estimators' distributions appear further from normal distributions compared to the case without post-selection, but the CI coverage still appears close to the nominal level 95\%. With post-selection, proximal methods appear to produce estimates with extremely large magnitudes and standard errors more often.
This is likely due to the fact that post-selection sometimes selects wrong units for proxies $W$ and $Z$, and thus wrong models for both confounding bridge functions. However, in general, we do not expect the CI coverage to be close to the nominal level.

\begin{figure}[bt]
    \centering
    \includegraphics[scale=0.7]{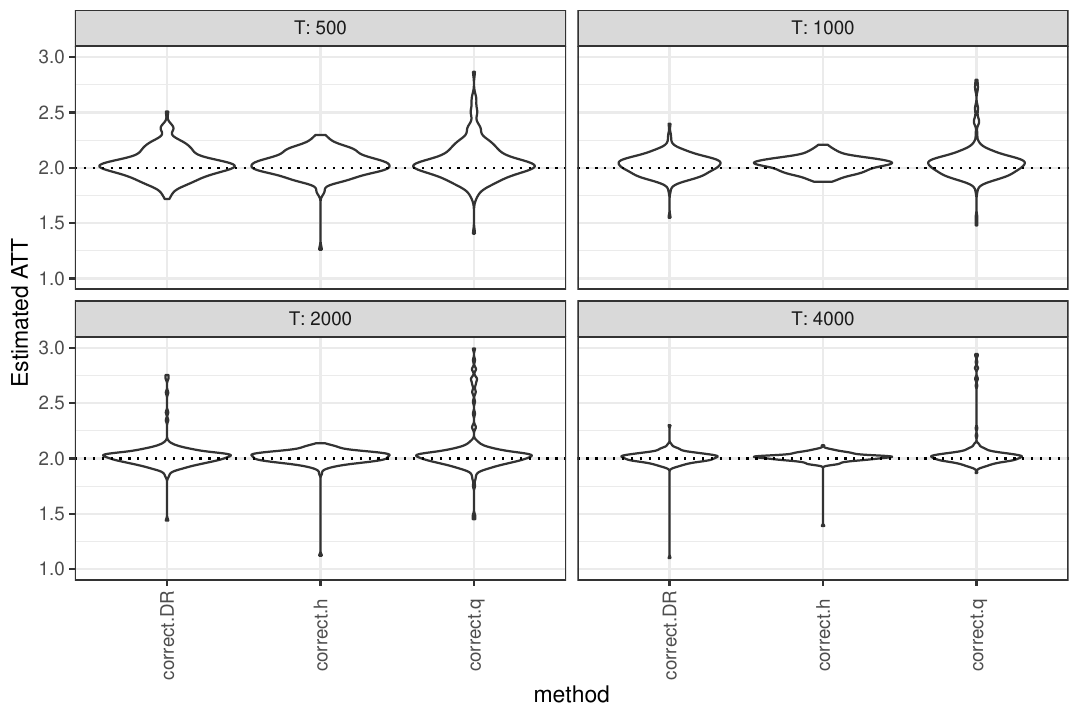}
    \caption{Figures similar to Figure~\ref{fig: sim phi hat} with post-selection.}
    \label{fig: sim post-select phi hat}
\end{figure}

\begin{figure}[bt]
    \centering
    \includegraphics[scale=0.7]{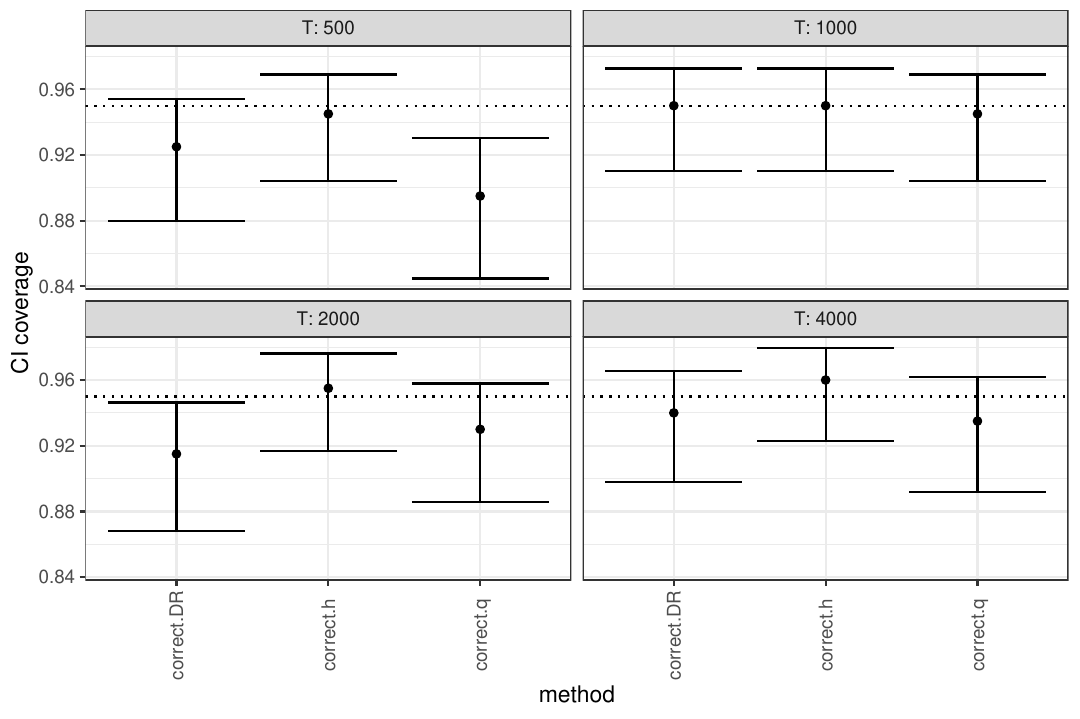}
    \caption{Figures similar to Figure~\ref{fig: sim CI cover} with post-selection.}
    \label{fig: sim post-select CI cover}
\end{figure}

\subsection{Nonstationarity and time-varying treatment effect} \label{sec: sim nonstationarity}

We also investigate the performance of our proposed method with nonstationarity and time-varying treatment effects. We consider a scenario similar to that in Section~\ref{sec: sim normal}, except that (i) $(W_t,Z_t,Y_t(0))$ has a seasonal trend $1.5 \sin(20 \pi t/T)$, and (ii) the true treatment effect at time $t$ is $(8t)/(3T)$, and thus the treatment effect averaged over post-treatment periods approaches the value two as $T \rightarrow \infty$. This scenario corresponds to Section~\ref{sec: nonstationary}. 
We focus on the proximal methods based on the generalized method of moments involving weighting, and we expect them to be consistent, with conservative inference.

The results are shown in Figures~\ref{fig: sim nonstationary phi hat}--\ref{fig: sim nonstationary SE SD}. As discussed in Section~\ref{sec: estimate ATT nonstationary}, the estimates appear consistent and asymptotically normal, but the CI coverage is above the nominal level.
Although the CI coverage is as high as 100\% for large sample sizes ($T \geq 2000$), the standard error overestimates the estimator's standard deviation by around 50--60\%, and thus the CI may still be meaningful.

\begin{figure}[bt]
    \centering
    \includegraphics[scale=0.7]{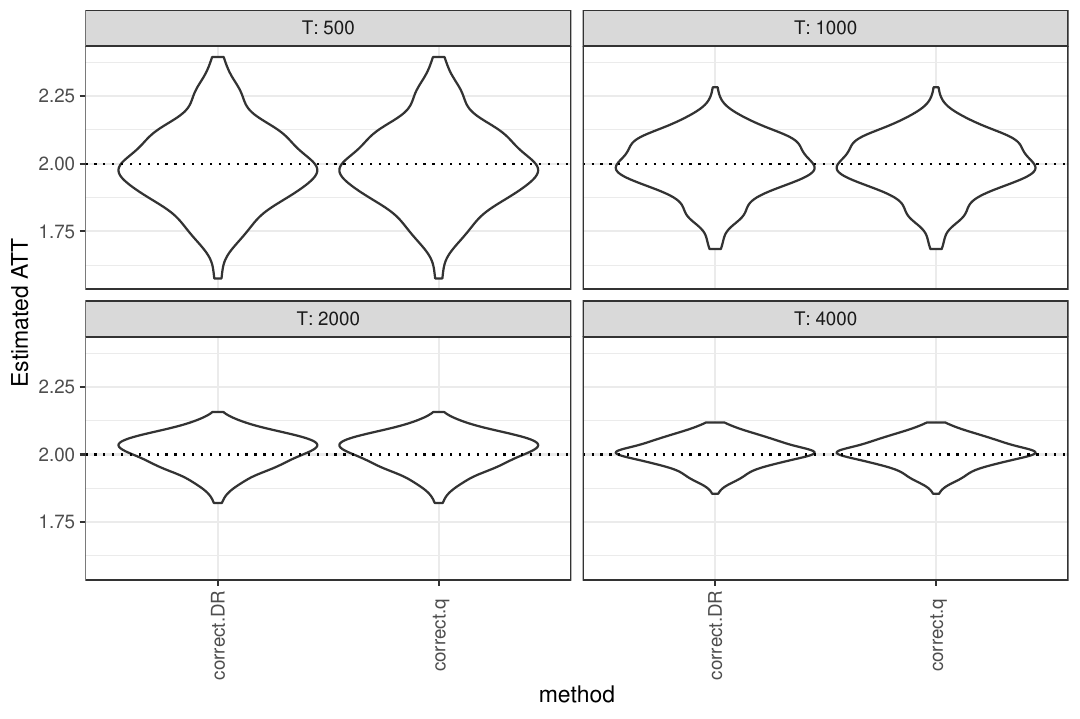}
    \caption{Figures similar to Figure~\ref{fig: sim phi hat} with nonstationarity and time-varying treatment effects.}
    \label{fig: sim nonstationary phi hat}
\end{figure}

\begin{figure}[bt]
    \centering
    \includegraphics[scale=0.7]{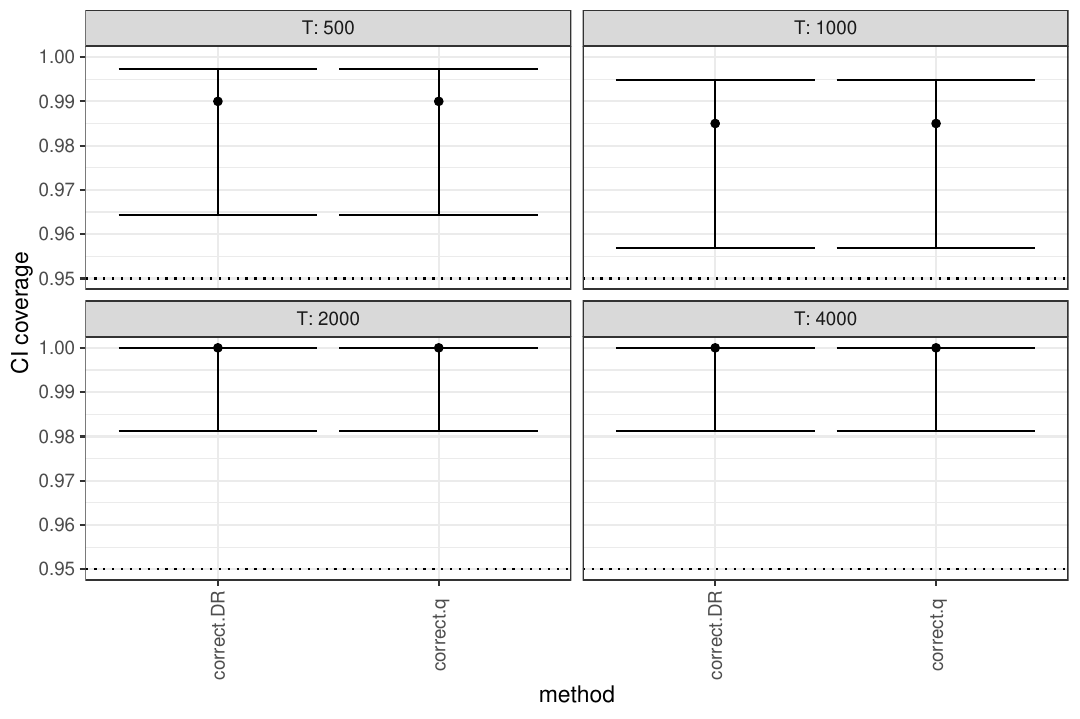}
    \caption{Figures similar to Figure~\ref{fig: sim CI cover} with nonstationarity and time-varying treatment effects.}
    \label{fig: sim nonstationary CI cover}
\end{figure}

\begin{figure}[bt]
    \centering
    \includegraphics[scale=0.7]{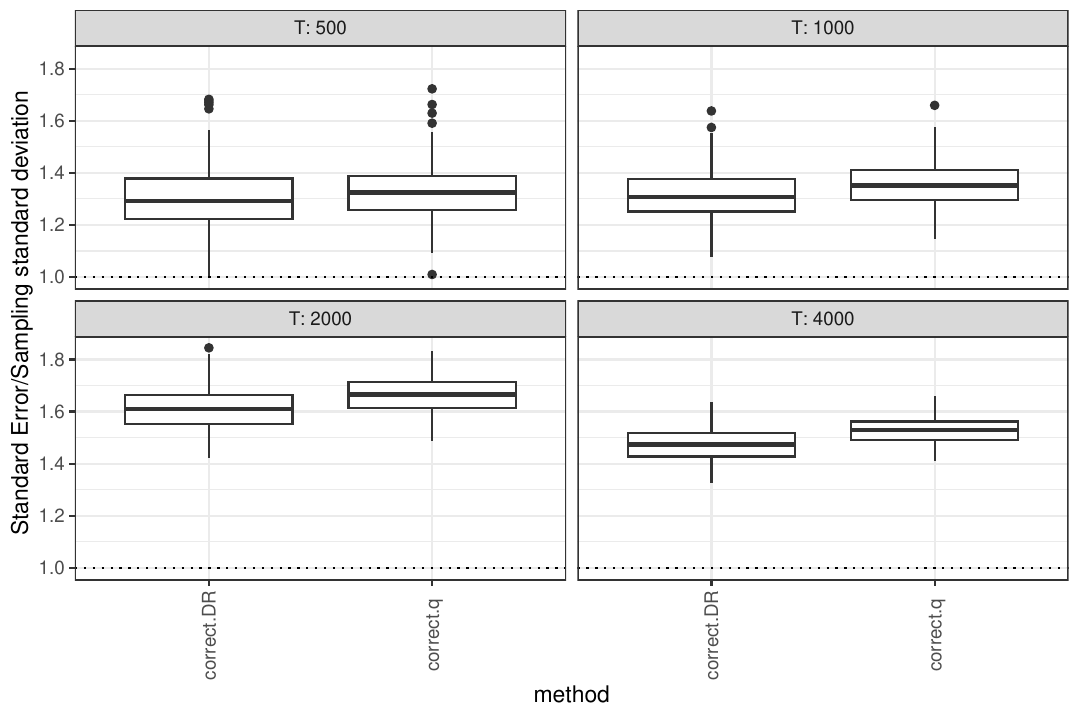}
    \caption{Ratio of the standard error to the standard deviation of the estimator's sampling distribution with nonstationarity and time-varying treatment effects.}
    \label{fig: sim nonstationary SE SD}
\end{figure}

\section{Additional examples} \label{sec: data analysis2}

\subsection{Analysis of Florida homicide rates} \label{sec: Florida analysis}

We apply our methods to study the effect of Florida's ``stand your ground'' law implemented in October 2005 on homicide rates \citep{Bonander2021}. In these data, monthly public safety measures are collected in 16 states including Florida from January 1999 to December 2014.

Condition~\ref{cond: treatment bridge} requires that the distribution of $U_{t_+}$ is dominated by $U_{t_-}$, and \eqref{eq: identify q} can only be solved if the distribution of $W_{t_+}$ is dominated by $W_{t_-}$. In practice, however, there might be a time trend in the time series $(W_t,Z_t,Y_t)$, in which case the treatment confounding bridge function would fail to exist. 
Therefore, we recommend preprocessing the data by detrending to account for the time trend, so that Condition~\ref{cond: treatment bridge} may be plausible.
One may fit a regression model $f: \{1,\ldots,T\} \to \real$ with covariate being time $t$ and outcome being the observed outcomes $W_t$ and $Z_t$ in all control units, use this model to predict an outcome $f(t)$ for each time period $t$, and finally
consider 
the residuals $(\tilde{W}_t,\tilde{Z}_t,\tilde{Y_t}) = (W_t-f(t),Z_t-f(t),Y_t-f(t))$ as the new detrended outcomes.

We then use these detrended outcomes for all subsequent analyses. This does not impact the classical synthetic control method of \cite{Abadie2010}, because, if we find $Y_t(0) \approx \sum_i \alpha_i W_{ti}$ where $W_t$ is the collection of $W_{ti}$ and $\sum_i \alpha_i=1$, then the same holds for detrended outcomes: $\tilde{Y}_t(0)= Y_t(0) - f(t) \approx \sum_i \alpha_i \tilde{W}_{ti}$ where $\tilde{W}_{ti} = W_{ti} - f(t)$. If $\sum_i \alpha_i \neq 1$, we still have $\tilde{Y}_t(0) \approx \sum_i \alpha_i \tilde{W}_{ti} + (\sum_i \alpha_i -1) f(t)$, that is, one may include an additional time-varying intercept to explicitly capture the time trend. Under the linear factor model of \cite{Abadie2010}, this intercept vanishes, and thus the model remains correctly specified. Therefore, the treatment effect on the residual can still be interpreted as the treatment effect on the original outcome.
The above discussion is meant to illustrate the interpretation of the detrended outcomes, but we do not make this assumption throughout the analysis.
However, as shown in Section~\ref{sec: sim detrend}, detrending might lead to slightly to moderately anti-conservative inference.

We detrend the Florida data by fitting a quadratic function of time to homicide rates. We model the time-varying average treatment effect $\phi_\lambda(t)$ for the treated unit as a constant. 
We also model the outcome bridge function as a linear function $h_\alpha: w \mapsto (1,w^\top)^\top \alpha$, and select the homicide rates in the states of Arkansas, Maryland, New Jersey, and New York as the proxy $W_t$.
This choice is motivated by Abadie's synthetic control method \citep{Abadie2010}, 
for which weights are negligible for all but these four states.
Our inference might be non-informative due to increased uncertainty associated with empirical selection of the proxy $W_t$. Proxy selection remains an open problem we are actively working on. Similar challenges are known to impact IV analysis with potentially invalid IVs.
We use all other states as control units in $Z_t$.
We consider the following two parametrizations of the treatment confounding bridge function, where the second has a larger set of states included in the model:
\begin{align*}
    & q_\beta(Z_t) = \exp \left\{ \beta_0 + \beta_{\mathrm{Delaware}} Z_{t,\mathrm{Delaware}} \right\}; \\
    & q_\beta(Z_t) = \exp \left\{ \beta_0 + \beta_{\mathrm{Delaware}} Z_{t,\mathrm{Delaware}} + \beta_{\mathrm{Ohio}} Z_{t,\mathrm{Ohio}} \right\}.
\end{align*}
We considered these two states in the treatment confounding bridge model due to their relative proximity and similarities to Florida. Thus, we hypothesize that these two states capture variation in the unmeasured confounder $U_t$. More states similar to Florida could have been included, had data from more control units been available.

The point estimates and 95\% confidence intervals for the average treatment effect for the treated unit produced by the above methods are presented in Table~\ref{tab: Florida results}. The trajectories of synthetic controls and the actual residuals of log GDP per capita in Kansas are presented in Figure~\ref{fig: Florida main analysis}.
We also conduct a falsification analysis of the placebo effect in October 2002. The results are presented in Table~\ref{tab: Florida results} and Figure~\ref{fig: Florida analysis}.

Because this time series is quite noisy, compared to typical data from econometric applications (e.g., data from Section~\ref{sec: Kansas analysis}), \texttt{OLS} does not yield a good pre-treatment fit. 
\texttt{Abadie's SC} appears to yield a reasonable pre-treatment fit because of the regularization, but the high noise level might still be a concern. The pre-treatment fits for proximal causal inference methods also appear reasonable and qualitatively similar to that of \texttt{Abadie's SC}. 
The point estimates of the effect of ``stand your ground'' law on homicide rates from all methods are positive. However, methods based on proximal causal inference involving the outcome confounding bridge function report an insignificant effect and much smaller point estimates than \texttt{Abadie's SC}, while other methods yield similar conclusions to \texttt{Abadie's SC}. 
Because of the concern about the noise level, the results from \texttt{DR}, \texttt{DR2} and \texttt{Outcome bridge} might be more reliable. 
The methods based on weighting might not be reliable due to a lack of data from control states that are similar to Florida.
These results suggest that, with a high noise level present, doubly robust methods may outperform \texttt{OLS} and \texttt{Abadie's SC} because doubly robust methods do not rely on a near-perfect pre-treatment fit.

\begin{table}
    \centering
    \begin{tabular}{l|c|c}
        Method & Florida's ``stand your ground'' law & placebo \\
        \hline\hline
        \texttt{Abadie's SC} & 0.083 & -0.025 \\
        \texttt{OLS} & 0.066 (0.045, 0.086) & -0.032 (-0.062, -0.002) \\
        \texttt{DR} & 0.024 (-0.061, 0.108) & -0.075 (-0.239, 0.088) \\
        \texttt{DR2} & 0.006 (-0.099, 0.110) & -0.042 (-0.086, 0.002) \\
        \texttt{Outcome bridge} & 0.012 (-0.087, 0.112) & -0.094 (-0.262, 0.075) \\
        \texttt{Treatment bridge} & 0.058 (0.038, 0.078) & -0.033 (-0.061, -0.005) \\
        \texttt{Treatment bridge2} & 0.057 (0.036, 0.078) & -0.039 (-0.092, 0.013)
    \end{tabular}
    \caption{Table similar to Table~\ref{tab: vaccine results} for Florida's ``stand your ground'' law.}
    \label{tab: Florida results}
\end{table}

\begin{figure}
    \centering
    \begin{subfigure}{0.495\textwidth}
    \includegraphics[width=\textwidth]{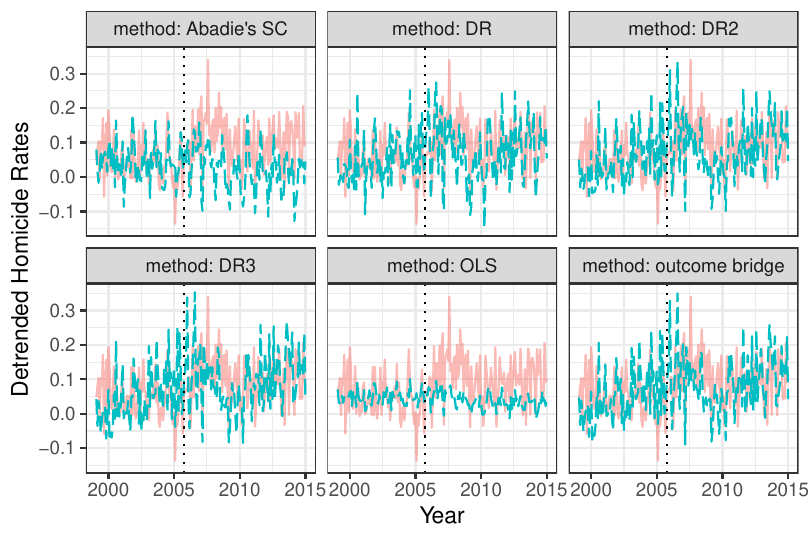}
    \caption{Florida ``stand your ground'' law}
    \label{fig: Florida main analysis}
    \end{subfigure}
    \hfill
    \begin{subfigure}{0.495\textwidth}
        \includegraphics[width=\textwidth]{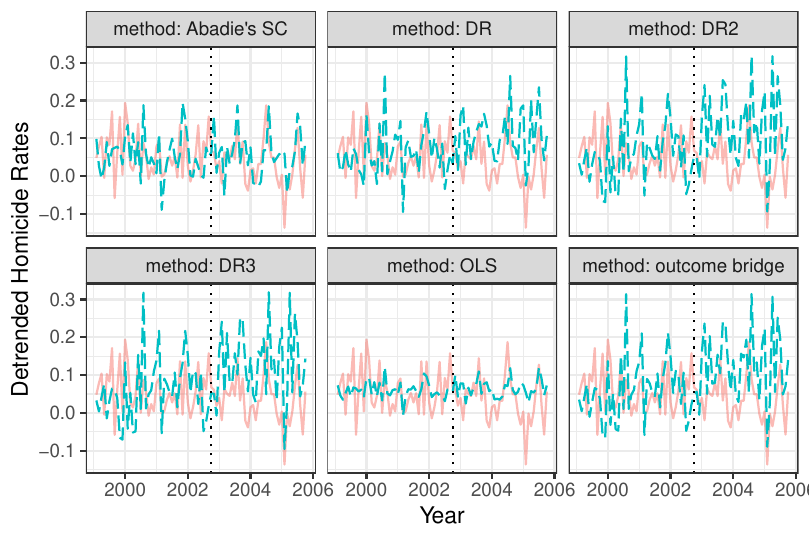}
        \caption{Placebo}
        \label{fig: Florida placebo analysis}
    \end{subfigure}
    \caption{Figures similar to Figure~\ref{fig: vaccine analysis} for Florida ``stand your ground'' law.}
    \label{fig: Florida analysis}
\end{figure}

\subsection{Analysis of Kansas GDP} \label{sec: Kansas analysis}

We also illustrate our methods by studying the effect of an aggressive tax cut in Kansas in the first quarter of 2012 on the logarithm of GDP per capita. 
This question has been studied in \cite{Rickman2018} using the classical synthetic control method proposed by \cite{Abadie2010,Abadie2015}. In these data, economic outcomes are measured every quarter from 1990 to 2016 for all US states. GDP was measured in millions of U.S. dollars.

We preprocess the data, choose proxies $(W_t,Z_t)$ and specify the confounding bridge functions $(h^*,q^*)$ similarly to Section~\ref{sec: Florida analysis}. Specifically, the donors forming the proxy $W_t$ are the states of North Dakota, South Carolina, Texas, and Washington as donors in $W_t$; the quarterly log GDP per capita of all other control states form the proxy $Z_t$.
We consider the following three parametrizations of the treatment confounding bridge function with increasing sets of states included in the model:
\begin{align*}
    & q_\beta(Z_t) = \exp \left\{ \beta_0 + \beta_{\mathrm{Iowa}} Z_{t,\mathrm{Iowa}} \right\}; \\
    & q_\beta(Z_t) = \exp \left\{ \beta_0 + \beta_{\mathrm{Iowa}} Z_{t,\mathrm{Iowa}} + \beta_{\mathrm{South\ Dakota}} Z_{t,\mathrm{South\ Dakota}} \right\}; \\
    & q_\beta(Z_t) = \exp \left\{ \beta_0 + \beta_{\mathrm{Iowa}} Z_{t,\mathrm{Iowa}} + \beta_{\mathrm{South\ Dakota}} Z_{t,\mathrm{South\ Dakota}} + \beta_{\mathrm{Oklahoma}} Z_{t,\mathrm{Oklahoma}} \right\}.
\end{align*}
These states are chosen because they are similar to Kansas and likely to capture the effect of the unmeasured confounder $U_t$.

The analysis results are presented in Table~\ref{tab: Kansas results} and Figure~\ref{fig: Kansas analysis}.
All methods support a decrease in log GDP per capita due to the tax cut; 
the decrease is significant for all methods providing 95\% confidence intervals (namely all methods except \texttt{Abadie's SC}); the only exception is the weighted method based on a single state included in the treatment confounding bridge model, \texttt{treatment bridge}. This result may be due to model misspecification, and the doubly robust method with the same treatment confounding bridge model is still significant.
Point estimates of the average treatment effect for the treated unit from doubly robust methods are almost twice those of \texttt{Abadie's SC} and weighted methods. Doubly robust estimates are well within sampling variability based on the proximal outcome bridge estimates, suggesting that the outcome confounding bridge function may be correctly specified. 
The synthetic controls projected potential outcome trajectories of all methods have similar fits in the pre-treatment period; however, \texttt{Abadie's SC} and \texttt{OLS} appear to lead to a lower synthetic control trajectory in the post-treatment period.

\begin{table}
    \centering
    \begin{tabular}{l|c|c}
        Method & Kansas tax cut & placebo \\
        \hline\hline
        \texttt{Abadie's SC} & -0.048 & 0.029 \\
        \texttt{OLS} & -0.069 (-0.087, -0.050) & 0.026 ($2.6 \times 10^{-6}$, 0.052) \\
        \texttt{DR} & -0.077 (-0.126, -0.028) & 0.004 (-0.068, 0.077) \\
        \texttt{DR2} & -0.095 (-0.147, -0.043) & -0.005 (-0.039, 0.030) \\
        \texttt{DR3} & -0.103 (-0.228, -0.021) & -0.007 (-0.059, 0.046) \\
        \texttt{Outcome bridge} & -0.104 (-0.150, -0.058) & 0.012 (-0.069, 0.093) \\
        \texttt{Treatment bridge} & -0.031 (-0.087, 0.024) & -0.028 (-0.063, 0.008) \\
        \texttt{Treatment bridge2} & -0.017 (-0.032, -0.002) & -0.042 (-0.056, -0.0027) \\
        \texttt{Treatment bridge3} & -0.016 (-0.029, -0.003) & -0.048 (-0.097, 0.001)
    \end{tabular}
    \caption{Table similar to Table~\ref{tab: vaccine results} for tax cut in Kansas.}
    \label{tab: Kansas results}
\end{table}

\begin{figure}
    \centering
    \begin{subfigure}{0.495\textwidth}
    \includegraphics[width=\textwidth]{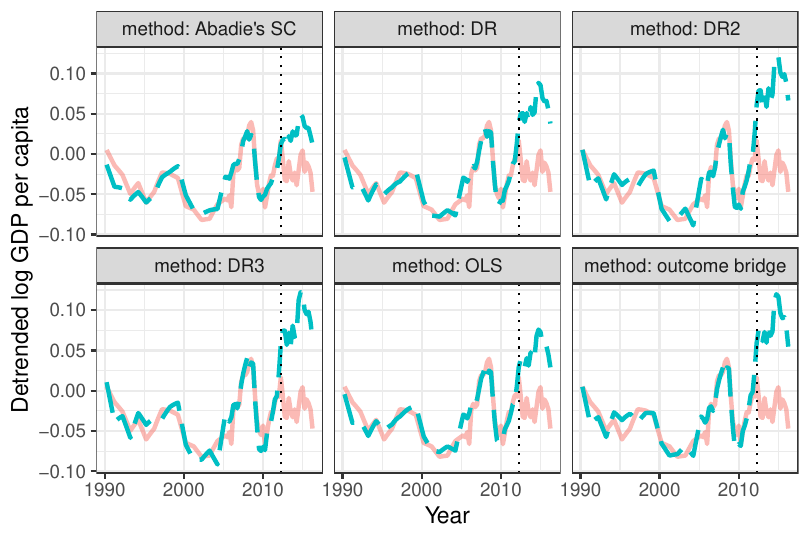}
    \caption{Kansas tax cut}
    \label{fig: Kansas main analysis}
    \end{subfigure}
    \hfill
    \begin{subfigure}{0.495\textwidth}
        \includegraphics[width=\textwidth]{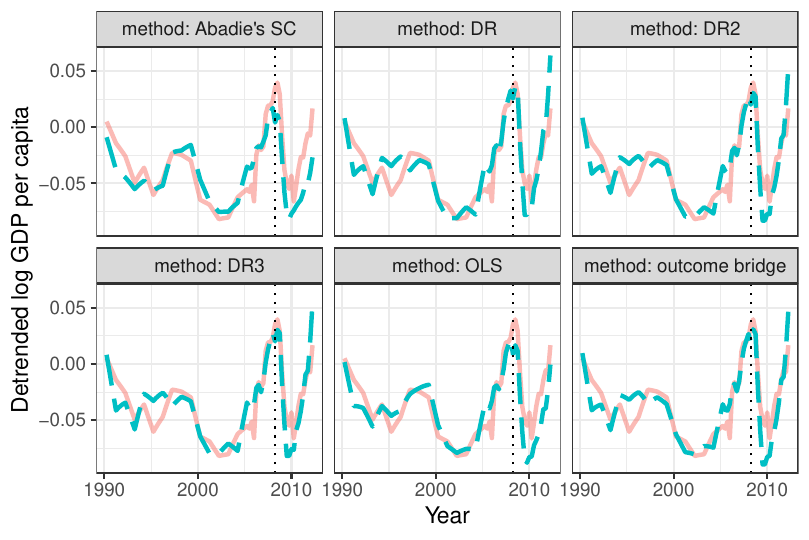}
        \caption{Placebo}
        \label{fig: Kansas placebo analysis}
    \end{subfigure}
    \caption{Figures similar to Figure~\ref{fig: vaccine analysis} for tax cut in Kansas.}
    \label{fig: Kansas analysis}
\end{figure}

We further conduct a falsification analysis of the placebo effect in the first quarter of 2008.
The analysis results are presented in Table~\ref{tab: Kansas results} and Figure~\ref{fig: Kansas placebo analysis}. 
The 95\%-confidence intervals from all methods cover zero, correctly indicating a non-significant effect in the placebo period, with the exception of \texttt{OLS} and the weighted method with two states only included in the treatment confounding bridge model. The point estimates of doubly robust methods are also closer to zero. These results suggest a potentially superior performance of doubly robust methods and a potential bias of \texttt{OLS}.

\section{Existence of treatment confounding bridge function} \label{sec: bridge exists}

In this appendix, we briefly state examples of sufficient conditions for Condition~\ref{cond: treatment bridge}. Sufficient conditions for Conditions~\ref{cond: treatment bridge nonstationary} and \ref{cond: treatment bridge nonstationary limit} are similar. Sufficient conditions for existence of various confounding bridge functions have been considered in \cite{Miao2018}, \cite{Cui2020} and \cite{Shi2021}. In particular, sufficient conditions for Condition~\ref{cond: outcome bridge} are presented in \cite{Shi2021}. We refer readers to these works for more details. These conditions are only sufficient and not necessary.

\subsection{Discrete random variables}

Fix any $t_-\le T_0$ and $t_+>T_0$.
Suppose that, $(Z_{t_-},U_{t_-})$ is a discrete random vector with finite support. Suppose that the supports of $U_{t_-}$ and $Z_{t_-}$ are $\{u_1,\ldots,u_I\}$ and $\{z_1,\ldots,z_J\}$, respectively. Let $P_{Z \mid U}$ be a $J \times I$ matrix with the $(j,i)$-th entry being $\Prob(Z_{t_-}=z_j \mid U_{t_-}=u_i)$. Let $R_U$ be the $I$-dimensional vector of likelihood ratios $\Prob(U_{t_+}=u_i)/\Prob(U_{t_-}=u_i)$. 
For any function $q: \mathcal{Z} \rightarrow \real$, we may equivalently represent $z \mapsto q(z)$ by a $J$-dimensional vector $Q$ with entries $q(z_j)$, $1\le j\le J$. Condition~\ref{cond: treatment bridge} is then equivalent to the existence of a solution in $Q$ to
\begin{equation}
    P_{Z \mid U}^\top Q = R_{U}. \label{eq: discrete completeness to solve}
\end{equation}
Therefore, a sufficient condition for Condition~\ref{cond: treatment bridge} is that $P_{Z \mid U}$ has full column rank, namely $\rank(P_{Z \mid U})=I$, and the solution to \eqref{eq: discrete completeness to solve} is identical for all $t_- \leq T_0$, which is implied by stationarity of $(Z_{t_-},U_{t_-})$ ($t_- \leq T_0$).

\subsection{Continuous random variables}

We first fix any $t_-\le T_0$ and $t_+>T_0$ in this appendix.
Suppose that $(Z_{t_-},U_{t_-})$ is a continuous random vector. For any distribution $P$, recall that $L^2(P)$ denotes the space of all square-integrable functions with respect to $P$, which is a Hilbert space equipped with inner product $\langle f,g \rangle = \int f g \  \intd P$.
Let $K_{t_-}:L^2(P_{Z_{t_-}}) \rightarrow L^2(P_{U_{t_-}})$ be the operator defined pointwise by $K_{t_-} q : u \mapsto \cexpect\{q(Z_{t_-}) \mid U_{t_-}=u\}$ for $q \in L^2(P_{Z_{t_-}})$.
We assume the following regularity conditions:

\begin{condition} \label{reg cond: compact operator}
    For all $t_- \leq T_0$, $\iint p_{Z_{t_-} \mid U_{t_-}}(z \mid u) p_{U_{t_-} \mid Z_{t_-}}(u \mid z) \ \intd z \intd u < \infty$.
\end{condition}

By Example~2.3 in \cite{Carrasco2007} (pages~5656 and 5659), under Condition~\ref{reg cond: compact operator}, the operator $K_{t_-}$ is compact \citep[see, e.g., Definition~2.17, ][]{Kress2014}. Let $(\lambda_{t_-,m},\varphi_{t_-,m},\psi_{t_-,m})_{m=1}$ be a singular system of $K_{t_-}$ \citep[see, e.g., Theorem~15.16 of][for more about singular systems]{Kress2014}.

\begin{condition} \label{reg cond: square-integrable}
    For all $t_- \leq T_0$,
    $$\int \left\{ \frac{\intd P_{U_{t_+}}}{\intd P_{U_{t_-}}}(u) \right\}^2 p_{U_{t_-}}(u) \ \intd u < \infty.$$
\end{condition}

\begin{condition} \label{reg cond: series}
    For all $t_- \leq T_0$,
    $$\sum_{m=1}^\infty \lambda_{t_-,m}^{-2} \left| \left\langle \frac{\intd P_{U_{t_+}}}{\intd P_{U_{t_-}}}, \psi_{t_-,m} \right\rangle \right|^2 < \infty.$$
\end{condition}

The next assumption is a completeness condition, which is similar to Conditions~\ref{cond: outcome complete} and \ref{cond: treatment complete}.

\begin{condition} \label{reg cond: complete}
    Let $g \in L^2(P_{U_{t_-}})$. For all $t_- \leq T_0$, the following two statements are equivalent: (i) $g(U_{t_-})=0$, (ii) $\cexpect\{g(U_{t_-}) \mid Z_{t_-}\} = 0$.
\end{condition}

Under Conditions~\ref{reg cond: compact operator}--\ref{reg cond: complete}, Condition~\ref{cond: treatment bridge} holds for the fixed $t_-$; that is, a solution to \eqref{eq: define q} exists for the fixed $t_-$. This claim can be proved by Picard's Theorem \citep[Theorem~15.18 of][]{Kress2014}. We next sketch the proof. The orthogonal complement $N(K_{t_-}^*)^\perp$ of the nullspace of the adjoint of $K_{t_-}$ equals $L^2(P_{Z_{t_-}})$ by Condition~\ref{reg cond: complete}. By Condition~\ref{reg cond: square-integrable}, $\intd P_{U_{t_+}}/\intd P_{U_{t_-}}$ lies in $N(K_{t_-}^*)^\perp$. The desired existence result then follows by directly applying Picard's Theorem. Condition~\ref{cond: treatment bridge} then holds if the solution to \eqref{eq: define q} is identical for all $t_- \leq T_0$.

\section{Proofs} \label{sec: proof}

\subsection{Proof of Theorem~\ref{thm: identify ATT treatment}}
\label{pf: thm: identify ATT treatment}

We first prove \eqref{eq: identify EfY0 treatment}. Let $f:\real \rightarrow \real$ be any square-integrable function.
Then,
\begin{align*}
	&\cexpect[q^*(Z_{t_-}) f\{Y_{t_-}(0)\}]
	= \cexpect(\cexpect[q^*(Z_{t_-}) f\{Y_{t_-}(0)\} \mid U_{t_-}]) \\
	&= \cexpect(\cexpect\{q^*(Z_{t_-}) \mid U_{t_-}\} \cexpect[f\{Y_{t_-}(0)\} \mid U_{t_-}]) & \text{(Condition~\ref{cond: proxy independence})} \\
	&= \cexpect \left( \frac{\intd P_{U_{t_+}}}{\intd P_{U_{t_-}}}(U_{t_-}) \cexpect[f\{Y_{t_-}(0)\} \mid U_{t_-}] \right) & \text{(Conditions~\ref{cond: posttreatment identical distribution} \& \ref{cond: treatment bridge})} \\
	&= \int_{\mathcal{U}} \frac{\intd P_{U_{t_+}}}{\intd P_{U_{t_-}}}(u) \cexpect[f\{Y_{t_-}(0)\} \mid U_{t_-}=u] \intd P_{U_{t_-}}(u) \\
	&= \int_{\mathcal{U}} \cexpect[f\{Y_{t_+}(0)\} \mid U_{t_+}=u] \intd P_{U_{t_+}}(u) & \text{(Condition~\ref{cond: covariate shift})} \\
	&= \cexpect[f\{Y_{t_+}(0)\}].
\end{align*}
Thus, \eqref{eq: identify EfY0 treatment} has been proved and \eqref{eq: identify EY0 treatment} follows immediately.
We next prove \eqref{eq: identify q}. Let $f: \mathcal{W} \rightarrow \real$ be any square-integrable function.
\begin{align*}
    & \cexpect\{q^*(Z_{t_-}) f(W_{t_-})\}
    = \cexpect[ \cexpect\{q^*(Z_{t_-}) f(W_{t_-}) \mid U_{t_-}\} ] \\
    &= \cexpect[ \cexpect\{q^*(Z_{t_-}) \mid U_{t_-}\} \cexpect\{f(W_{t_-}) \mid U_{t_-}\} ] & \text{(Condition~\ref{cond: proxy independence})} \\
    &= \cexpect \left[ \frac{\intd P_{U_{t_+}}}{\intd P_{U_{t_-}}}(U_{t_-}) \cexpect\{f(W_{t_-}) \mid U_{t_-}\} \right] & \text{(Conditions~\ref{cond: posttreatment identical distribution} \& \ref{cond: treatment bridge})} \\
    &= \int_{\mathcal{U}} \frac{\intd P_{U_{t_+}}}{\intd P_{U_{t_-}}} (u) \cexpect\{f(W_{t_-}) \mid U_{t_-}=u\} \, \intd P_{U_{t_-}}(u) \\
    &= \int_{\mathcal{U}} \cexpect\{f(W_{t_+}) \mid U_{t_+}=u\} \intd P_{U_{t_+}}(u) =
    \cexpect\{f(W_{t_+})\}.
    & \text{(Condition~\ref{cond: covariate shift})} 
\end{align*}
For a Borel set $B \subseteq \mathcal{U}$,
take $f(w)=\ind(w \in B)$. 
If $P_{W_{t_-}}(B)=0$, the above equality implies that $P_{W_{t_+}}(B)=\cexpect\{q^*(Z_{t_-}) \ind(W_{t_-} \in B)\}=0$. Therefore, $P_{W_{t_+}}$ is dominated by $P_{W_{t_-}}$ and the Radon-Nikodym derivative $\intd P_{W_{t_+}}/\intd P_{W_{t_-}}$ is well defined. Thus, for any integrable function $f$,
$$\cexpect\{q^*(Z_{t_-}) f(W_{t_-})\} = \cexpect \left\{ \frac{\intd P_{W_{t_+}}}{\intd P_{W_{t_-}}} (W_{t_-}) f(W_{t_-}) \right\},$$
that is,
$$\cexpect \left[ \left\{ q^*(Z_{t_-}) - \frac{\intd P_{W_{t_+}}}{\intd P_{W_{t_-}}} (W_{t_-}) \right\} f(W_{t_-}) \right]=0.$$
Since $f$ is arbitrary, we have that
$$\cexpect \left\{ q^*(Z_{t_-}) - \frac{\intd P_{W_{t_+}}}{\intd P_{W_{t_-}}} (w) \mid W_{t_-}=w \right\}=0$$
for $P_{W_{t_-}}$-a.e. $w \in \mathcal{W}$. Equation~\ref{eq: identify q} follows.

We finally prove the uniqueness of $q^*$ under Condition~\ref{cond: treatment complete}. Suppose that two functions $q_1$ and $q_2$ both solve \eqref{eq: identify q}. Then, $\cexpect\{q_1(Z_{t_-}) - q_2(Z_{t_-}) \mid W_{t_-}\}=0$, and thus $q_1(Z_{t_-})=q_2(Z_{t_-})$ by Condition~\ref{cond: treatment complete}.

\subsection{Proof of Theorem~\ref{thm: identify ATT DR}}

We study the two cases where $h=h^*$ and $q=q^*$ separately. If Condition~\ref{cond: outcome bridge} holds and $h=h^*$, then, by 
\eqref{eq: identify EY0 outcome} and
\eqref{eq: identify h},
\begin{equation}\label{d}
\cexpect[q(Z_{t_-}) \{Y_{t_-} - h(W_{t_-})\} + h(W_{t_+})] = \cexpect\{Y_{t_+}(0)\}
\end{equation}
and
\begin{equation}\label{e}
    \cexpect[Y_{t_+} - q(Z_{t_-}) \{Y_{t_-} - h(W_{t_-})\} - h(W_{t_+})] = \cexpect\{Y_{t_+}(1)-Y_{t_+}(0)\} = \phi^*(t_+),
\end{equation}
as desired.

If Condition~\ref{cond: treatment bridge} holds and $q=q^*$, then by a similar argument as in the proof of Theorem~\ref{thm: identify ATT treatment},
\begin{align*}
	&\cexpect[q^*(Z_{t_-}) \{ Y_{t_-}(0) - h(W_{t_-}) \}]
	= \cexpect(\cexpect[q^*(Z_{t_-}) \{ Y_{t_-}(0) - h(W_{t_-})\} \mid U_{t_-}]) \\
	&= \cexpect[\cexpect\{q^*(Z_{t_-}) \mid U_{t_-}\} \cexpect\{Y_{t_-}(0)-h(W_{t_-}) \mid U_{t_-}\}] & \text{(Condition~\ref{cond: proxy independence})} \\
	&= \int_{\mathcal{U}} \frac{\intd P_{U_{t_+}}}{\intd P_{U_{t_-}}}(u) \cexpect\{Y_{t_-}(0) - h(W_{t_-}) \mid U_{t_-}=u\} P_{U_{t_-}}(\intd u) & \text{(Conditions~\ref{cond: posttreatment identical distribution} \& \ref{cond: treatment bridge})} \\
	&= \int_{\mathcal{U}} \cexpect\{Y_{t_+}(0) - h(W_{t_+}) \mid U_{t_+}=u\} P_{U_{t_+}}(\intd u) 
	= \cexpect\{Y_{t_+}(0) - h(W_{t_+})\}.
	&\text{(Condition~\ref{cond: covariate shift})} 
\end{align*}
Therefore, \eqref{d} and \eqref{e} hold, 
as desired. We have proved Theorem~\ref{thm: identify ATT DR}.

\subsection{Proof of Theorems~\ref{thm: identify ATT treatment nonstationary}--\ref{thm: identify ATT DR nonstationary limit}}

The proofs of these theorems are similar to those of Theorems~\ref{thm: identify ATT treatment} and \ref{thm: identify ATT DR} and thus we abbreviate the presentation.

\begin{proof}[Proof of Theorem~\ref{thm: identify ATT treatment nonstationary}]
    By Condition~\ref{cond: covariate shift}, for any square-integrable functions $f$ and $g$, $\cexpect[f\{Y_t(0)\} \mid U_t=u]$ and $\cexpect[g(W_t) \mid U_t=u]$ do not depend on $t$. We first prove \eqref{eq: identify EfY0 treatment nonstationary}:
    \begin{align*}
        & \sum_{t_+=T_0+1}^T \ell_T(t_+) \cexpect[f\{Y_{t_+}(0)\}] 
        = \sum_{t_+=T_0+1}^T \ell_T(t_+) \cexpect( \cexpect[f\{Y_{t_+}(0)\} \mid U_{t_+}] ) \\
        &= \frac{1}{T_0} \sum_{t_-=1}^{T_0} \cexpect[\undertilde{q}^*_T(Z_{t_-}) \cexpect\{f(Y_{t_-}) \mid U_{t_-}\} ] 
        = \frac{1}{T_0} \sum_{t_-=1}^{T_0} \cexpect\{\undertilde{q}^*_T(Z_{t_-}) f(Y_{t_-})\}. & \text{(Conditions~\ref{cond: treatment bridge nonstationary} and \ref{cond: proxy independence})}
    \end{align*}
    Then, \eqref{eq: identify EY0 treatment nonstationary} follows by taking $f$ to be the identity function in \eqref{eq: identify EfY0 treatment nonstationary}.
    We next show that $\undertilde{q}^*_T$ is a solution to \eqref{eq: identify q nonstationary}. For any square-integrable function $g$,
    \begin{align*}
        &\frac{1}{T_0} \sum_{t_-=1}^{T_0} \cexpect\{\undertilde{q}(Z_{t_-}) g(W_{t_-})\} 
        = \frac{1}{T_0} \sum_{t_-=1}^{T_0} \cexpect[ \undertilde{q}(Z_{t_-}) \cexpect\{g(W_{t_-}) \mid U_{t_-}\} ] \text{(Condition~\ref{cond: proxy independence})} \\
        &= \sum_{t_+=T_0+1}^T \ell_T(t_+) \cexpect[ \cexpect\{g(W_{t_+}) \mid U_{t_+}\} ]
        = \sum_{t_+=T_0+1}^T \ell_T(t_+) \cexpect\{g(W_{t_+})\}.
    \end{align*}
    Hence, $\undertilde{q}^*_T$ is a solution to \eqref{eq: identify q nonstationary}.
    We finally prove the uniqueness of the solution to \eqref{eq: identify q nonstationary}. Suppose that two functions $\undertilde{q}_{T,1}$ and $\undertilde{q}_{T,2}$ are solutions to \eqref{eq: identify q nonstationary}. Then, for any square-integrable function $g$,
    $$\frac{1}{T_0} \sum_{t_-=1}^{T_0} \cexpect[ \{ \undertilde{q}_{T,1}(Z_{t_-}) - \undertilde{q}_{T,2}(Z_{t_-}) \} g(W_{t_-})] = 0.$$
    By Condition~\ref{cond: treatment complete nonstationary}, $\undertilde{q}_{T,1}(Z_{t_-}) - \undertilde{q}_{T,2}(Z_{t_-})=0$ for all $t_- \leq T_0$ and thus the solution is unique almost surely.
\end{proof}

\begin{proof}[Proof of Theorem~\ref{thm: identify ATT DR nonstationary}]
    First consider the case where Condition~\ref{cond: outcome bridge} holds and $h=h^*$. In this case, using the identification result in \cite{Shi2021} (Theorem~4), we have that
    \begin{align*}
        &\frac{1}{T_0} \sum_{t_-=1}^{T_0} \cexpect \left[ \undertilde{q}(Z_{t_-}) \{Y_{t_-} - h(W_{t_-})\} \right] + \sum_{t_+=T_0+1}^T \ell_T(t_+) \cexpect\{h(W_{t_+})\} \\
        &= \frac{1}{T_0} \sum_{t_-=1}^{T_0} \cexpect \left[ \cexpect\{\undertilde{q}(Z_{t_-}) \mid U_{t_-}\} \cexpect\{Y_{t_-} - h(W_{t_-}) \mid U_{t_-}\} \right] + \sum_{t_+=T_0+1}^T \ell_T(t_+) \cexpect\{h(W_{t_+})\} & \\
        &= 0 + \sum_{t_+=T_0+1}^T \ell_T(t_+) \cexpect\{Y_{t_+}(0)\}.
    \end{align*}
    The second line follows by Condition~\ref{cond: proxy independence}.
    Therefore, $\sum_{t_+=T_0+1}^T \ell_T(t_+) \cexpect\{Y_{t_+}(0)\}$ is identified as in \eqref{eq: DR identify ATT nonstationary}.
    Next suppose that Condition~\ref{cond: treatment bridge nonstationary} holds and $\undertilde{q}=\undertilde{q}^*_T$. Using Theorem~\ref{thm: identify ATT DR nonstationary}, we have that
    \begin{align*}
        & \frac{1}{T_0} \sum_{t_-=1}^{T_0} \cexpect \left[ \undertilde{q}(Z_{t_-}) \{Y_{t_-} - h(W_{t_-})\} \right] + \sum_{t_+=T_0+1}^T \ell_T(t_+) \cexpect[h(W_{t_+})] \\
        &= \frac{1}{T_0} \sum_{t_-=1}^{T_0} \cexpect \left\{ \undertilde{q}(Z_{t_-}) Y_{t_-} \right\} - \frac{1}{T_0} \sum_{t_-=1}^{T_0} \cexpect \left\{ \undertilde{q}(Z_{t_-}) h(W_{t_-}) \right\} + \sum_{t_+=T_0+1}^T \ell_T(t_+) \cexpect\{h(W_{t_+})\} \\
        &= \sum_{t_+=T_0+1}^T \ell_T(t_+) \cexpect\{Y_{t_+}(0)\} + 0.
    \end{align*}
    Therefore, $\sum_{t_+=T_0+1}^T \ell_T(t_+) \cexpect\{Y_{t_+}(0)\}$ is identified as in \eqref{eq: DR identify ATT nonstationary}.
    Finally, the identification formula for $\undertilde{\phi}^*_T$ in \eqref{eq: DR identify ATT nonstationary} follows by noting that
    $$\undertilde{\phi}^*_T = \sum_{t_+=T_0+1}^T \ell_T(t_+) \cexpect\{Y_{t_+} - Y_{t_+}(0)\}.$$
\end{proof}

The proof of Theorems~\ref{thm: identify ATT treatment nonstationary limit} and \ref{thm: identify ATT DR nonstationary limit} is almost identical and thus omitted.

\subsection{Proof of Theorems~\ref{thm: DR ATT asymptotic normality} \& \ref{thm: treatment ATT asymptotic normality}}

Theorem~\ref{thm: DR ATT asymptotic normality} follows immediately from standard estimation theory of the generalized method of moments, for example, Theorem~7.1 and 7.2 in \cite{Wooldridge1994}, along with Theorem~\ref{thm: identify ATT DR}.
The proof of Theorem~\ref{thm: treatment ATT asymptotic normality} is similar.

\subsection{Proof of Theorems~\ref{thm: DR ATT asymptotic normality nonstationary} \& \ref{thm: treatment ATT asymptotic normality nonstationary}}

The proofs of 
Theorems~\ref{thm: DR ATT asymptotic normality nonstationary} \& \ref{thm: treatment ATT asymptotic normality nonstationary} are almost identical. 
Therefore, we present the proof of Theorem~\ref{thm: DR ATT asymptotic normality nonstationary} and omit the proof of Theorem~\ref{thm: treatment ATT asymptotic normality nonstationary}. 
The argument is inspired by the theory of  \citep[see, e.g.,][]{Wooldridge1994,Hall2007} with adaptations to our case where the moment equation might not be centered at each time period.

\begin{proof}[Proof of Theorem~\ref{thm: DR ATT asymptotic normality nonstationary}]
    Under Conditions~\ref{cond: proxy independence}, \ref{cond: covariate shift} and \ref{cond: misspecify h or q nonstationary}, we have that $\undertilde{\phi}_\infty=\undertilde{\phi}^*$ by the definition of $\undertilde{G}_{T,t}$ in \eqref{eq: Gt nonstationary} and Theorem~\ref{thm: identify ATT DR nonstationary limit}. We first prove consistency. By Conditions~\ref{cond: positive weight matrix} and \ref{cond: uniform law of large numbers}, we have that, as $T \rightarrow \infty$,
    letting $V_T(\theta) = \sum_{t=1}^T \undertilde{G}_{T,t}(\theta)/T$
    and
    $\bar V_{T'}(\theta) = \sum_{t=1}^{T'} \cexpect[\undertilde{G}_{T',t}(\theta)]/T'$,
    \begin{align}
        \begin{split}
            \sup_{\undertilde{\theta} \in \undertilde{\Theta}} \Bigg| &V_T(\undertilde{\theta})^\top \undertilde{\Omega}_T V_T(\undertilde{\theta}) 
            - \lim_{T' \rightarrow \infty} \bar V_{T'}(\undertilde{\theta})^\top \undertilde{\Omega} \lim_{T' \rightarrow \infty} \bar V_{T'} \Bigg|
            \text{ converge to zero in probability}
        \end{split} \label{eq: uniform convergence of quadratic form}
    \end{align}
    Let $\epsilon>0$ be an arbitrary positive constant. By \eqref{eq: uniform convergence of quadratic form} and the definition of $\hat{\undertilde{\theta}_T}$, we have that, with probability tending to one,
    \begin{align*}
        &\Bigg| V_T(\undertilde{\theta}_\infty) ^\top \undertilde{\Omega}_T V_T(\undertilde{\theta}_\infty) - \lim_{T' \rightarrow \infty} \bar V_{T'}(\undertilde{\theta}_\infty)^\top \undertilde{\Omega} \lim_{T' \rightarrow \infty} \bar V_{T'}(\undertilde{\theta}_\infty) \Bigg| < \epsilon/2, \\
        &\Bigg| V_T(\undertilde{\theta}_T) ^\top \undertilde{\Omega}_T V_T(\undertilde{\theta}_T) 
        - \lim_{T' \rightarrow \infty} \bar V_{T'}(\hat{\undertilde{\theta}}_T)^\top \undertilde{\Omega} \lim_{T' \rightarrow \infty} \bar V_{T'}(\hat{\undertilde{\theta}}_T)  \Bigg| < \epsilon/2, \\
        &V_T(\undertilde{\theta}_T) ^\top \undertilde{\Omega}_T V_T(\undertilde{\theta}_T)  \leq V_T(\undertilde{\theta}_\infty) ^\top \undertilde{\Omega}_T V_T(\undertilde{\theta}_\infty) .
    \end{align*}
    Combining these three inequalities, we have that, with probability tending to one,
    \begin{align}
    \begin{split}
        &\lim_{T' \rightarrow \infty} \bar V_{T'}(\hat{\undertilde{\theta}}_T)^\top \undertilde{\Omega} \lim_{T' \rightarrow \infty} \bar V_{T'}(\hat{\undertilde{\theta}}_T)
        < \lim_{T' \rightarrow \infty} \bar V_{T'}(\undertilde{\theta}_\infty)^\top \undertilde{\Omega} \lim_{T' \rightarrow \infty} \bar V_{T'}(\undertilde{\theta}_\infty) + \epsilon = \epsilon,
    \end{split} \label{eq: GMM consistency inequality}
    \end{align}
    where the last equality follows from Condition~\ref{cond: misspecify h or q nonstationary}.
    
    Let $N \subseteq \undertilde{\Theta}$ be an arbitrary open set containing $\undertilde{\theta}_\infty$. By Condition~\ref{cond: param compact}, $\undertilde{\Theta} \setminus A$ is compact. By Conditions~\ref{cond: misspecify h or q nonstationary}, \ref{cond: positive weight matrix} and \ref{cond: reg conditions for estimating function},
    $$\inf_{\undertilde{\theta} \in \undertilde{\Theta} \setminus N} \lim_{T' \rightarrow \infty} \bar V_{T'}(\undertilde{\theta})^\top \undertilde{\Omega} \lim_{T' \rightarrow \infty} \bar V_{T'}$$
    exists and is strictly positive. Taking $\epsilon$ in \eqref{eq: GMM consistency inequality} to be the above this infimum, we have that, with probability tending to one,
    \begin{align*}
        &\lim_{T' \rightarrow \infty} \bar V_{T'}(\hat{\undertilde{\theta}}_T)^\top \undertilde{\Omega} \lim_{T' \rightarrow \infty} \bar V_{T'}(\hat{\undertilde{\theta}}_T)
        < \inf_{\undertilde{\theta} \in \undertilde{\Theta} \setminus N} \lim_{T' \rightarrow \infty} \bar V_{T'}(\undertilde{\theta})^\top \undertilde{\Omega} \lim_{T' \rightarrow \infty} \bar V_{T'}.
    \end{align*}
    This event implies that $\hat{\undertilde{\theta}}_T \in N$. Since $N$ is arbitrary, we have shown that 	$\hat{\undertilde{\theta}}_T$ converges to $\undertilde{\theta}_\infty$ in probability as $T \rightarrow \infty$.
    We next prove the asymptotic normality of $\hat{\undertilde{\theta}}_T$. Under Condition~\ref{cond: uniform law of large numbers derivative}, by a first-order Taylor expansion and the above consistency result, we have that
    \begin{equation}
        \frac{1}{T} \sum_{t=1}^T G_{T,t}(\hat{\theta}_T) = \frac{1}{T} \sum_{t=1}^T G_{T,t}(\theta_\infty) + \frac{1}{T} \sum_{t=1}^T \nabla_{\theta} G_{T,t}(\theta) |_{\theta=\theta_\infty} (\hat{\theta}_T-\theta_\infty) + \smallo_p(\| \hat{\theta}_T-\theta_\infty \|). \label{eq: GMM Taylor}
    \end{equation}
    The fact that $\hat{\undertilde{\theta}}_T$ is a minimizer together with Condition~\ref{cond: uniform law of large numbers derivative} implies that
    \begin{align}
        0 &= \nabla_{\undertilde{\theta}} \left. \left\{ V_T(\undertilde{\theta})^\top \undertilde{\Omega}_T V_T(\undertilde{\theta}) \right\} \right|_{\undertilde{\theta} = \hat{\undertilde{\theta}}_T} 
        = 2 \left\{ \frac{1}{T} \sum_{t=1}^T \nabla_{\undertilde{\theta}} \undertilde{G}_{T,t}(\undertilde{\theta})|_{\undertilde{\theta} = \hat{\undertilde{\theta}}_T} \right\}^\top \undertilde{\Omega}_T V_T(\undertilde{\theta}_T) . \label{eq: GMM derivative 0}
    \end{align}
    By Conditions~\ref{cond: uniform law of large numbers}--\ref{cond: CLT weighted moment}, using \eqref{eq: GMM Taylor} and \eqref{eq: GMM derivative 0}, we have that
    \begin{align*}
        0 &= T^{1/2} \left\{ \frac{1}{T} \sum_{t=1}^T \nabla_{\undertilde{\theta}} \undertilde{G}_{T,t}(\undertilde{\theta})|_{\undertilde{\theta} = \hat{\undertilde{\theta}}_T} \right\}^\top \undertilde{\Omega}_T V_T(\undertilde{\theta}_T)  \\
        &= \undertilde{R}^\top \undertilde{\Omega} \frac{1}{T^{1/2}} \sum_{t=1}^T \undertilde{G}^q_{T,t}(\undertilde{\theta}_\infty) + \undertilde{R}^\top \undertilde{\Omega} \undertilde{R} T^{1/2} (\hat{\undertilde{\theta}}_T-\undertilde{\theta}_\infty) + \smallo_p(T^{1/2} \| \hat{\theta}_T-\theta_\infty \| + 1).
    \end{align*}
    By Condition~\ref{cond: CLT weighted moment} and Slutsky's Theorem, we have that
    \begin{align*}
        T^{1/2} (\hat{\undertilde{\theta}}_T-\undertilde{\theta}_\infty) &= -(\undertilde{R}^\top \undertilde{\Omega} \undertilde{R})^{-1} \undertilde{R}^\top \undertilde{\Omega} \frac{1}{T^{1/2}} \sum_{t=1}^T \undertilde{G}^q_{T,t}(\undertilde{\theta}_\infty) + \smallo_p(1) \\
        &= -\undertilde{A}^{-1} \undertilde{R}^\top \undertilde{\Omega} \frac{1}{T^{1/2}} \sum_{t=1}^T \undertilde{G}^q_{T,t}(\undertilde{\theta}_\infty) + \smallo_p(1),
    \end{align*}
    which converges in distribution to $\mathrm{N}(0, \undertilde{A}^{-1} \undertilde{B} \undertilde{A}^{-1})$. Here, the existence of $\undertilde{A}^{-1}$ follows from Conditions~\ref{cond: positive weight matrix} and \ref{cond: full rank matrix}.
\end{proof}

\bibliographystyle{plainnat}
\bibliography{ref}

\end{document}